\newtheorem*{rep@theorem}{\rep@title}
\newcommand{\newreptheorem}[2]{%
\newenvironment{rep#1}[1]{%
 \def\rep@title{#2 \ref{##1}}%
 \begin{rep@theorem}}%
 {\end{rep@theorem}}}
\newtheorem{theorem}{Theorem}
\newtheorem*{theorem*}{Theorem}
\newtheorem{lemma}{Lemma}
\newtheorem{corollary}{Corollary}
\newtheorem{example}{Example}
\newtheorem{assumption}{Assumption}
\newtheorem{remark}{Remark}
\newcommand{\R}{\mathbb{R}} 
\newcommand{\N}{\mathcal{N}}
\newcommand{\E}{\mathbb{E}}
\renewcommand{\part}[2]{\frac{\partial #1}{\partial #2}}
\newcommand{\Var}{\mathrm{Var}}
\newcommand{\Tr}{\mathrm{Tr}}
\newcommand{\step}{\eta}
\newcommand{\err}{\delta}
\newcommand{\Cov}{\mathrm{Cov}}
\newcommand{\op}{\mathrm{op}}
\newcommand{\HS}{\mathrm{HS}}
\title{Rapid Convergence of the Unadjusted Langevin Algorithm: Isoperimetry Suffices}
\author{Santosh S.\ Vempala\thanks{Georgia Institute of Technology, College of Computing. Email: \texttt{vempala@gatech.edu}. Supported in part by NSF awards CCF-1563838, CCF-1717349 and DMS-1839323. }
\and 
Andre Wibisono\thanks{Yale University, Department of Computer Science. Email: \texttt{andre.wibisono@yale.edu}}}
\begin{document}

\maketitle

\begin{abstract}
We study the Unadjusted Langevin Algorithm (ULA) for sampling from a probability distribution $\nu = e^{-f}$ on $\R^n$.
We prove a convergence guarantee in Kullback-Leibler (KL) divergence assuming $\nu$ satisfies a log-Sobolev inequality and the Hessian of $f$ is bounded. 
Notably, we do not assume convexity or bounds on higher derivatives.
We prove convergence guarantees in R\'enyi divergence of order $q > 1$ assuming the limit of ULA satisfies isoperimetry, namely either the log-Sobolev or Poincar\'e inequality.
We also prove a bound on the bias of the limiting distribution of ULA assuming third-order smoothness of $f$, without requiring isoperimetry.
\end{abstract}

\setcounter{tocdepth}{2}
\tableofcontents
\newpage

\section{Introduction}

Sampling is a fundamental algorithmic task.
Many applications require sampling from probability distributions in high-dimensional spaces, and in modern applications the probability distributions are complicated and non-logconcave.
While the setting of logconcave functions is well-studied, it is important to have efficient sampling algorithms with good convergence guarantees beyond the logconcavity assumption.
There is a close interplay between sampling and optimization, either via optimization as a limit of sampling (annealing)~\cite{HS88,RRT17}, or via sampling as optimization in the space of distributions~\cite{JKO98,W18}.
Motivated by the widespread use of non-convex optimization and sampling, there is resurgent interest in understanding non-logconcave sampling.

In this paper we study a simple algorithm, the Unadjusted Langevin Algorithm (ULA), for sampling from a target probability distribution $\nu = e^{-f}$ on $\R^n$.
ULA requires oracle access to the gradient $\nabla f$ of the log density $f = -\log \nu$.
In particular, ULA does not require knowledge of $f$, which makes it applicable in practice where we often only know $\nu$ up to a normalizing constant.

As the step size $\step \to 0$, ULA recovers the Langevin dynamics, which is a continuous-time stochastic process in $\R^n$ that converges to $\nu$.
We recall the optimization interpretation of the Langevin dynamics for sampling as the gradient flow of the Kullback-Leibler (KL) divergence with respect to $\nu$ in the space of probability distributions with the Wasserstein metric~\cite{JKO98}.
When $\nu$ is strongly logconcave, the KL divergence is a strongly convex objective function, so the Langevin dynamics as gradient flow converges exponentially fast~\cite{BE85,Vil03}. From the classical theory of Markov chains and diffusion processes, there are several known conditions milder than logconcavity that are sufficient for rapid convergence {\em in continuous time}.
These include isoperimetric inequalities such as Poincar\'e inequality or log-Sobolev inequality (LSI).
Along the Langevin dynamics in continuous time, Poincar\'e inequality implies an exponential convergence rate in $L^2(\nu)$, while LSI---which is stronger---implies an exponential convergence rate in KL divergence (as well as in R\'enyi divergence).

However, in discrete time, sampling under Poincar\'e inequality or LSI is a  more challenging problem.
ULA is an inexact discretization of the Langevin dynamics, and it converges to a biased limit $\nu_\step \neq \nu$.
When $\nu$ is strongly logconcave and smooth, it is known how to control the bias and prove a convergence guarantee on KL divergence along ULA; see for example~\cite{CB18,D17,DK19,DMM18}.
When $\nu$ is strongly logconcave, there are many other sampling algorithms with provable rapid convergence; these include the ball walk and hit-and-run \cite{KLS97, LV07, LV06, LV06b} (which give truly polynomial algorithms), various discretizations of the overdamped or underdamped Langevin dynamics~\cite{D17,DK19,DMM18,B18,DCWY18} (which have polynomial dependencies on smoothness parameters but low dependence on dimension), and more sophisticated methods such as the Hamiltonian Monte Carlo~\cite{MS17,MV18,DMS17,LS18,CV19}.
It is of great interest to extend these results to non-logconcave densities $\nu$, where existing results require strong assumptions with bounds that grow exponentially with the dimension or other parameters~\cite{AK91,ChengLangevin18,Ma19,MV19}.
There are also recent works that analyze convergence of sampling using various techniques such as reflection coupling~\cite{EGZ19}, kernel methods~\cite{GM17}, and higher-order integrators~\cite{LWME19}, albeit still under some strong conditions such as distant dissipativity, which is similar to strong logconcavity outside a bounded domain.

In this paper we study the convergence along ULA under minimal (and necessary) isoperimetric assumptions, namely, LSI and Poincar\'e inequality. 
These are sufficient for fast convergence in continuous time; moreover, in the case of logconcave distribution, the log-Sobolev and Poincar\'e constants can be bounded and lead to convergence guarantees for efficient sampling in discrete time. 
However, do they suffice on their own without the assumption of logconcavity?  

We note that LSI and Poincar\'e inequality apply to a wider class of measures than logconcave distributions.
In particular, LSI and Poincar\'e inequality are preserved under bounded perturbation and Lipschitz mapping (see Lemma~\ref{Lem:LSILipschitz} and Lemma~\ref{Lem:PLipschitz}), whereas logconcavity would be destroyed.
Given these properties, it is easy to exhibit examples of non-logconcave distributions satisfying LSI or Poincar\'e inequality.
For example, we can take a small perturbation of a convex body to make it nonconvex but still satisfies isoperimetry; then the uniform probability distribution on the body (or a smooth approximation of it) is not logconcave but satisfies LSI or Poincar\'e inequality.
Similarly, we can start with a strongly logconcave distribution such as a Gaussian, and subtract some small Gaussians from it; then the resulting (normalized) probability distribution is not logconcave, but it still satisfies LSI or Poincar\'e inequality as long as the Gaussians we subtract are small enough.
See Figure~\ref{Fig:Examples} for an illustration.

 \begin{figure}[h!t!]
 \centering
 \begin{subfigure}
   \centering
    \includegraphics[width=0.23\textwidth]{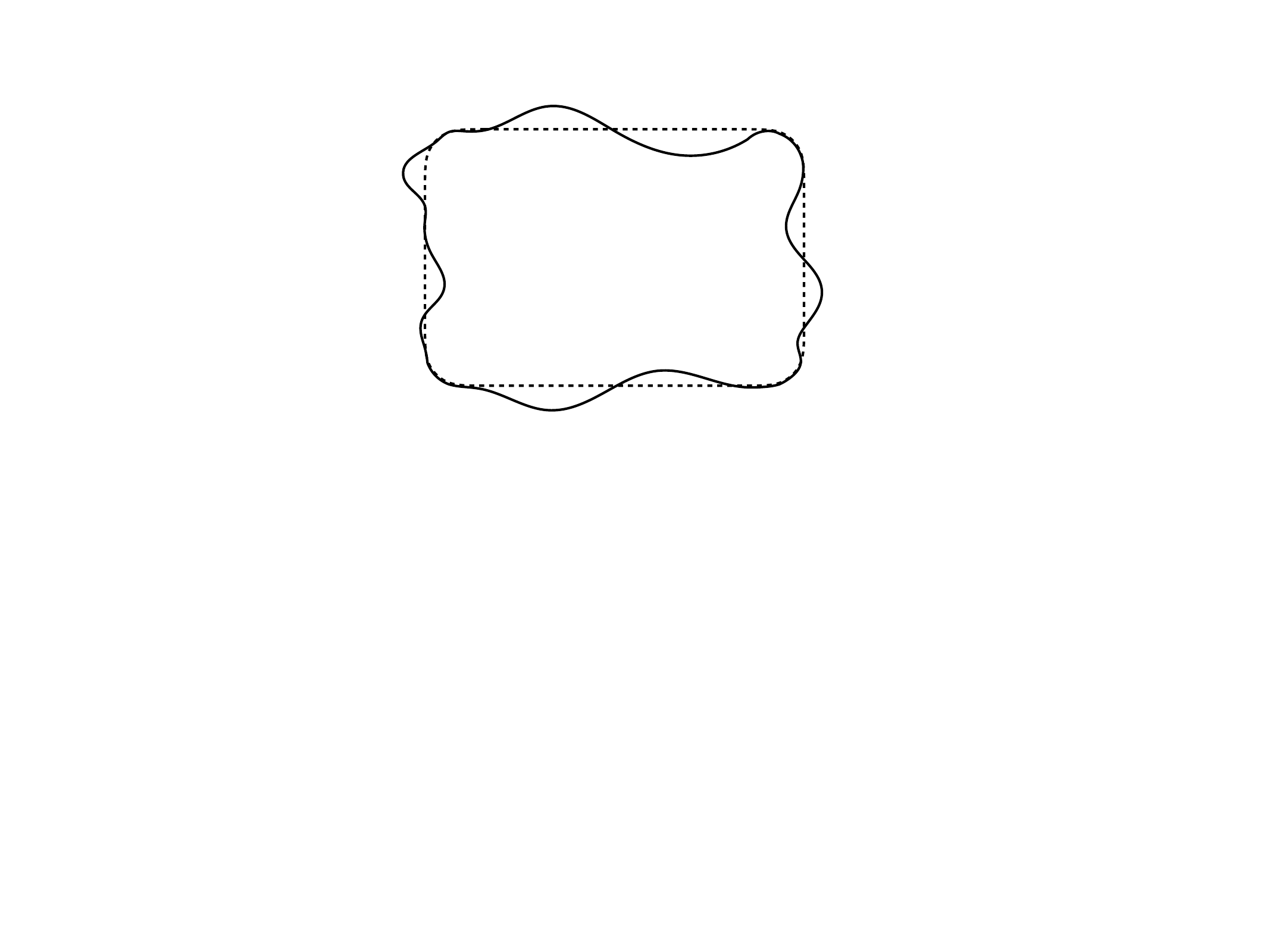}
 \end{subfigure}
  ~~~~~~
 \begin{subfigure}
   \centering
    \includegraphics[width=0.6\textwidth]{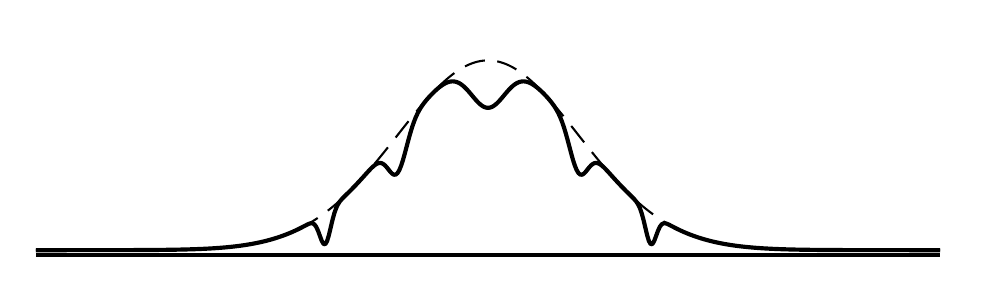}
 \end{subfigure}
 \caption{Illustrations of non-logconcave distributions satisfying LSI or Poincar\'e inequality: the uniform distribution on a nonconvex set (left), and a small perturbation of a logconcave distribution, e.g., Gaussian (right).}
 \label{Fig:Examples}
 \end{figure}

We measure the mode of convergence using KL divergence and R\'enyi divergence of order $q \ge 1$, which is stronger. 
Our first main result says that the only further assumption we need is smoothness, i.e., the gradient of $f$ is Lipschitz (see Section~\ref{Sec:KL-ULA-LSI}). 
Here $H_\nu(\rho)$ is the KL divergence between $\rho$ and $\nu$.
We say that $\nu = e^{-f}$ is $L$-smooth if $\nabla f$ is $L$-Lipschitz, or equivalently, $-LI \preceq \nabla^2 f(x) \preceq LI$ for all $x \in \R^n$ .

\begin{theorem}\label{Thm:Main}
Assume $\nu = e^{-f}$ satisfies log-Sobolev inequality with constant $\alpha > 0$ and is $L$-smooth.
ULA with step size $0 < \step \le \frac{\alpha}{4L^2}$ satisfies
\begin{align*}
H_\nu(\rho_k) \le e^{-\alpha \step k} H_\nu(\rho_0) + \frac{8 \step n L^2}{\alpha}.
\end{align*}
In particular, for any $0 < \err < 4n$, ULA with step size $\step \le \frac{\alpha\err}{16L^2n}$ reaches error $H_\nu(\rho_k) \le \err$ after $k \ge \frac{1}{\alpha \step} \log \frac{2 H_\nu(\rho_0)}{\err}$ iterations.
\end{theorem}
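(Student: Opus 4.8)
\emph{Overall approach.} The plan is to analyze a single step of ULA through a continuous‑time interpolation, bound the growth of the KL divergence along it, extract a one‑step ``contraction plus error'' inequality, and unroll the resulting recursion. Write one step of ULA from $x_0 \sim \rho_0$ as the process $x_t = x_0 - t\,\nabla f(x_0) + \sqrt{2}\,B_t$ for $t \in [0,\step]$, where $B_t$ is a standard Brownian motion in $\R^n$ with $B_0 = 0$; at $t = \step$ the law of $x_t$ is exactly the law after one ULA step. Let $\rho_{0t}$ be the joint law of $(x_0,x_t)$ and $\rho_t$ the marginal of $x_t$. Since the drift is frozen at its time‑$0$ value, $\rho_t$ solves the Fokker--Planck equation $\partial_t\rho_t = \nabla\cdot\!\bigl(\rho_t\,\E_{\rho_{0t}}[\nabla f(x_0)\mid x_t = \cdot\,]\bigr) + \Delta\rho_t$. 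Differentiating $H_\nu(\rho_t)$, integrating by parts, and using $\nabla\log\rho_t = \nabla\log\tfrac{\rho_t}{\nu} - \nabla f$ to split off the ``ideal'' Langevin drift, I would obtain
\[
\frac{d}{dt}H_\nu(\rho_t) = -\,\E_{\rho_t}\!\Bigl\|\nabla\log\tfrac{\rho_t}{\nu}\Bigr\|^2 \;-\; \E_{\rho_{0t}}\!\Bigl\langle\nabla\log\tfrac{\rho_t}{\nu}(x_t),\; \nabla f(x_0) - \nabla f(x_t)\Bigr\rangle .
\]
Bounding the cross term by $\tfrac14\,\E_{\rho_t}\|\nabla\log\tfrac{\rho_t}{\nu}\|^2 + \E_{\rho_{0t}}\|\nabla f(x_0)-\nabla f(x_t)\|^2$, then using $L$‑smoothness and $\E\|x_0-x_t\|^2 = t^2\,\E_{\rho_0}\|\nabla f\|^2 + 2nt$ (the cross term in $\|x_t-x_0\|^2$ vanishes because $B_t\perp x_0$) gives
\[
\frac{d}{dt}H_\nu(\rho_t) \;\le\; -\tfrac34\,G_\nu(\rho_t) \;+\; L^2\bigl(t^2\,\E_{\rho_0}\|\nabla f\|^2 + 2nt\bigr), \qquad G_\nu(\rho) := \E_{\rho}\Bigl\|\nabla\log\tfrac{\rho}{\nu}\Bigr\|^2 .
\]

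\emph{Two inputs.} First, the log‑Sobolev inequality gives $G_\nu(\rho_t) \ge 2\alpha\,H_\nu(\rho_t)$. Second, I would prove an auxiliary bound: for $L$‑smooth $\nu = e^{-f}$ and any $\rho$,
\[
\E_{\rho}\|\nabla f\|^2 \;\le\; \frac{4L^2}{\alpha}\,H_\nu(\rho) + 2nL ,
\]
using an optimal $W_2$ coupling of $\rho$ and $\nu$ together with $L$‑smoothness to get $\E_\rho\|\nabla f\|^2 \le 2L^2\,W_2(\rho,\nu)^2 + 2\,\E_\nu\|\nabla f\|^2$, then $\E_\nu\|\nabla f\|^2 = \E_\nu[\Delta f] \le nL$ by integration by parts, and finally Talagrand's transport inequality $W_2(\rho,\nu)^2 \le \tfrac{2}{\alpha}H_\nu(\rho)$, which follows from the log‑Sobolev inequality. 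Substituting both inputs (with $\rho = \rho_0$) yields, for $t \in [0,\step]$, the closed differential inequality
\[
\frac{d}{dt}H_\nu(\rho_t) \;\le\; -\tfrac{3\alpha}{2}\,H_\nu(\rho_t) \;+\; L^2\!\left(t^2\Bigl(\tfrac{4L^2}{\alpha}H_\nu(\rho_0) + 2nL\Bigr) + 2nt\right).
\]

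\emph{Grönwall and unrolling.} Applying Grönwall's inequality on $[0,\step]$ and integrating the explicit polynomial error term gives
\[
H_\nu(\rho_\step) \;\le\; e^{-3\alpha\step/2}\,H_\nu(\rho_0) + \frac{4L^4\step^3}{3\alpha}\,H_\nu(\rho_0) + \frac{2nL^3\step^3}{3} + nL^2\step^2 .
\]
The hypothesis $\step \le \tfrac{\alpha}{4L^2}$ forces $\tfrac{4L^4\step^3}{3\alpha} \le \tfrac{\alpha\step}{12}$, which is at most $e^{-\alpha\step}-e^{-3\alpha\step/2}$ (since $\alpha\step \le \tfrac14$, using also $\alpha \le L$, valid here because $\nabla^2 f \preceq LI$), so the two $H_\nu(\rho_0)$ terms collapse into $e^{-\alpha\step}H_\nu(\rho_0)$, while the remainder is at most $2nL^2\step^2$. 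This is exactly the one‑step recursion $H_\nu(\rho_{k+1}) \le e^{-\alpha\step}H_\nu(\rho_k) + 2nL^2\step^2$, and iterating together with $\sum_{j\ge 0} e^{-\alpha\step j} \le \tfrac{2}{\alpha\step}$ gives $H_\nu(\rho_k) \le e^{-\alpha\step k}H_\nu(\rho_0) + \tfrac{8\step nL^2}{\alpha}$. For the ``in particular'' claim, the choice $\step \le \tfrac{\alpha\err}{16L^2 n}$ --- which is $\le \tfrac{\alpha}{4L^2}$ precisely because $\err < 4n$ --- makes the additive term at most $\err/2$, and then $k \ge \tfrac{1}{\alpha\step}\log\tfrac{2H_\nu(\rho_0)}{\err}$ makes $e^{-\alpha\step k}H_\nu(\rho_0) \le \err/2$.

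\emph{Main obstacle.} I expect the crux to be the one‑step KL computation: justifying the Fokker--Planck equation and the integration by parts (regularity and decay of $\rho_t$), correctly isolating the discretization error as the conditional expectation $\E[\nabla f(x_0) - \nabla f(x_t)\mid x_t]$, and --- most importantly --- calibrating the Young split and the step‑size threshold so that the $H_\nu(\rho_0)$‑proportional piece of the one‑step error is \emph{genuinely absorbed} into the contraction factor $e^{-\alpha\step}$ rather than merely dominated. The auxiliary gradient‑norm bound and the final geometric‑series bookkeeping are routine by comparison.
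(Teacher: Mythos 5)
Your proposal is correct and follows essentially the same route as the paper: a continuous-time interpolation of one ULA step, the Fokker--Planck equation with the conditional drift $\E[\nabla f(x_0)\mid x_t]$, a Young-inequality split of the discretization error, the auxiliary bound $\E_\rho\|\nabla f\|^2\le \tfrac{4L^2}{\alpha}H_\nu(\rho)+2nL$ via Talagrand and $\E_\nu\|\nabla f\|^2\le nL$, followed by LSI, Gr\"onwall, and unrolling (this is exactly Lemma~\ref{Lem:OneStep} and its supporting Lemmas~\ref{Lem:GradStat}--\ref{Lem:Grad}). The only difference is cosmetic bookkeeping: you integrate the time-polynomial error exactly and absorb the $H_\nu(\rho_0)$-proportional piece via $e^{-\alpha\epsilon}-e^{-3\alpha\epsilon/2}\ge\alpha\epsilon/12$, whereas the paper first crudely replaces $t$ by $\epsilon$ and then absorbs via $1+\tfrac{8\epsilon^3L^4}{\alpha}\le e^{\alpha\epsilon/2}$; both yield a valid one-step recursion, yours with constant $2$ and the paper's with constant $6$, which after unrolling both land within the stated $\tfrac{8\epsilon nL^2}{\alpha}$.
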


For example, if we start with a Gaussian $\rho_0 = \N(x^\ast, \frac{1}{L}I)$ where $x^\ast$ is a stationary point of $f$ (which we can find, e.g., via gradient descent), then $H_\nu(\rho_0) = \tilde O(n)$ (see Lemma~\ref{Lem:KLEstimate}), 
and Theorem~\ref{Thm:Main} gives an iteration complexity of 
$k = \tilde \Theta\left(\frac{L^2 n}{\alpha^2 \err}\right)$
to achieve $H_\nu(\rho_k) \le \err$ using ULA with step size $\step = \Theta(\frac{\alpha \err}{L^2 n})$. 

The result above matches previous known bounds for ULA when $\nu$ is strongly logconcave~\cite{CB18,D17,DK19,DMM18}.
Our result complements the work of Ma et al.~\cite{Ma19} who study the underdamped version of the Langevin dynamics under LSI and show an iteration complexity for the discrete-time algorithm that has better dependence on the dimension ($\sqrt{\frac{n}{\err}}$ in place of $\frac{n}{\err}$ above for ULA), but under an additional smoothness assumption ($f$ has bounded third derivatives) and with higher polynomial dependence on other parameters. Our result also complements the work of Mangoubi and Vishnoi~\cite{MV19} who study the Metropolis-adjusted version of ULA (MALA) for non-logconcave $\nu$ and show a $\log(\frac{1}{\err})$ iteration complexity from a warm start, under the additional assumption that $f$ has bounded third and fourth derivatives in an appropriate $\infty$-norm.

We note that in general some isoperimetry condition is needed for rapid mixing of Markov chains (such as the Langevin dynamics and ULA), otherwise there are bad regions in the state space from which the chains take arbitrarily long to escape.
Smoothness or bounded Hessian is a common assumption that seems to be needed for the analysis of discrete-time algorithms (such as gradient descent or ULA above).

In the second part of this paper, we study the convergence of R\'enyi divergence of order $q > 1$ along ULA.
R\'enyi divergence is a family of generalizations of KL divergence~\cite{R61,VH14,BCG19}, which becomes stronger as the order $q$ increases.
There are physical and operational interpretations of R\'enyi divergence~\cite{H06,B11}.
R\'enyi divergence has been useful in many applications, including for the exponential mechanism in differential privacy~\cite{DR16,ACGM16,BS16,M17}, 
lattice-based cryptography~\cite{BLRS18},
information-theoretic encryption~\cite{IS13}, 
variational inference~\cite{LT16},
machine learning~\cite{HHK03,MMR09},
information theory and statistics~\cite{C95,MPV00},
and black hole physics~\cite{D16}.

Our second main result proves a convergence bound for the R\'enyi divergence of order $q > 1$. While this is a stronger measure of convergence than KL divergence, the situation here is more complicated. First, we can only hope to converge to the target for finite $q$ for any step-size $\step$ (as we illustrate with an example). Second, it is unclear how to bound the R\'enyi divergence between the biased limit $\nu_\step$ and $\nu$. We first show the convergence of R\'enyi divergence along Langevin dynamics in continuous time under LSI; see Theorem~\ref{Thm:RenyiLD} in Section~\ref{Sec:RenyiLDRate}. Here $R_{q,\nu}(\rho)$ is the R\'enyi divergence of order $q$ between $\rho$ and $\nu$. 

\begin{theorem}\label{Thm:RenyiLD}
Suppose $\nu$ satisfies LSI with constant $\alpha > 0$.
Let $q \ge 1$.
Along the Langevin dynamics,
\begin{align*}
R_{q,\nu}(\rho_t) \le e^{-\frac{2\alpha t}{q}} R_{q,\nu}(\rho_0).
\end{align*}
\end{theorem}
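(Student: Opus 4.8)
The plan is to differentiate the R\'enyi divergence along the Langevin dynamics and show the derivative is bounded by a negative multiple of the quantity itself, then apply Gr\"onwall's inequality. Recall that if $\rho_t$ evolves under the Langevin dynamics for target $\nu$, the density $\rho_t$ satisfies the Fokker-Planck equation $\partial_t \rho_t = \nabla \cdot (\rho_t \nabla \log \frac{\rho_t}{\nu})$. Writing $h_t = \frac{\rho_t}{\nu}$ for the density ratio, it is convenient to work with the functional $F_q(\rho_t) = \E_\nu[h_t^q] = \int h_t^q \, d\nu$, since $R_{q,\nu}(\rho_t) = \frac{1}{q-1} \log F_q(\rho_t)$. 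So I would first compute $\frac{d}{dt} F_q(\rho_t)$.

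The key computation: using the Fokker-Planck equation and integration by parts, one gets
\begin{align*}
\frac{d}{dt} \E_\nu[h_t^q] = -q(q-1) \E_\nu\!\left[ h_t^{q-2} \|\nabla h_t\|^2 \right] = -\frac{4(q-1)}{q} \E_\nu\!\left[ \left\| \nabla h_t^{q/2} \right\|^2 \right],
\end{align*}
where the last equality uses $\nabla h_t^{q/2} = \frac{q}{2} h_t^{q/2 - 1} \nabla h_t$. Now the point is to recognize the right-hand side as (up to constants) the Dirichlet form appearing in the log-Sobolev inequality applied to the function $g = h_t^{q/2}$. Since $\nu$ satisfies LSI with constant $\alpha$, we have $\E_\nu[g^2 \log g^2] - \E_\nu[g^2] \log \E_\nu[g^2] \le \frac{2}{\alpha} \E_\nu[\|\nabla g\|^2]$. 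Substituting $g^2 = h_t^q$ and noting $\E_\nu[h_t^q \log h_t^q] - \E_\nu[h_t^q]\log\E_\nu[h_t^q] = q \,\E_\nu[h_t^q \log h_t] - \E_\nu[h_t^q]\log \E_\nu[h_t^q]$, I would relate this entropy-like quantity to $F_q \log F_q$ and ultimately to $R_{q,\nu}(\rho_t)$.

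Putting the pieces together: LSI gives $\E_\nu[\|\nabla h_t^{q/2}\|^2] \ge \frac{\alpha}{2}\big(\E_\nu[h_t^q \log h_t^q] - F_q \log F_q\big)$, so
\begin{align*}
\frac{d}{dt} F_q \le -\frac{2\alpha(q-1)}{q}\big(\E_\nu[h_t^q \log h_t^q] - F_q \log F_q\big).
\end{align*}
The remaining step is to lower bound $\E_\nu[h_t^q \log h_t^q] - F_q \log F_q$ by something proportional to $F_q \log F_q$; indeed, by Jensen's inequality (convexity of $t \log t$, or equivalently nonnegativity of a KL divergence for the tilted measure $h_t^q \, d\nu / F_q$), one has $\E_\nu[h_t^q \log h_t^q] \ge F_q \log F_q$, but I need the sharper statement that $\E_\nu[h_t^q \log h_t^q] - F_q \log F_q \ge F_q \log F_q$ is \emph{not} what's needed — rather, I want $\frac{d}{dt}\log F_q \le -\frac{2\alpha(q-1)}{q} \cdot \frac{\E_\nu[h_t^q \log h_t^q] - F_q\log F_q}{F_q}$ and then observe that dividing the LSI by $F_q$ and recognizing $\frac{\E_\nu[h_t^q\log h_t^q]}{F_q} - \log F_q$ equals the relative entropy of $h_t^q d\nu/F_q$ with respect to $\nu$, which by the LSI-implied Talagrand/standard argument is at least... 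Actually the clean route is: $\frac{d}{dt} R_{q,\nu}(\rho_t) = \frac{1}{q-1}\frac{F_q'}{F_q} \le -\frac{2\alpha}{q}\cdot\frac{\E_\nu[h_t^q\log h_t^q] - F_q\log F_q}{F_q}$, and since $\frac{\E_\nu[h_t^q\log h_t^q]}{F_q} - \log F_q \ge \log F_q - \log F_q$... I would instead directly use that $\frac{\E_\nu[h_t^q\log h_t^q]}{F_q} \ge \log F_q + (\text{something}\ge 0)$ is too weak, so the actual final inequality must come from bounding $\E_\nu[h_t^q\log h_t^q] - F_q\log F_q \ge \log F_q \cdot F_q$ when $F_q \ge 1$ — wait, that needs care. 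The honest statement: since $R_{q,\nu} = \frac{1}{q-1}\log F_q \ge 0$ means $F_q \ge 1$, and one checks $\E_\nu[h_t^q \log h_t^q] - F_q \log F_q \ge F_q \cdot \big((q-1)R_{q,\nu}\big)$ would give exactly the claimed rate after dividing by $F_q$; but proving that requires the LSI-to-Poincar\'e or a direct convexity argument.

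The main obstacle, then, is the final step relating the ``entropy defect'' $\E_\nu[h_t^q \log h_t^q] - F_q \log F_q$ back to $F_q \log F_q$ (equivalently to $(q-1) F_q R_{q,\nu}(\rho_t)$); the naive Jensen bound only gives nonnegativity. I expect the resolution is to either apply LSI to a cleverly normalized function, or to note that $\log$ is concave so $\frac{\E_\nu[h_t^q \log h_t^q]}{F_q} - \log F_q$ is a Bregman-type quantity that one bounds below by $\log F_q$ directly when combined with the structure $h_t^q = (h_t^{q/2})^2$ — in fact, applying LSI to $g = h_t^{q/2}$ and \emph{also} using $\E_\nu[g^2] = F_q$ together with the elementary inequality for the $q=1$ case as a limiting check should pin down the constant $\frac{2\alpha}{q}$. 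I would verify the $q \to 1$ limit recovers the known exponential KL decay at rate $2\alpha$ (from LSI along Langevin dynamics) as a consistency check, and treat $q=1$ separately as the classical de Bruijn/LSI argument.
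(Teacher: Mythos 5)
Your setup is exactly the paper's, and you have correctly pushed the argument to the genuine crux. After computing $\frac{d}{dt}F_{q,\nu}(\rho_t) = -q(q-1)\,G_{q,\nu}(\rho_t)$ and applying LSI to $g = h_t^{q/2}$ (so that $\E_\nu[\|\nabla g\|^2] = \frac{q^2}{4}G_{q,\nu}(\rho_t)$), what you need in the final step is
\begin{align*}
\frac{\E_\nu[h_t^q \log h_t^q]}{F_{q,\nu}(\rho_t)} - \log F_{q,\nu}(\rho_t) \;\ge\; R_{q,\nu}(\rho_t),
\end{align*}
and you rightly note that the naive Jensen bound only gives nonnegativity of the left-hand side, which is not enough. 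What is missing is a short but decisive observation: since $\frac{\partial}{\partial q}h^q = h^q\log h$, one has $\E_\nu[h_t^q\log h_t^q] = q\,\frac{\partial}{\partial q}F_{q,\nu}(\rho_t)$, so the left side equals $q\,\frac{\partial}{\partial q}\log F_{q,\nu}(\rho_t) - \log F_{q,\nu}(\rho_t)$. Substituting $\log F_{q,\nu}(\rho_t) = (q-1)R_{q,\nu}(\rho_t)$ gives
\begin{align*}
q\,\frac{\partial}{\partial q}\bigl((q-1)R_{q,\nu}(\rho_t)\bigr) - (q-1)R_{q,\nu}(\rho_t)
\;=\; R_{q,\nu}(\rho_t) + q(q-1)\,\frac{\partial}{\partial q}R_{q,\nu}(\rho_t),
\end{align*}
which is $\ge R_{q,\nu}(\rho_t)$ because $q\ge 1$ and $q\mapsto R_{q,\nu}(\rho_t)$ is nondecreasing (monotonicity of R\'enyi divergence in the order, Lemma~\ref{Lem:RenyiIncreasing}).

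So the gap is real but local: the ``entropy defect'' is not bounded by a Bregman/convexity argument nor by a Poincar\'e reduction; it is identified exactly with $R_{q,\nu}(\rho_t) + q(q-1)\partial_q R_{q,\nu}(\rho_t)$ via the derivative-in-$q$ identity, and then monotonicity in $q$ discards the extra nonnegative term. This is precisely how the paper closes the argument: Lemma~\ref{Lem:RenyiLSI} packages the inequality $G_{q,\nu}(\rho)/F_{q,\nu}(\rho)\ge \frac{2\alpha}{q^2}R_{q,\nu}(\rho)$ by this route, and Lemma~\ref{Lem:RenyiLD} supplies $\frac{d}{dt}R_{q,\nu}(\rho_t) = -q\,G_{q,\nu}(\rho_t)/F_{q,\nu}(\rho_t)$; combining and integrating gives the theorem. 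With the identity above inserted, your proof becomes the paper's proof.
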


We also have the following convergence of R\'enyi divergence along Langevin dynamics under Poincar\'e inequality; see Theorem~\ref{Thm:RenyiLP} in Section~\ref{Sec:PoincareLangevin}.

\begin{theorem}\label{Thm:RenyiLP}
Suppose $\nu$ satisfies Poincar\'e inequality with constant $\alpha > 0$.
Let $q \ge 2$.
Along the Langevin dynamics,
\begin{align*}
R_{q,\nu}(\rho_t) \le
\begin{cases}
R_{q,\nu}(\rho_0) -\frac{2\alpha t}{q} ~~ & \text{ if } R_{q,\nu}(\rho_0) \ge 1 \text{ and as long as } R_{q,\nu}(\rho_t) \ge 1, \\
e^{-\frac{2\alpha t}{q}} R_{q,\nu}(\rho_0) ~~ & \text{ if } R_{q,\nu}(\rho_0) \le 1.
\end{cases}
\end{align*}
\end{theorem}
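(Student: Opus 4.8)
The plan is to track the functional $F_q(\rho_t):=\int (\rho_t/\nu)^q\, d\nu$, which determines the R\'enyi divergence via $R_{q,\nu}(\rho_t)=\frac{1}{q-1}\log F_q(\rho_t)$, and to control its time derivative along the Langevin dynamics — the same starting point that underlies Theorem~\ref{Thm:RenyiLD}. Writing $h_t:=\rho_t/\nu$, the Fokker--Planck equation $\partial_t\rho_t=\nabla\cdot(\rho_t\nabla\log(\rho_t/\nu))$ translates into $\partial_t h_t=\mathcal{L}h_t$ for the Langevin generator $\mathcal{L}g:=\Delta g-\langle\nabla f,\nabla g\rangle$, which is self-adjoint in $L^2(\nu)$ with $\int g(-\mathcal{L}g)\,d\nu=\int\|\nabla g\|^2\,d\nu$. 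Differentiating under the integral and integrating by parts gives the identity $\frac{d}{dt}F_q(\rho_t)=-q(q-1)\int h_t^{q-2}\|\nabla h_t\|^2\,d\nu=-\frac{4(q-1)}{q}\int\|\nabla(h_t^{q/2})\|^2\,d\nu$; in particular $F_q(\rho_t)$, and hence $R_{q,\nu}(\rho_t)$, is nonincreasing.

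Next I would apply the Poincar\'e inequality with constant $\alpha$ to the test function $g=h_t^{q/2}$, obtaining $\int\|\nabla(h_t^{q/2})\|^2\,d\nu\ge\alpha\big(F_q(\rho_t)-(\int h_t^{q/2}\,d\nu)^2\big)$. The remaining task is to show the subtracted ``mean-squared'' term is only a fraction of $F_q(\rho_t)$ — the trivial Cauchy--Schwarz bound $(\int h_t^{q/2}\,d\nu)^2\le F_q(\rho_t)$ merely restates nonnegativity of the variance and is useless. Instead I would use $\int h_t\,d\nu=1$ together with log-convexity of $L^p(\nu)$ norms: interpolating the exponent $q/2$ between $1$ and $q$ (which is exactly where $q\ge 2$ enters, since it forces the relevant H\"older exponent to be admissible) gives $\int h_t^{q/2}\,d\nu\le(\int h_t\,d\nu)^{\theta}(\int h_t^q\,d\nu)^{1-\theta}$ with $\theta=\frac{q}{2(q-1)}$, so $(\int h_t^{q/2}\,d\nu)^2\le F_q(\rho_t)^{(q-2)/(q-1)}=F_q(\rho_t)\,F_q(\rho_t)^{-1/(q-1)}=F_q(\rho_t)\,e^{-R_{q,\nu}(\rho_t)}$.

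Combining the displays yields $\frac{d}{dt}F_q(\rho_t)\le-\frac{4(q-1)\alpha}{q}F_q(\rho_t)\big(1-e^{-R_{q,\nu}(\rho_t)}\big)$, equivalently $\frac{d}{dt}R_{q,\nu}(\rho_t)=\frac{1}{q-1}\frac{d}{dt}\log F_q(\rho_t)\le-\frac{4\alpha}{q}\big(1-e^{-R_{q,\nu}(\rho_t)}\big)$. I would then invoke the elementary bound $1-e^{-r}\ge\tfrac12\min\{r,1\}$ for $r\ge 0$ (checked separately on $[1,\infty)$, where $1-e^{-r}\ge 1-e^{-1}>\tfrac12$, and on $[0,1]$, where $r\mapsto 1-e^{-r}-r/2$ vanishes at $0$ and stays nonnegative). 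Feeding this in: if $R_{q,\nu}(\rho_0)\ge 1$, then as long as $R_{q,\nu}(\rho_t)\ge 1$ we get $\frac{d}{dt}R_{q,\nu}(\rho_t)\le-\frac{2\alpha}{q}$, which integrates to the stated linear bound; and if $R_{q,\nu}(\rho_0)\le 1$, then monotonicity keeps $R_{q,\nu}(\rho_t)\le 1$ for all $t$, so $\frac{d}{dt}R_{q,\nu}(\rho_t)\le-\frac{2\alpha}{q}R_{q,\nu}(\rho_t)$ and Gr\"onwall's inequality gives the exponential bound.

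The conceptual crux — and the step I expect to be the main obstacle — is the H\"older interpolation producing the factor $e^{-R_{q,\nu}(\rho_t)}$: this is what converts the raw Poincar\'e estimate into the self-referential differential inequality $\dot R\le-\frac{4\alpha}{q}(1-e^{-R})$ and thereby explains the dichotomy at threshold $1$, with the right side saturating near $-\frac{4\alpha}{q}$ for large $R$ (linear-in-$t$ decay) and behaving like $-\frac{4\alpha}{q}R$ for small $R$ (exponential decay). The rest is routine: one must check the standard regularity/integrability conditions justifying differentiation under the integral and the integration by parts, and note the degenerate endpoint $q=2$, where $\theta=1$ and the interpolation just reads $\int h_t\,d\nu=1$, consistent with $(\int h_t^{q/2}\,d\nu)^2=1=F_q(\rho_t)\,e^{-R_{2,\nu}(\rho_t)}$.
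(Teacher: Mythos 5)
Your proof is correct and follows essentially the same route as the paper: both derive the differential inequality $\dot R_{q,\nu}(\rho_t)\le -\tfrac{4\alpha}{q}\bigl(1-e^{-R_{q,\nu}(\rho_t)}\bigr)$ by applying the Poincar\'e inequality to $g=(\rho_t/\nu)^{q/2}$ and then bounding the mean-squared term $\bigl(\int (\rho_t/\nu)^{q/2}\,d\nu\bigr)^2\le F_{q,\nu}(\rho_t)\,e^{-R_{q,\nu}(\rho_t)}$, and both close via the same dichotomy at $R=1$. The only cosmetic difference is that you carry out the H\"older interpolation directly, whereas the paper packages the identical inequality as monotonicity of $q\mapsto R_{q,\nu}(\rho)$ (Lemma~\ref{Lem:RenyiIncreasing}, itself proved by exactly this power-mean/H\"older bound) and then writes $F_{q/2,\nu}(\rho)^2=e^{(q-2)R_{q/2,\nu}(\rho)}\le e^{(q-2)R_{q,\nu}(\rho)}$ in the proof of Lemma~\ref{Lem:RenyiPI}.
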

The reader will notice that under Poincar\'e inequality, compared to LSI, the convergence is slower in the beginning before it becomes exponential. For a reasonable starting distribution (such as a Gaussian centered at a stationary point), this leads to an extra factor of $n$ compared to the convergence under LSI.

We then turn to the discrete-time algorithm and show that ULA converges in R\'enyi divergence to the biased limit $\nu_\step$ under the assumption that $\nu_\step$ itself satisfies either LSI or Poincar\'e inequality.
We combine this with a decomposition result on R\'enyi divergence to derive a convergence guarantee in R\'enyi divergence to $\nu$; see Theorem~\ref{Thm:RenyiRate} in Section~\ref{Sec:RenyiULALSI} and Theorem~\ref{Thm:RenyiRatePoincare} in Section~\ref{Sec:RenyiRatePoincare}.

Finally, we show some properties on the biased limit of ULA.
We bound the bias in relative Fisher information assuming third-order smoothness (without isoperimetry); see Theorem~\ref{Thm:Bias}.
We also show the biased limit satisfies LSI if the original target is smooth and strongly log-concave; see Theorem~\ref{Thm:BiasLSI}.

In what follows, we review KL divergence and its properties along the Langevin dynamics in Section~\ref{Sec:Review}, and prove a convergence guarantee for KL divergence along ULA under LSI in Section~\ref{Sec:ULA}.
We provide a review of R\'enyi divergence and its properties along the Langevin dynamics in Section~\ref{Sec:Renyi}.
We then prove the convergence guarantee for R\'enyi divergence along ULA under LSI in Section~\ref{Sec:RenyiULA}, and under Poincar\'e inequality in Section~\ref{Sec:Poincare}. We show properties on the biased limit of ULA in Section~\ref{Sec:Bias}.
We provide all proofs and details in Section~\ref{Sec:Proofs}.
We conclude with a discussion in Section~\ref{Sec:Disc}, including subsequent work that used some of the analaysis techniques from this paper.

\section{Review of KL divergence along Langevin dynamics}
\label{Sec:Review}

In this section we review the definition of Kullback-Leibler (KL) divergence, log-Sobolev inequality, and the convergence of KL divergence along the Langevin dynamics in continuous time under log-Sobolev inequality.
See Appendix~\ref{App:Notation} for a review on notation.

\subsection{KL divergence}
\label{Sec:KL}

Let $\rho, \nu$ be probability distributions on $\R^n$, represented via their probability density functions with respect to the Lebesgue measure on $\R^n$.
We assume $\rho,\nu$ have full support and smooth densities.

Recall the {\bf Kullback-Leibler (KL) divergence} of $\rho$ with respect to $\nu$ is
\begin{align}\label{Eq:Hnu}
H_\nu(\rho) = \int_{\R^n} \rho(x) \log \frac{\rho(x)}{\nu(x)} \, dx.
\end{align}
KL divergence is the relative form of {\em Shannon entropy} $H(\rho) = -\int_{\R^n} \rho(x) \log \rho(x) \, dx$.
Whereas Shannon entropy can be positive or negative, KL divergence is nonnegative and minimized at $\nu$: $H_\nu(\rho) \ge 0$ for all $\rho$, and $H_\nu(\rho) = 0$ if and only if $\rho = \nu$.
Therefore, KL divergence serves as a measure of (albeit asymmetric) ``distance'' of a probability distribution $\rho$ from a base distribution $\nu$.
KL divergence is a relatively strong measure of distance; for example, Pinsker's inequality implies that KL divergence controls total variation distance.
Furthermore, under log-Sobolev (or Talagrand) inequality, KL divergence also controls the quadratic Wasserstein $W_2$ distance, as we review below.

We say $\nu = e^{-f}$ is {\bf $L$-smooth} if $f$ has bounded Hessian: $-LI \preceq \nabla^2 f(x) \preceq LI$ for all $x \in \R^n$.

\begin{lemma}\label{Lem:KLEstimate}
Suppose $\nu = e^{-f}$ is $L$-smooth.
Let $\rho = \N(x^\ast, \frac{1}{L} I)$ where $x^\ast$ is a stationary point of $f$.
Then $H_\nu(\rho) \le f(x^\ast) + \frac{n}{2} \log \frac{L}{2\pi}$.
\end{lemma}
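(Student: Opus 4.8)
The plan is to compute $H_\nu(\rho)$ directly by writing it as the difference of an entropy term and a cross-entropy term, and then bound each piece using the $L$-smoothness of $f$ together with the fact that $x^\ast$ is a stationary point. Write $\rho = \mathcal{N}(x^\ast, \tfrac{1}{L}I)$. By definition,
\begin{align*}
H_\nu(\rho) = \int \rho \log \rho \, dx - \int \rho \log \nu \, dx = -H(\rho) + \int \rho(x) f(x) \, dx,
\end{align*}
since $\log \nu = -f$. The first term is the negative differential entropy of a Gaussian with covariance $\tfrac{1}{L}I$, which is a standard closed form: $-H(\rho) = -\tfrac{n}{2}\log\tfrac{2\pi}{L} - \tfrac{n}{2} = \tfrac{n}{2}\log\tfrac{L}{2\pi} - \tfrac{n}{2}$.

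The main work is bounding $\int \rho(x) f(x)\,dx = \E_{X \sim \rho}[f(X)]$. Here I would Taylor-expand $f$ around the stationary point $x^\ast$. Since $\nabla f(x^\ast) = 0$, the first-order term vanishes, and $L$-smoothness gives the quadratic upper bound $f(x) \le f(x^\ast) + \tfrac{L}{2}\|x - x^\ast\|^2$ for all $x$ (this is the standard consequence of $\nabla^2 f \preceq L I$). Taking expectations under $\rho$, and using $\E_{X\sim\rho}\|X - x^\ast\|^2 = \Tr(\tfrac{1}{L}I) = \tfrac{n}{L}$, we get
\begin{align*}
\E_{X\sim\rho}[f(X)] \le f(x^\ast) + \frac{L}{2}\cdot\frac{n}{L} = f(x^\ast) + \frac{n}{2}.
\end{align*}

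Combining the two bounds, the additive constants $+\tfrac{n}{2}$ and $-\tfrac{n}{2}$ cancel, yielding
\begin{align*}
H_\nu(\rho) \le f(x^\ast) + \frac{n}{2} + \frac{n}{2}\log\frac{L}{2\pi} - \frac{n}{2} = f(x^\ast) + \frac{n}{2}\log\frac{L}{2\pi},
\end{align*}
which is exactly the claimed inequality. There is no serious obstacle here; the only point requiring a little care is getting the constants in the Gaussian entropy formula right and making sure the $\pm\tfrac{n}{2}$ terms cancel as stated. One could alternatively phrase the entire argument as comparing $\rho$ against the Gaussian approximation $\mathcal{N}(x^\ast, \tfrac{1}{L}I)$ of $\nu$ at its stationary point, but the direct computation above is cleanest.
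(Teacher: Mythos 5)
Your proof is correct and follows essentially the same route as the paper: decompose $H_\nu(\rho) = -H(\rho) + \E_\rho[f]$, bound $\E_\rho[f] \le f(x^\ast) + \tfrac{n}{2}$ via the smoothness quadratic upper bound $f(x) \le f(x^\ast) + \tfrac{L}{2}\|x-x^\ast\|^2$ at the stationary point, and plug in the closed-form Gaussian entropy. Incidentally, your arithmetic in the final cancellation is cleaner than the paper's, whose displayed conclusion contains a typo ($\tfrac{L}{4\pi e}$ where the correct $\tfrac{L}{2\pi}$, matching the lemma statement, should appear).
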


We provide the proof of Lemma~\ref{Lem:KLEstimate} in Section~\ref{App:ProofLemKLEstimate}.

\subsection{Log-Sobolev inequality}

Recall we say $\nu$ satisfies the {\bf log-Sobolev inequality (LSI)} with a constant $\alpha > 0$ if for all smooth function $g \colon \R^n \to \R$ with $\E_\nu[g^2] < \infty$,
\begin{align}\label{Eq:LSI}
\E_\nu[g^2 \log g^2] - \E_\nu[g^2] \log \E_\nu[g^2] \le \frac{2}{\alpha} \E_\nu[\|\nabla g\|^2].
\end{align}
Recall the {\bf relative Fisher information} of $\rho$ with respect to $\nu$ is
\begin{align}\label{Eq:Jnu}
J_\nu(\rho) = \int_{\R^n} \rho(x) \left\| \nabla \log \frac{\rho(x)}{\nu(x)} \right\|^2 dx.
\end{align}
LSI is equivalent to the following relation between KL divergence and Fisher information for all $\rho$:
\begin{align}\label{Eq:LSI-KL}
H_\nu(\rho) \le \frac{1}{2\alpha} J_\nu(\rho).
\end{align}
Indeed, to obtain~\eqref{Eq:LSI-KL} we choose $g^2 = \frac{\rho}{\nu}$ in~\eqref{Eq:LSI}; conversely, to obtain~\eqref{Eq:LSI} we choose $\rho = \frac{g^2 \nu}{\E_\nu[g^2]}$ in~\eqref{Eq:LSI-KL}.

LSI is a strong isoperimetry statement and implies, among others, concentration of measure and sub-Gaussian tail property~\cite{L99}.
LSI was first shown by Gross~\cite{G75} for the case of Gaussian $\nu$.
It was extended by Bakry and \'Emery~\cite{BE85} to strongly log-concave $\nu$; namely, when $f = -\log \nu$ is $\alpha$-strongly convex, then $\nu$ satisfies LSI with constant $\alpha$.
However, LSI applies more generally.
For example, the classical perturbation result by  Holley and Stroock~\cite{HS87} states that LSI is stable under bounded perturbation.
Furthermore, LSI is preserved under a Lipschitz mapping.
In one dimension, there is an exact characterization of when a probability distribution on $\R$ satisfies LSI~\cite{BG99}.
Moreover, LSI satisfies a tensorization property~\cite{L99}: If $\nu_1, \nu_2$ satisfy LSI with constants $\alpha_1,\alpha_2> 0$, respectively, then $\nu_1 \otimes \nu_2$ satisfies LSI with constant $\min \{\alpha_1,\alpha_2\} > 0$.
Thus, there are many examples of non-logconcave distributions $\nu$ on $\R^n$ satisfying LSI (with a constant independent of dimension).
There are also Lyapunov function criteria and exponential integrability conditions that can be used to verify when a probability distribution satisfies LSI; see for example~\cite{C04,CM10,MS14,WW16,BGMZ18}.

\subsubsection{Talagrand inequality}

Recall the {\bf Wasserstein distance} between $\rho$ and $\nu$ is
\begin{align}\label{Eq:W2}
W_2(\rho,\nu) = \inf_{\Pi} \E_\Pi[\|X-Y\|^2]^{\frac{1}{2}}
\end{align}
where the infimum is over joint distributions $\Pi$ of $(X,Y)$ with the correct marginals $X \sim \rho, Y \sim \nu$.

Recall we say $\nu$ satisfies {\bf Talagrand inequality} with a constant $\alpha > 0$ if for all $\rho$:
\begin{align}\label{Eq:T}
\frac{\alpha}{2} W_2(\rho,\nu)^2 \le H_\nu(\rho).
\end{align}
Talagrand's inequality implies concentration of measure of Gaussian type.
It was first studied by Talagrand~\cite{T96} for Gaussian $\nu$,
and extended by Otto and Villani~\cite{OV00} to all $\nu$ satisfying LSI; namely, if $\nu$ satisfies LSI with constant $\alpha > 0$, then $\nu$ also satisfies Talagrand's inequality with the same constant~\cite[Theorem~1]{OV00}.
Therefore, under LSI, KL divergence controls the Wasserstein distance.
Moreover, when $\nu$ is log-concave, LSI and Talagrand's inequality are equivalent~\cite[Corollary~3.1]{OV00}.

We recall the geometric interpretation of LSI and Talagrand's inequality from~\cite{OV00}.
In the space of probability distributions with the Riemannian metric defined by the Wasserstein $W_2$ distance, the relative Fisher information~\eqref{Eq:Jnu} is the squared norm of the gradient of KL divergence~\eqref{Eq:Hnu}.
Therefore, LSI~\eqref{Eq:LSI-KL} is the gradient dominated condition (also known as the Polyak-\L{ojaciewicz} (PL) inequality) for KL divergence.
On the other hand, Talagrand's inequality~\eqref{Eq:T} is the quadratic growth condition for KL divergence.
In general, the gradient dominated condition implies the quadratic growth condition~\cite[Proposition~1']{OV00};
therefore, LSI implies Talagrand's inequality.

\subsection{Langevin dynamics}
\label{Sec:Langevin}

The {\bf Langevin dynamics} for target distribution $\nu = e^{-f}$ is a continuous-time stochastic process $(X_t)_{t \ge 0}$ in $\R^n$ that evolves following the stochastic differential equation:
\begin{align}\label{Eq:LD}
dX_t = -\nabla f(X_t) \, dt + \sqrt{2} \, dW_t
\end{align}
where $(W_t)_{t \ge 0}$ is the standard Brownian motion in $\R^n$ with $W_0 = 0$.

If $(X_t)_{t \ge 0}$ evolves following the Langevin dynamics~\eqref{Eq:LD}, then their probability density function $(\rho_t)_{t \ge 0}$ evolves following the {\bf Fokker-Planck equation}:
\begin{align}\label{Eq:FP}
\part{\rho_t}{t} \,=\, \nabla \cdot (\rho_t \nabla f) + \Delta \rho_t \,=\, \nabla \cdot \left(\rho_t \nabla \log \frac{\rho_t}{\nu}\right).
\end{align}
Here $\nabla \cdot$ is the divergence and $\Delta$ is the Laplacian operator.
We provide a derivation in Appendix~\ref{App:FP}.
From~\eqref{Eq:FP}, if $\rho_t = \nu$, then $\part{\rho_t}{t} = 0$, so $\nu$ is the stationary distribution for the Langevin dynamics~\eqref{Eq:LD}.
Moreover, the Langevin dynamics brings any distribution $X_t \sim \rho_t$ closer to the target distribution $\nu$, as the following lemma shows.

\begin{lemma}\label{Lem:Hdot}
Along the Langevin dynamics~\eqref{Eq:LD} (or equivalently, the Fokker-Planck equation~\eqref{Eq:FP}),
\begin{align}\label{Eq:HdotLD}
\frac{d}{dt} H_\nu(\rho_t) = - J_\nu(\rho_t).
\end{align}
\end{lemma}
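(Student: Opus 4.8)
The plan is a direct computation: differentiate the KL divergence under the integral sign, substitute the Fokker--Planck equation in its divergence form, and integrate by parts.

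First I would write $H_\nu(\rho_t) = \int_{\R^n} \rho_t \log \frac{\rho_t}{\nu}\,dx$ and differentiate in $t$. Since $\partial_t\big(\rho_t \log\tfrac{\rho_t}{\nu}\big) = (\partial_t \rho_t)\big(\log\tfrac{\rho_t}{\nu} + 1\big)$ and $\int_{\R^n}\partial_t\rho_t\,dx = \frac{d}{dt}\int_{\R^n}\rho_t\,dx = 0$ because each $\rho_t$ is a probability density, the constant term drops out, leaving
\[
\frac{d}{dt} H_\nu(\rho_t) = \int_{\R^n} (\partial_t \rho_t)\,\log\frac{\rho_t}{\nu}\,dx.
\]

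Next I would substitute $\partial_t \rho_t = \nabla\cdot\big(\rho_t \nabla\log\tfrac{\rho_t}{\nu}\big)$ from the Fokker--Planck equation~\eqref{Eq:FP} and integrate by parts to move the divergence onto $\log\tfrac{\rho_t}{\nu}$, which gives
\[
\frac{d}{dt} H_\nu(\rho_t) = -\int_{\R^n} \rho_t \left\| \nabla\log\frac{\rho_t}{\nu} \right\|^2 dx = -J_\nu(\rho_t),
\]
using the definition~\eqref{Eq:Jnu} of the relative Fisher information.

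The only substantive point is the analytic justification: interchanging $\frac{d}{dt}$ with the integral and discarding the boundary term in the integration by parts. Both are valid under mild decay and integrability hypotheses on $\rho_t$ and its first derivatives, which are consistent with the standing assumption that $\rho,\nu$ have full support and smooth densities and with $\rho_t$ having finite KL divergence and finite relative Fisher information along the dynamics; I would record these as the working regularity assumptions rather than develop the full functional-analytic argument. Everything else is the short computation above, which is where I expect no real obstacle.
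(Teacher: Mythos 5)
Your proof is correct and follows essentially the same route as the paper's: differentiate under the integral sign using $\int \partial_t \rho_t\,dx = 0$ to drop the chain-rule term, substitute the Fokker--Planck equation in divergence form, and integrate by parts to obtain $-J_\nu(\rho_t)$. The brief remark on the regularity needed to justify the interchange and to discard boundary terms is a reasonable addition but does not change the argument.
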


We provide the proof of Lemma~\ref{Lem:Hdot} in Section~\ref{App:ProofHdot}.
Since $J_\nu(\rho) \ge 0$, the identity~\eqref{Eq:HdotLD} shows that KL divergence with respect to $\nu$ is decreasing along the Langevin dynamics, so indeed the distribution $\rho_t$ converges to $\nu$.

\subsubsection{Exponential convergence of KL divergence along Langevin dynamics under LSI}

When $\nu$ satisfies LSI, KL divergence converges exponentially fast along the Langevin dynamics.

\begin{theorem}\label{Thm:HRate-LSI}
Suppose $\nu$ satisfies LSI with constant $\alpha > 0$.
Along the Langevin dynamics~\eqref{Eq:LD},
\begin{align}\label{Eq:HRateLD}
H_\nu(\rho_t) \le e^{-2\alpha t} H_\nu(\rho_0).
\end{align}
Furthermore, $W_2(\rho_t,\nu) \le \sqrt{\frac{2}{\alpha}H_\nu(\rho_0)}\, e^{-\alpha t}$.
\end{theorem}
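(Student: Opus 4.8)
The plan is to combine the dissipation identity of Lemma~\ref{Lem:Hdot} with the Fisher-information form of LSI in~\eqref{Eq:LSI-KL}, and then close the argument with a Grönwall (differential inequality) step. By Lemma~\ref{Lem:Hdot}, along the Langevin dynamics we have $\frac{d}{dt} H_\nu(\rho_t) = -J_\nu(\rho_t)$. Since $\nu$ satisfies LSI with constant $\alpha$, the equivalent relation~\eqref{Eq:LSI-KL} gives $J_\nu(\rho_t) \ge 2\alpha H_\nu(\rho_t)$ for every $t \ge 0$. Substituting yields the differential inequality
\begin{align*}
\frac{d}{dt} H_\nu(\rho_t) \,\le\, -2\alpha\, H_\nu(\rho_t).
\end{align*}

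Next I would integrate this. The auxiliary function $t \mapsto e^{2\alpha t} H_\nu(\rho_t)$ has derivative $e^{2\alpha t}\left(\frac{d}{dt}H_\nu(\rho_t) + 2\alpha H_\nu(\rho_t)\right) \le 0$, hence is nonincreasing in $t$; comparing its value at time $t$ with its value at time $0$ gives $H_\nu(\rho_t) \le e^{-2\alpha t} H_\nu(\rho_0)$, which is~\eqref{Eq:HRateLD}. (If $H_\nu(\rho_0) = \infty$ the bound is vacuous, so we may assume it is finite.)

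For the Wasserstein estimate, I would invoke the Otto–Villani theorem recalled above: LSI with constant $\alpha$ implies Talagrand's inequality~\eqref{Eq:T} with the same constant. Therefore $\frac{\alpha}{2} W_2(\rho_t,\nu)^2 \le H_\nu(\rho_t) \le e^{-2\alpha t} H_\nu(\rho_0)$, and taking square roots gives $W_2(\rho_t,\nu) \le \sqrt{\frac{2}{\alpha} H_\nu(\rho_0)}\, e^{-\alpha t}$.

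I do not expect a serious obstacle in this proof: the mechanism is the classical ``entropy–entropy production'' argument. The only points meriting care are the justification that $H_\nu(\rho_t)$ is differentiable along the flow with derivative $-J_\nu(\rho_t)$ — but this is exactly the content of Lemma~\ref{Lem:Hdot} under the standing smoothness/full-support assumptions on the densities — and applying the Grönwall step to the correct auxiliary function $e^{2\alpha t} H_\nu(\rho_t)$ rather than to $H_\nu(\rho_t)$ directly. An alternative route for the last part would be to bound the Wasserstein contraction directly along the flow, but routing through Talagrand's inequality is cleaner and uses only facts already stated.
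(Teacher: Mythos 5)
Your proposal is correct and follows essentially the same route as the paper: combine Lemma~\ref{Lem:Hdot} with the LSI in the form~\eqref{Eq:LSI-KL} to obtain the differential inequality $\frac{d}{dt}H_\nu(\rho_t)\le -2\alpha H_\nu(\rho_t)$, integrate, and then invoke the Otto--Villani implication from LSI to Talagrand's inequality for the $W_2$ bound. The only difference is that you spell out the Gr\"onwall step via the auxiliary function $e^{2\alpha t}H_\nu(\rho_t)$, which the paper leaves implicit with the word ``integrating.''
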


We provide the proof of Theorem~\ref{Thm:HRate-LSI} in Section~\ref{App:HRate-LSI}.
We also recall the optimization interpretation of Langevin dynamics as the gradient flow of KL divergence in the space of distributions with the Wasserstein metric~\cite{JKO98,Vil03,OV00}.
Then the exponential convergence rate in Theorem~\ref{Thm:HRate-LSI} is a manifestation of the general fact that gradient flow converges exponentially fast under gradient domination condition.
This provides a justification for using the Langevin dynamics for sampling from $\nu$, as a natural steepest descent flow that minimizes the KL divergence $H_\nu$.

\section{Unadjusted Langevin Algorithm}
\label{Sec:ULA}

In this section we study the behavior of KL divergence along the Unadjusted Langevin Algorithm (ULA) in discrete time under log-Sobolev inequality assumption.

Suppose we wish to sample from a smooth target probability distribution $\nu = e^{-f}$ in $\R^n$.
The {\bf Unadjusted Langevin Algorithm (ULA)} with step size $\step > 0$ is the discrete-time algorithm
\begin{align}\label{Eq:ULA}
x_{k+1} = x_k - \step \nabla f(x_k) + \sqrt{2\step} \, z_k
\end{align}
where $z_k \sim \N(0,I)$ is an independent standard Gaussian random variable in $\R^n$.
Let $\rho_k$ denote the probability distribution of $x_k$ that evolves following ULA.

As $\step \to 0$, ULA recovers the Langevin dynamics~\eqref{Eq:LD} in continuous-time.
However, for fixed $\step > 0$, ULA converges to a biased limiting distribution $\nu_\step \neq \nu$.
Therefore, KL divergence $H_\nu(\rho_k)$ does not tend to $0$ along ULA, as it has an asymptotic bias $H_\nu(\nu_\step) > 0$.

\begin{example}\label{Ex:ULAGaussian}
Let $\nu = \N(0,\frac{1}{\alpha} I)$.
The ULA iteration is $x_{k+1} = (1-\step\alpha)x_k + \sqrt{2\step}z_k$, $z_k \sim \N(0,I)$.
For $0 < \step < \frac{2}{\alpha}$, the limit is $\nu_\step = \N\left(0,\frac{1}{\alpha\left(1-\frac{\step\alpha}{2}\right)}\right)$,
and the bias is
$H_\nu(\nu_\step) = \frac{n}{2}\left(\frac{\step\alpha}{2\left(1-\frac{\step\alpha}{2}\right)} + \log \left(1-\frac{\step\alpha}{2}\right)\right).$
In particular, $H_\nu(\nu_\step) \le \frac{n\step^2\alpha^2}{16\left(1-\frac{\step\alpha}{2}\right)^2}=O(\step^2)$.
\end{example}

\subsection{Convergence of KL divergence along ULA under LSI}
\label{Sec:KL-ULA-LSI}

When the true target distribution $\nu$ satisfies LSI and a smoothness condition, we can prove a convergence guarantee in KL divergence along ULA.
Recall we say $\nu = e^{-f}$ is $L$-smooth, $0 < L < \infty$, if $-LI \preceq \nabla^2 f(x) \preceq LI$ for all $x \in \R^n$.

A key part in our analysis is the following lemma which bounds the decrease in KL divergence along one iteration of ULA.
Here $x_{k+1} \sim \rho_{k+1}$ is the output of one step of ULA~\eqref{Eq:ULA} from $x_k \sim \rho_k$.

\begin{lemma}\label{Lem:OneStep}
Suppose $\nu$ satisfies LSI with constant $\alpha > 0$ and is $L$-smooth.
If $0 < \step \le \frac{\alpha}{4L^2}$, then along each step of ULA~\eqref{Eq:ULA},
\begin{align}\label{Eq:Rec}
H_\nu(\rho_{k+1}) \le e^{-\alpha \step} H_\nu(\rho_k) + 6 \step^2 n L^2.
\end{align}
\end{lemma}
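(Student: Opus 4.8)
## Proof proposal for Lemma~\ref{Lem:OneStep}

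The plan is to interpolate ULA's single step as a continuous-time process and track the KL divergence along it, mimicking the continuous-time calculation of Lemma~\ref{Lem:Hdot} but with an error term coming from the fact that the drift is frozen at the starting point. Concretely, for a fixed step starting from $x_0 \sim \rho_0$, I would consider the SDE $dx_t = -\nabla f(x_0)\,dt + \sqrt{2}\,dW_t$ for $t \in [0,\step]$, so that $x_\step \sim \rho_1$ is exactly one ULA step. Let $\rho_{0t}(x_t \mid x_0)$ be the conditional law and $\rho_t$ the marginal law of $x_t$. The joint law satisfies a Fokker--Planck equation $\part{\rho_{0t}(x,x_0)}{t} = \nabla_x \cdot \big(\rho_{0t}(x,x_0)\,\nabla f(x_0)\big) + \Delta_x \rho_{0t}(x,x_0)$; integrating out $x_0$ gives $\part{\rho_t(x)}{t} = \nabla \cdot \big(\rho_t(x)\,\E[\nabla f(x_0)\mid x_t = x]\big) + \Delta \rho_t(x)$.

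Next I would differentiate $H_\nu(\rho_t)$ in time. Using the Fokker--Planck-type equation above and integration by parts (as in the proof of Lemma~\ref{Lem:Hdot}), I expect to get something like
\begin{align*}
\frac{d}{dt} H_\nu(\rho_t) = -\int \rho_t(x) \left\langle \nabla \log \frac{\rho_t(x)}{\nu(x)},\; \E[\nabla f(x_0)\mid x_t = x] - \nabla f(x) + \nabla \log \frac{\rho_t(x)}{\nu(x)} \right\rangle dx,
\end{align*}
since $\nabla\log\nu = -\nabla f$. Writing $v(x) = \E[\nabla f(x_0) - \nabla f(x_t)\mid x_t = x]$, this is $-J_\nu(\rho_t) - \E_{\rho_t}\!\big[\langle \nabla\log\frac{\rho_t}{\nu}, v\rangle\big]$. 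Applying Cauchy--Schwarz and then Young's inequality ($ab \le \frac{1}{4}a^2 + b^2$, or a weighted version) to split off the cross term, I would bound $\frac{d}{dt}H_\nu(\rho_t) \le -\frac{3}{4}J_\nu(\rho_t) + \E_{\rho_t}[\|v\|^2] \le -\frac{3}{4}J_\nu(\rho_t) + \E\big[\|\nabla f(x_0) - \nabla f(x_t)\|^2\big]$, where the last step is by Jensen/tower property. Then LSI on $\nu$ gives $J_\nu(\rho_t) \ge 2\alpha H_\nu(\rho_t)$, so $\frac{d}{dt}H_\nu(\rho_t) \le -\frac{3\alpha}{2} H_\nu(\rho_t) + \E[\|\nabla f(x_0) - \nabla f(x_t)\|^2]$.

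The remaining task is to bound the error term $\E[\|\nabla f(x_0) - \nabla f(x_t)\|^2]$. By $L$-smoothness this is at most $L^2 \E[\|x_t - x_0\|^2]$, and since $x_t - x_0 = -t\nabla f(x_0) + \sqrt{2}\,W_t$ with $W_t \sim \N(0, tI)$ independent of $x_0$, we get $\E[\|x_t - x_0\|^2] = t^2 \E[\|\nabla f(x_0)\|^2] + 2nt$. The quantity $\E_{\rho_0}[\|\nabla f\|^2]$ needs a handle; I would bound $\E_{\rho_0}[\|\nabla f\|^2] \le 2 J_\nu(\rho_0) + 2\E_\nu[\|\nabla f\|^2]$ (by splitting $\nabla f = -\nabla\log\frac{\rho_0}{\nu} - \nabla\log\rho_0$... actually more cleanly, $\nabla\log\frac{\rho_0}{\nu} = \nabla\log\rho_0 + \nabla f$, combined with the fact that $\E_{\rho_0}[\nabla\log\rho_0 \cdot \text{(smooth)}]$ integrates by parts) — and separately bound $\E_\nu[\|\nabla f\|^2] = \E_\nu[\Delta f] \le nL$ by integration by parts against $\nu = e^{-f}$ and the Hessian bound. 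This gives $\E[\|x_t-x_0\|^2] \le 4 t^2 J_\nu(\rho_0) + 4t^2 nL + 2nt$, hence the error term is at most $L^2(4t^2 J_\nu(\rho_0) + 4t^2 nL + 2nt)$. Since along a single short step $J_\nu(\rho_t)$ does not blow up (one can argue $J_\nu(\rho_0)$ enters with a coefficient absorbed by the $-\frac34 J_\nu$ term once $\step \le \alpha/(4L^2)$, or bound $J_\nu(\rho_t)$ crudely), the dominant error over $[0,\step]$ is $\int_0^\step 2nL^2 t\,dt \approx nL^2\step^2$ plus lower-order terms, and tuning constants under $\step \le \alpha/(4L^2)$ delivers $H_\nu(\rho_\step) \le e^{-\alpha\step} H_\nu(\rho_0) + 6\step^2 nL^2$ after applying Grönwall.

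The main obstacle is controlling $\E_{\rho_0}[\|\nabla f\|^2]$ and, more subtly, the term $\E[\|\nabla f(x_0)-\nabla f(x_t)\|^2]$ for $t>0$ where $\rho_t$ appears rather than $\rho_0$: one must be careful that the $J_\nu(\rho_0)$ contribution (itself potentially large if $\rho_0$ is far from $\nu$) is reabsorbed rather than left as an unbounded additive error. The step-size restriction $\step \le \alpha/(4L^2)$ is precisely what makes $4\step^2 L^2 J_\nu(\rho_0) \le \frac{\step\alpha}{4L^2}\cdot J_\nu(\rho_0) \cdot (\text{const})$ small enough to be swallowed by the $-\frac{3}{4}J_\nu$ (equivalently $-\frac{3\alpha}{2}H_\nu$) dissipation term, which is where the bookkeeping gets delicate and where I expect to spend the most care.
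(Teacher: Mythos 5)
Your high-level structure --- interpolating the ULA step as a frozen-drift SDE, deriving the modified Fokker--Planck equation with the conditional-expectation drift, differentiating $H_\nu(\rho_t)$, splitting off the discretization error via Young's inequality, and closing with LSI plus Gr\"onwall --- matches the paper exactly. The one place where you deviate is how you control $\E_{\rho_0}[\|\nabla f\|^2]$, and that deviation opens a genuine gap.

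You propose bounding $\E_{\rho_0}[\|\nabla f\|^2]$ in terms of $J_\nu(\rho_0)$. (Your claimed identity $\E_{\rho_0}[\|\nabla f\|^2] \le 2J_\nu(\rho_0) + 2\E_\nu[\|\nabla f\|^2]$ is itself not quite right: the split $\nabla f = \nabla\log\frac{\rho_0}{\nu} - \nabla\log\rho_0$ produces $\E_{\rho_0}[\|\nabla\log\rho_0\|^2]$, not $\E_\nu[\|\nabla f\|^2]$; the clean integration-by-parts version is $\E_{\rho_0}[\|\nabla f\|^2] \le J_\nu(\rho_0) + 2\E_{\rho_0}[\Delta f] \le J_\nu(\rho_0) + 2nL$. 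But set that aside.) The real problem is that once $J_\nu(\rho_0)$ enters, you have an error term of the form $L^2 t^2 J_\nu(\rho_0)$ in the differential inequality $\frac{d}{dt}H_\nu(\rho_t) \le -\frac34 J_\nu(\rho_t) + L^2 t^2 J_\nu(\rho_0) + \cdots$, and you claim to swallow it into the $-\frac34 J_\nu$ dissipation once $\step \le \alpha/(4L^2)$. That cannot work: the dissipation is in $J_\nu(\rho_t)$ at the \emph{current} time $t>0$, while your error is in $J_\nu(\rho_0)$ at the \emph{initial} time, and there is no lower bound $J_\nu(\rho_t) \gtrsim J_\nu(\rho_0)$ (the heat-flow component of the dynamics actively drives Fisher information down). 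Nor can you trade $J_\nu(\rho_0)$ for $H_\nu(\rho_0)$: LSI gives $J_\nu \ge 2\alpha H_\nu$, with no converse, so $J_\nu(\rho_0)$ can be arbitrarily large relative to everything else you control. The paper avoids this entirely by invoking Talagrand's inequality, which follows from LSI by Otto--Villani: coupling $x_0 \sim \rho_0$ optimally with $x^\ast \sim \nu$ and using the $L$-Lipschitzness of $\nabla f$ gives $\E_{\rho_0}[\|\nabla f\|^2] \le 2L^2 W_2(\rho_0,\nu)^2 + 2\E_\nu[\|\nabla f\|^2] \le \frac{4L^2}{\alpha}H_\nu(\rho_0) + 2nL$ (Lemma~\ref{Lem:Grad}). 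That puts the error in terms of $H_\nu(\rho_0)$, which the exponential decay together with the restriction $\step \le \alpha/(4L^2)$ genuinely controls; this Talagrand step is the ingredient your argument is missing.
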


We provide the proof of Lemma~\ref{Lem:OneStep} in Section~\ref{App:ProofULA}.
The proof of Lemma~\ref{Lem:OneStep} compares the evolution of KL divergence along one step of ULA with the evolution along the Langevin dynamics in continuous time (which converges exponentially fast under LSI), and bounds the discretization error; see Figure~\ref{Fig:KLProof} for an illustration.
This high-level comparison technique has been used in many papers. 
Our proof structure is similar to that of Cheng and Bartlett~\cite{CB18}, whose analysis needs $\nu$ to be strongly log-concave.

 \begin{figure}[h!t!]
 \centering
  \includegraphics[width=0.5\textwidth]{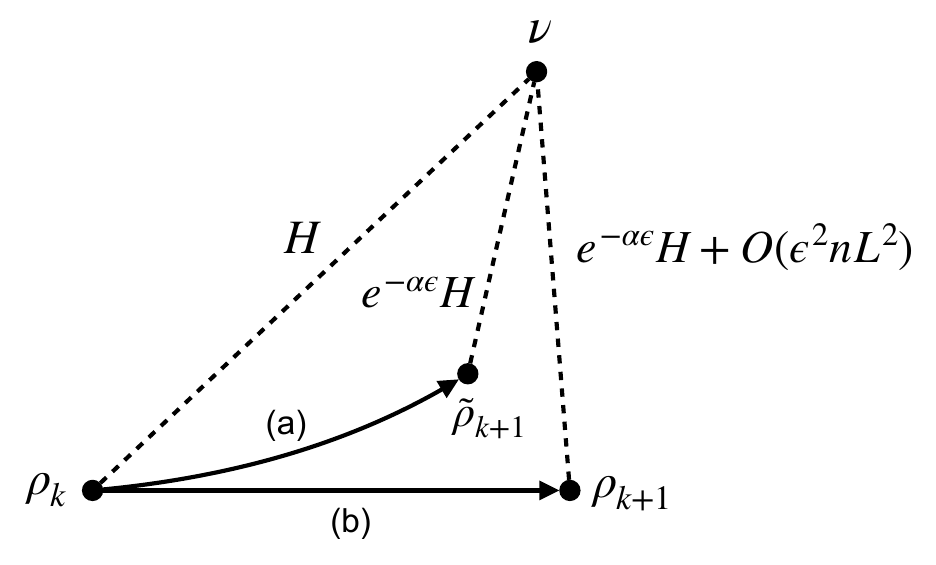}
 \caption{An illustration for the proof of Lemma~\ref{Lem:OneStep}.
 In each iteration, we compare the evolution of (a) the continuous-time Langevin dynamics for time $\step$, and (b) one step of ULA.
 If the current KL divergence is $H \equiv H_\nu(\rho_k)$, then after the Langevin dynamics (a) the KL divergence is $H_\nu(\tilde \rho_{k+1}) \le e^{-\alpha \step} H$, and we show that after ULA (b) the KL divergence is $H_\nu(\rho_{k+1}) \le e^{-\alpha \step} H + O(\step^2 nL^2)$.
 }
 \label{Fig:KLProof}
 \end{figure}

With Lemma~\ref{Lem:OneStep}, we can prove our main result on the convergence rate of ULA under LSI.

\begin{reptheorem}{Thm:Main}
Suppose $\nu$ satisfies LSI with constant $\alpha > 0$ and is $L$-smooth.
For any $x_0 \sim \rho_0$ with $H_\nu(\rho_0) < \infty$, the iterates $x_k \sim \rho_k$ of 
ULA~\eqref{Eq:ULA} with step size $0 < \step \le \frac{\alpha}{4L^2}$ satisfy
\begin{align}\label{Eq:Main1}
H_\nu(\rho_k) \le e^{-\alpha \step k} H_\nu(\rho_0) + \frac{8 \step n L^2}{\alpha}.
\end{align}
Thus, for any $\err > 0$, to achieve $H_\nu(\rho_k) < \err$, it suffices to run ULA with step size $\step \le \frac{\alpha}{4L^2}\min\{1,\frac{\err}{4n}\}$ for $k \ge \frac{1}{\alpha \step} \log \frac{2 H_\nu(\rho_0)}{\err}$ iterations.
\end{reptheorem}

We provide the proof of Theorem~\ref{Thm:Main} in Section~\ref{App:ProofThmMain}.

In particular, suppose $\err < 4n$ and we choose the largest permissible step size $\step = \Theta \left(\frac{\alpha \err}{L^2 n}\right)$.
Suppose we start with a Gaussian $\rho_0 = \N(x^\ast,\frac{1}{L}I)$, where $x^\ast$ is a stationary point of $f$ (which we can find, e.g., via gradient descent), so $H_\nu(\rho_0) \le f(x^\ast) + \frac{n}{2} \log \frac{L}{2\pi} = \tilde O(n)$ by Lemma~\ref{Lem:KLEstimate}.
Therefore, Theorem~\ref{Thm:Main} states that to achieve $H_\nu(\rho_k) \le \err$, ULA has iteration complexity $k = \tilde \Theta\left( \frac{L^2 n}{\alpha^2 \err} \right)$.
Since LSI implies Talagrand's inequality, Theorem~\ref{Thm:Main} also yields a convergence guarantee in Wasserstein distance. 

As $k \to \infty$, Theorem~\ref{Thm:Main} implies the following bound on the bias between $\nu_\step$ and $\nu$ under LSI.
However, note that the bound in Corollary~\ref{Cor:BiasLSI} is $H_\nu(\nu_\step) = O(\step)$, while from Example~\ref{Ex:ULAGaussian} we see that $H_\nu(\nu_\step) = O(\step^2)$ in the Gaussian case.

\begin{corollary}\label{Cor:BiasLSI}
Suppose $\nu$ satisfies LSI with constant $\alpha > 0$ and is $L$-smooth.
For $0 < \step \le \frac{\alpha}{4L^2}$, the biased limit $\nu_\step$ of ULA with step size $\step$ satisfies $H_\nu(\nu_\step) \le \frac{8 nL^2\step}{\alpha}$ and $W_2(\nu,\nu_\step)^2 \le \frac{16nL^2\step}{\alpha^2}$. 
\end{corollary}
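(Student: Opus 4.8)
The plan is to derive Corollary~\ref{Cor:BiasLSI} as a direct consequence of Theorem~\ref{Thm:Main} by taking the limit $k \to \infty$. First I would recall that for fixed step size $0 < \step \le \frac{\alpha}{4L^2}$, ULA is known to converge to a well-defined stationary distribution $\nu_\step$ (this is standard since the one-step ULA map is a contraction in an appropriate sense under $L$-smoothness; alternatively, one may take $\nu_\step$ as defined in the text as the biased limit). To apply Theorem~\ref{Thm:Main} cleanly, I would first argue that we may assume $H_\nu(\rho_0) < \infty$ for at least one valid starting distribution — for instance $\rho_0 = \N(x^\ast, \frac{1}{L}I)$, which has $H_\nu(\rho_0) < \infty$ by Lemma~\ref{Lem:KLEstimate}.

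The key step is then the following: running ULA from such a $\rho_0$, Theorem~\ref{Thm:Main} gives, for every $k$,
\begin{align*}
H_\nu(\rho_k) \le e^{-\alpha \step k} H_\nu(\rho_0) + \frac{8 \step n L^2}{\alpha}.
\end{align*}
Letting $k \to \infty$, the first term vanishes, and since $\rho_k \to \nu_\step$ (weakly, say) while KL divergence $\rho \mapsto H_\nu(\rho)$ is lower semicontinuous with respect to weak convergence, we obtain
\begin{align*}
H_\nu(\nu_\step) \le \liminf_{k\to\infty} H_\nu(\rho_k) \le \frac{8 n L^2 \step}{\alpha}.
\end{align*}
For the Wasserstein bound, I would invoke the fact recalled in Section~\ref{Sec:Review} (via Otto--Villani~\cite{OV00}) that LSI with constant $\alpha$ implies Talagrand's inequality~\eqref{Eq:T} with the same constant $\alpha$, so that $\frac{\alpha}{2} W_2(\nu,\nu_\step)^2 \le H_\nu(\nu_\step) \le \frac{8nL^2\step}{\alpha}$, which rearranges to $W_2(\nu,\nu_\step)^2 \le \frac{16 n L^2 \step}{\alpha^2}$.

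The main obstacle — and the only genuinely nontrivial point — is justifying the interchange of limit: namely that $H_\nu(\nu_\step) \le \liminf_k H_\nu(\rho_k)$. This requires knowing that $\rho_k$ converges to $\nu_\step$ in a mode compatible with lower semicontinuity of KL divergence (weak convergence suffices, as KL divergence is weakly lower semicontinuous by the Donsker--Varadhan variational formula). One should also verify that the bound is independent of the particular $\rho_0$ chosen, which is automatic since $\nu_\step$ is the unique stationary distribution and the right-hand side $\frac{8nL^2\step}{\alpha}$ does not depend on $\rho_0$. Everything else is routine: substituting $H_\nu(\rho_0)$'s decay and reading off the two stated inequalities.
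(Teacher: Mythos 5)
Your proof is correct and follows essentially the same route as the paper: the paper states Corollary~\ref{Cor:BiasLSI} as an immediate consequence of letting $k \to \infty$ in the bound of Theorem~\ref{Thm:Main}, and then applies Talagrand's inequality (which LSI implies with the same constant) to deduce the $W_2$ bound. You have correctly identified and filled in the one technical point that the paper leaves implicit — namely that passing to the limit requires $\rho_k \to \nu_\step$ in a mode under which $H_\nu(\cdot)$ is lower semicontinuous, for which weak convergence and the Donsker--Varadhan variational formula suffice.
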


\begin{remark}
If $f$ satisfies a third-order smoothness condition (without isoperimetry), then we can show a bound on the bias in relative Fisher information; see Section~\ref{Sec:BiasULA}.
\end{remark}

\section{Review of R\'enyi divergence along Langevin dynamics}
\label{Sec:Renyi}

In this section we review the definition of R\'enyi divergence and the exponential convergence of R\'enyi divergence along the Langevin dynamics under LSI.

\subsection{R\'enyi divergence}

R\'enyi divergence~\cite{R61} is a family of generalizations of KL divergence.
We refer to~\cite{VH14,BCG19} for basic properties of R\'enyi divergence.

For $q > 0$, $q \neq 1$, the {\bf R\'enyi divergence} of order $q$ of a probability distribution $\rho$ with respect to $\nu$ is
\begin{align}\label{Eq:RenyiDef}
R_{q,\nu}(\rho) = \frac{1}{q-1} \log F_{q,\nu}(\rho)
\end{align}
where
\begin{align}
F_{q,\nu}(\rho) = \E_\nu\left[\left(\frac{\rho}{\nu}\right)^q\right] = \int_{\R^n} \nu(x) \frac{\rho(x)^q}{\nu(x)^q} \, dx = \int_{\R^n} \frac{\rho(x)^q}{\nu(x)^{q-1}} dx.
\end{align}
R\'enyi divergence is the relative form of {\em R\'enyi entropy}~\cite{R61}: $H_q(\rho) = \frac{1}{q-1} \log \int \rho(x)^q \, dx$.
The case $q = 1$ is defined via limit, and recovers the KL divergence~\eqref{Eq:Hnu}:
\begin{align}\label{Eq:RenyiDef1}
R_{1,\nu}(\rho) = \lim_{q \to 1} R_{q,\nu}(\rho) = \E_\nu\left[\frac{\rho}{\nu} \log \frac{\rho}{\nu}\right] = \E_\rho\left[\log \frac{\rho}{\nu}\right] = H_\nu(\rho).
\end{align}
R\'enyi divergence has the property that $R_{q,\nu}(\rho) \ge 0$ for all $\rho$, and $R_{q,\nu}(\rho) = 0$ if and only if $\rho = \nu$.
Furthermore, the map $q \mapsto R_{q,\nu}(\rho)$ is increasing (see Section~\ref{App:RenyiProperties}).
Therefore, R\'enyi divergence provides an alternative measure of ``distance'' of $\rho$ from $\nu$, which becomes stronger as $q$ increases.
In particular, $R_{\infty,\nu}(\rho) = \log \left\|\frac{\rho}{\nu}\right\|_\infty = \log \sup_{x} \frac{\rho(x)}{\nu(x)}$ is finite if and only if $\rho$ is {\em warm} relative to $\nu$.
It is possible that $R_{q,\nu}(\rho) = \infty$ for large enough $q$, as the following example shows.

\begin{example}\label{Ex:GaussianRenyi}
Let $\rho = \N(0,\sigma^2 I)$ and $\nu = \N(0, \lambda^2 I)$.
If $\sigma^2 > \lambda^2$ and $q \ge \frac{\sigma^2}{\sigma^2-\lambda^2}$, then $R_{q,\nu}(\rho) = \infty$.
Otherwise, $R_{q,\nu}(\rho) = \frac{n}{2} \log \frac{\lambda^2}{\sigma^2} - \frac{n}{2(q-1)} \log\left(q - (q-1) \frac{\sigma^2}{\lambda^2}\right)$.
\end{example}

Analogous to Lemma~\ref{Lem:KLEstimate}, we have the following estimate of the R\'enyi divergence of a Gaussian.

\begin{lemma}\label{Lem:GaussianRenyi}
Suppose $\nu = e^{-f}$ is $L$-smooth.
Let $\rho = \N(x^\ast, \frac{1}{L} I)$ where $x^\ast$ is a stationary point of $f$.
Then for all $q \ge 1$, $R_{q,\nu}(\rho) \le f(x^\ast) + \frac{n}{2} \log \frac{L}{2\pi}$.
\end{lemma}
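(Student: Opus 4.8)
\textbf{Proof plan for Lemma~\ref{Lem:GaussianRenyi}.}
The plan is to reduce the Rényi bound to the KL bound already established in Lemma~\ref{Lem:KLEstimate}, exploiting that $\rho$ is a \emph{narrow} Gaussian (covariance $\frac{1}{L}I$) relative to any distribution whose log-density has Hessian bounded by $LI$. First I would write out $F_{q,\nu}(\rho) = \int \rho^q \nu^{1-q} \, dx$ explicitly. Since $f$ is $L$-smooth and $x^\ast$ is a stationary point, Taylor expansion with integral remainder gives the two-sided quadratic bound
\begin{align*}
f(x^\ast) - \tfrac{L}{2}\|x - x^\ast\|^2 \le f(x) \le f(x^\ast) + \tfrac{L}{2}\|x - x^\ast\|^2
\end{align*}
for all $x$, i.e.\ $\nu(x) = e^{-f(x)}$ is sandwiched between two unnormalized Gaussians centered at $x^\ast$ with covariance $\frac{1}{L}I$. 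The key point is that the \emph{upper} bound $\nu^{1-q} \le e^{-(q-1)f(x^\ast)} e^{(q-1)\frac{L}{2}\|x-x^\ast\|^2}$ (valid since $q \ge 1$) combines with $\rho^q$ in exactly the borderline-integrable regime.

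Concretely, with $\rho(x) = \left(\frac{L}{2\pi}\right)^{n/2} e^{-\frac{L}{2}\|x-x^\ast\|^2}$, the integrand is bounded by
\begin{align*}
\rho(x)^q \nu(x)^{1-q} \le \left(\tfrac{L}{2\pi}\right)^{nq/2} e^{-(q-1)f(x^\ast)} \, e^{-\frac{qL}{2}\|x-x^\ast\|^2 + (q-1)\frac{L}{2}\|x-x^\ast\|^2} = \left(\tfrac{L}{2\pi}\right)^{nq/2} e^{-(q-1)f(x^\ast)} e^{-\frac{L}{2}\|x-x^\ast\|^2}.
\end{align*}
Integrating the Gaussian $e^{-\frac{L}{2}\|x-x^\ast\|^2}$ over $\R^n$ gives $\left(\frac{2\pi}{L}\right)^{n/2}$, so
\begin{align*}
F_{q,\nu}(\rho) \le \left(\tfrac{L}{2\pi}\right)^{nq/2} \left(\tfrac{2\pi}{L}\right)^{n/2} e^{-(q-1)f(x^\ast)} = \left(\tfrac{L}{2\pi}\right)^{n(q-1)/2} e^{-(q-1)f(x^\ast)}.
\end{align*}
Taking $\frac{1}{q-1}\log(\cdot)$ of both sides yields $R_{q,\nu}(\rho) \le -f(x^\ast) \cdot(-1)\ldots$ — more carefully, $R_{q,\nu}(\rho) \le \frac{1}{q-1}\big[\frac{n(q-1)}{2}\log\frac{L}{2\pi} - (q-1)f(x^\ast)\big] = \frac{n}{2}\log\frac{L}{2\pi} - f(x^\ast)$. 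This is even slightly stronger than claimed when $f(x^\ast) \ge 0$; to get the stated bound $f(x^\ast) + \frac{n}{2}\log\frac{L}{2\pi}$ without sign assumptions on $f(x^\ast)$, I would instead use the \emph{lower} quadratic bound on $f$ inside $\nu^{-(q-1)}$ only on the region where it helps, or simply note that the normalization $\int \nu = 1$ forces $f(x^\ast) \ge -\frac{n}{2}\log\frac{L}{2\pi}$ (from $1 = \int e^{-f} \le e^{-f(x^\ast)}\int e^{\frac{L}{2}\|x-x^\ast\|^2}$... which diverges), so one should compare against the normalized lower-bound Gaussian instead. The cleanest route: bound $\nu(x) \ge e^{-f(x^\ast)} e^{-\frac{L}{2}\|x-x^\ast\|^2}$ is not normalizable-useful; rather use $f(x)\ge f(x^\ast)$ nowhere directly, and instead mirror the proof of Lemma~\ref{Lem:KLEstimate} verbatim, replacing $\log$ with the $q$-th moment.

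\textbf{Main obstacle.} The only subtlety is handling the constant $f(x^\ast)$ with the correct sign and matching the statement of Lemma~\ref{Lem:KLEstimate} exactly; the integrability (which is the genuinely interesting feature of Rényi divergence, cf.\ Example~\ref{Ex:GaussianRenyi}) is automatic here precisely because $\rho$'s covariance $\frac{1}{L}I$ is no larger than the "curvature scale" of $\nu$, so the threshold $q \ge \frac{\sigma^2}{\sigma^2-\lambda^2}$ from the Gaussian example is never triggered — we stay on the finite side for all $q \ge 1$. I expect the whole argument to be a three-line Gaussian integral once the two-sided Taylor bound on $f$ is in place; the write-up should simply track constants carefully so the final bound reads $R_{q,\nu}(\rho) \le f(x^\ast) + \frac{n}{2}\log\frac{L}{2\pi}$, identical to the $q=1$ case, confirming the bound is uniform in $q$.
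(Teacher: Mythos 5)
Your approach is exactly the paper's: bound $f(x) \le f(x^\ast) + \frac{L}{2}\|x-x^\ast\|^2$ using $L$-smoothness and stationarity, substitute into $F_{q,\nu}(\rho) = \int \rho^q \nu^{1-q}\,dx$, and evaluate the Gaussian integral, noting that the choice $\sigma^2 = 1/L$ keeps the combined exponent negative for every $q \ge 1$. However, there is a sign error in your key pointwise bound that you then chased in the wrong direction. We have $\nu^{1-q} = e^{-(1-q)f} = e^{(q-1)f}$, and since $q-1 \ge 0$, the upper bound on $f$ gives
\begin{align*}
\nu(x)^{1-q} \le e^{+(q-1)f(x^\ast)}\, e^{(q-1)\frac{L}{2}\|x-x^\ast\|^2},
\end{align*}
with a \emph{plus} sign in front of $f(x^\ast)$, not the minus sign you wrote. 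Carrying the corrected sign through, the Gaussian integral yields
\begin{align*}
F_{q,\nu}(\rho) \le \left(\tfrac{L}{2\pi}\right)^{\frac{n(q-1)}{2}} e^{(q-1) f(x^\ast)},
\qquad\text{hence}\qquad
R_{q,\nu}(\rho) \le f(x^\ast) + \tfrac{n}{2}\log\tfrac{L}{2\pi},
\end{align*}
which is exactly the stated bound. Everything in your last paragraph-and-a-half, beginning with ``This is even slightly stronger than claimed when $f(x^\ast)\ge 0$,'' is trying to repair an error that isn't there: the $+f(x^\ast)$ in the statement comes out automatically, with no case analysis on its sign, no appeal to normalization of $\nu$, and no need to switch to the lower quadratic bound. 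The observation in your ``Main obstacle'' note about integrability (the exponent $-\frac{qL}{2}+(q-1)\frac{L}{2} = -\frac{L}{2}$ stays strictly negative, so the Rényi divergence is finite for all $q\ge 1$) is correct and is the same point the paper makes via the condition $\frac{q}{\sigma^2} - (q-1)L = L > 0$.
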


We provide the proof of Lemma~\ref{Lem:GaussianRenyi} in Section~\ref{App:GaussianRenyi}.

\subsubsection{Log-Sobolev inequality}

For $q > 0$, we define the {\bf R\'enyi information} of order $q$ of $\rho$ with respect to $\nu$ as
\begin{align}\label{Eq:InfoDef}
G_{q,\nu}(\rho) 
= \E_\nu\left[\left(\frac{\rho}{\nu}\right)^q \left\|\nabla \log \frac{\rho}{\nu} \right\|^2\right] 
= \E_\nu\left[\left(\frac{\rho}{\nu}\right)^{q-2} \left\|\nabla \frac{\rho}{\nu} \right\|^2\right] 
= \frac{4}{q^2} \E_\nu\left[\left\|\nabla \left(\frac{\rho}{\nu}\right)^{\frac{q}{2}}\right\|^2\right].
\end{align}
The case $q=1$ recovers relative Fisher information~\eqref{Eq:Jnu}:
$G_{1,\nu}(\rho) = \E_\nu\left[\frac{\rho}{\nu} \left\|\nabla \log \frac{\rho}{\nu} \right\|^2\right] 
= J_\nu(\rho).$
We have the following relation under log-Sobolev inequality.
Note that the case $q=1$ recovers LSI in the form~\eqref{Eq:LSI-KL} involving KL divergence and relative Fisher information. 

\begin{lemma}\label{Lem:RenyiLSI}
Suppose $\nu$ satisfies LSI with constant $\alpha > 0$.
Let $q \ge 1$.
For all $\rho$,
\begin{align}\label{Eq:RenyiLSI}
\frac{G_{q,\nu}(\rho)}{F_{q,\nu}(\rho)} \ge \frac{2\alpha}{q^2} R_{q,\nu}(\rho).
\end{align}
\end{lemma}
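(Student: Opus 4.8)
The plan is to apply the log-Sobolev inequality~\eqref{Eq:LSI} to the test function $g = (\rho/\nu)^{q/2}$ and read off what each of its three terms becomes. Write $u := \rho/\nu$, so $g^2 = u^q$. (Throughout I assume $F_{q,\nu}(\rho) = \E_\nu[u^q] < \infty$, so that $g$ is admissible in~\eqref{Eq:LSI}, i.e.\ $\E_\nu[g^2] < \infty$; the degenerate case can be handled by applying~\eqref{Eq:LSI} to the truncations $\min\{g, M\}$ and letting $M \to \infty$.) Then $\E_\nu[g^2] = F_{q,\nu}(\rho)$, and by the rightmost expression in~\eqref{Eq:InfoDef}, $\E_\nu[\|\nabla g\|^2] = \tfrac{q^2}{4}\, G_{q,\nu}(\rho)$. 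Hence~\eqref{Eq:LSI} reads
\[
\E_\nu[u^q \log u^q] - F_{q,\nu}(\rho)\log F_{q,\nu}(\rho) \;\le\; \frac{q^2}{2\alpha}\, G_{q,\nu}(\rho).
\]
It then remains to bound the left-hand side---call it $\mathrm{Ent}_\nu(u^q)$---below by $F_{q,\nu}(\rho)\, R_{q,\nu}(\rho)$; chaining this with the displayed inequality and dividing by $F_{q,\nu}(\rho)$ yields~\eqref{Eq:RenyiLSI}.

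For the lower bound I would introduce the tilted probability density $\tilde\rho := u^q\,\nu / F_{q,\nu}(\rho)$. A direct computation gives $\mathrm{Ent}_\nu(u^q) = F_{q,\nu}(\rho)\, H_\nu(\tilde\rho)$, and since $\log(\tilde\rho/\nu) = q\log u - \log F_{q,\nu}(\rho)$ one gets $H_\nu(\tilde\rho) = q\,\E_{\tilde\rho}[\log u] - \log F_{q,\nu}(\rho)$. Take $q > 1$ (the case $q = 1$ is exactly LSI in the form~\eqref{Eq:LSI-KL}, recovered by continuity). Since $\E_\nu[u] = 1$, we have $\E_{\tilde\rho}[u^{-(q-1)}] = \tfrac{1}{F_{q,\nu}(\rho)}\,\E_\nu[u^q u^{-(q-1)}] = \tfrac{1}{F_{q,\nu}(\rho)}$, so Jensen's inequality for the concave function $\log$ under $\tilde\rho$ gives
\[
-(q-1)\,\E_{\tilde\rho}[\log u] = \E_{\tilde\rho}\!\left[\log u^{-(q-1)}\right] \;\le\; \log \E_{\tilde\rho}\!\left[u^{-(q-1)}\right] = -\log F_{q,\nu}(\rho),
\]
hence $\E_{\tilde\rho}[\log u] \ge \tfrac{1}{q-1}\log F_{q,\nu}(\rho) = R_{q,\nu}(\rho)$. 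Substituting, $H_\nu(\tilde\rho) \ge q R_{q,\nu}(\rho) - (q-1) R_{q,\nu}(\rho) = R_{q,\nu}(\rho)$, so $\mathrm{Ent}_\nu(u^q) = F_{q,\nu}(\rho)\, H_\nu(\tilde\rho) \ge F_{q,\nu}(\rho)\, R_{q,\nu}(\rho)$, as needed.

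The proof is short, and the only genuinely content-bearing step is this last lower bound $\mathrm{Ent}_\nu(u^q) \ge F_{q,\nu}(\rho)\, R_{q,\nu}(\rho)$ (together with the finiteness bookkeeping above); everything else is substitution. An equivalent route that avoids $\tilde\rho$: with $\Lambda(s) := \log \E_\nu[u^s]$ one has $\mathrm{Ent}_\nu(u^q) = e^{\Lambda(q)}\big(q\Lambda'(q) - \Lambda(q)\big)$ and $R_{q,\nu}(\rho) = \Lambda(q)/(q-1)$, so the desired bound is equivalent to $(q-1)\Lambda'(q) \ge \Lambda(q)$; this follows at once since $\Lambda$ is convex (Hölder) with $\Lambda(1) = 0$, whose tangent line at $q$ gives $0 = \Lambda(1) \ge \Lambda(q) + \Lambda'(q)(1-q)$---equivalently, it is just the monotonicity of $s \mapsto R_{s,\nu}(\rho)$ recalled in Section~\ref{App:RenyiProperties}.
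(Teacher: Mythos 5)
Your proposal is correct and takes essentially the same approach as the paper: both apply the LSI~\eqref{Eq:LSI} to $g = (\rho/\nu)^{q/2}$ and reduce the remaining step to the bound $\mathrm{Ent}_\nu\!\left((\rho/\nu)^q\right) \ge F_{q,\nu}(\rho)\,R_{q,\nu}(\rho)$, which is the same as $(q-1)\Lambda'(q) \ge \Lambda(q)$, i.e.\ the monotonicity of $q\mapsto R_{q,\nu}(\rho)$ (Lemma~\ref{Lem:RenyiIncreasing}). Your tilted-measure/Jensen derivation of that last step is a presentational variant of the paper's argument via $q\tfrac{\partial}{\partial q}\log F_{q,\nu}(\rho) - \log F_{q,\nu}(\rho) \ge R_{q,\nu}(\rho)$, and you correctly note the equivalence yourself.
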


We provide the proof of Lemma~\ref{Lem:RenyiLSI} in Section~\ref{App:RenyiLSI}.

\subsection{Langevin dynamics}
\label{Sec:RenyiLDRate}

Along the Langevin dynamics~\eqref{Eq:LD} for $\nu$, we can compute the rate of change of the R\'enyi divergence.

\begin{lemma}\label{Lem:RenyiLD}
For all $q > 0$, along the Langevin dynamics~\eqref{Eq:LD},
\begin{align}\label{Eq:RenyiLD}
\frac{d}{dt} R_{q,\nu}(\rho_t) = -q \frac{G_{q,\nu}(\rho_t)}{F_{q,\nu}(\rho_t)}.
\end{align}
\end{lemma}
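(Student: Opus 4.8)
The plan is to differentiate $R_{q,\nu}(\rho_t) = \frac{1}{q-1}\log F_{q,\nu}(\rho_t)$ directly, so that $\frac{d}{dt}R_{q,\nu}(\rho_t) = \frac{1}{q-1}\cdot\frac{1}{F_{q,\nu}(\rho_t)}\cdot\frac{d}{dt}F_{q,\nu}(\rho_t)$, and therefore the whole computation reduces to evaluating $\frac{d}{dt}F_{q,\nu}(\rho_t)$ where $F_{q,\nu}(\rho_t) = \int_{\R^n} \rho_t^q \nu^{1-q}\,dx$. First I would write $\frac{d}{dt}F_{q,\nu}(\rho_t) = \int_{\R^n} q\,\rho_t^{q-1}\nu^{1-q}\,\part{\rho_t}{t}\,dx$ (differentiating under the integral, justified by the smoothness/decay assumptions on the densities), and then substitute the Fokker--Planck equation~\eqref{Eq:FP} in the divergence form $\part{\rho_t}{t} = \nabla\cdot\big(\rho_t\nabla\log\frac{\rho_t}{\nu}\big)$.

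Next I would integrate by parts to move the divergence onto the factor $q\,\rho_t^{q-1}\nu^{1-q} = q\,(\rho_t/\nu)^{q-1}$, discarding the boundary term (again using decay of the densities at infinity). This yields $\frac{d}{dt}F_{q,\nu}(\rho_t) = -q\int_{\R^n} \nabla\!\big((\rho_t/\nu)^{q-1}\big)\cdot\nabla\log\frac{\rho_t}{\nu}\;\rho_t\,dx$. Computing $\nabla\!\big((\rho_t/\nu)^{q-1}\big) = (q-1)(\rho_t/\nu)^{q-2}\nabla(\rho_t/\nu) = (q-1)(\rho_t/\nu)^{q-1}\nabla\log\frac{\rho_t}{\nu}$, the integrand becomes $(q-1)(\rho_t/\nu)^{q-1}\big\|\nabla\log\frac{\rho_t}{\nu}\big\|^2\rho_t = (q-1)\,\nu\,(\rho_t/\nu)^{q}\big\|\nabla\log\frac{\rho_t}{\nu}\big\|^2$, so that $\frac{d}{dt}F_{q,\nu}(\rho_t) = -q(q-1)\,\E_\nu\!\big[(\rho_t/\nu)^q\|\nabla\log\frac{\rho_t}{\nu}\|^2\big] = -q(q-1)\,G_{q,\nu}(\rho_t)$ by the definition~\eqref{Eq:InfoDef}. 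Dividing by $(q-1)F_{q,\nu}(\rho_t)$ gives exactly~\eqref{Eq:RenyiLD}, and the factor $q-1$ cancels cleanly, so the formula holds for all $q>0$, $q\neq 1$; the case $q=1$ follows by taking the limit and matches Lemma~\ref{Lem:Hdot} since $G_{1,\nu} = J_\nu$ and $F_{1,\nu}\equiv 1$.

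The main obstacle is not the algebra but the analytic justification: differentiating under the integral sign and discarding the boundary term in the integration by parts both require control on $\rho_t$ and its derivatives, and when $q$ is large the integrand involves high powers $(\rho_t/\nu)^q$, which can blow up (indeed $F_{q,\nu}$ itself may be infinite, cf.\ Example~\ref{Ex:GaussianRenyi}). I would handle this by working on the set where $F_{q,\nu}(\rho_t)<\infty$ and invoking the standing assumption that $\rho,\nu$ have full support and smooth, suitably decaying densities, so that all the manipulations are legitimate; a fully rigorous treatment would proceed via a truncation/approximation argument, but at the level of this review I would simply note that these are the standard regularity conditions under which the formal computation is valid.
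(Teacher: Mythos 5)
Your proposal is correct and follows essentially the same approach as the paper: differentiate $F_{q,\nu}(\rho_t)$ under the integral, substitute the Fokker--Planck equation in divergence form, integrate by parts, compute $\nabla(\rho_t/\nu)^{q-1}$, and identify the result with $-q(q-1)G_{q,\nu}(\rho_t)$ before dividing by $(q-1)F_{q,\nu}(\rho_t)$. The only cosmetic difference is that you express the gradient factor as $(q-1)(\rho_t/\nu)^{q-1}\nabla\log(\rho_t/\nu)$ while the paper uses $(q-1)(\rho_t/\nu)^{q-2}\nabla(\rho_t/\nu)$, which are the same thing; your extra remark on regularity is sensible and consistent with the paper's standing assumptions.
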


We provide the proof of Lemma~\ref{Lem:RenyiLD} in Section~\ref{App:RenyiLD}.
In particular, $\frac{d}{dt} R_{q,\nu}(\rho_t) \le 0$, so R\'enyi divergence is always decreasing along the Langevin dynamics.
Furthermore, analogous to how the Langevin dynamics is the gradient flow of KL divergence under the Wasserstein metric, one can also show that the Langevin dynamics is the  the gradient flow of R\'enyi divergence with respect to a suitably defined metric (which depends on the target distribution $\nu$) on the space of distributions; see~\cite{CLL18}.

\subsubsection{Convergence of R\'enyi divergence along Langevin dynamics under LSI}

When $\nu$ satisfies LSI, R\'enyi divergence converges exponentially fast along the Langevin dynamics.
Note the case $q=1$ recovers the exponential convergence rate of KL divergence from Theorem~\ref{Thm:HRate-LSI}.

\begin{reptheorem}{Thm:RenyiLD}
Suppose $\nu$ satisfies LSI with constant $\alpha > 0$.
Let $q \ge 1$.
Along the Langevin dynamics~\eqref{Eq:LD},
\begin{align}\label{Eq:RenyiLDRate}
R_{q,\nu}(\rho_t) \le e^{-\frac{2\alpha t}{q}} R_{q,\nu}(\rho_0).
\end{align}
\end{reptheorem}

We provide the proof of Theorem~\ref{Thm:RenyiLD} in Section~\ref{App:ProofThmRenyiLD}.
Theorem~\ref{Thm:RenyiLD} shows that if the initial R\'enyi divergence is finite, then it converges exponentially fast.
However, even if initially the R\'enyi divergence of some order is infinite, it will be eventually finite along the Langevin dynamics, after which time Theorem~\ref{Thm:RenyiLD} applies.
This is because when $\nu$ satisfies LSI, the Langevin dynamics satisfies a {\em hypercontractivity} property~\cite{G75,BGL01,Vil03}; see Section~\ref{App:Hypercontractivity}.
Furthermore, as shown in~\cite{CLL18}, we can combine the exponential convergence rate above with the hypercontractivity property to improve the exponential rate to be $2\alpha$, independent of $q$, at the cost of some initial waiting time; here we leave the rate as above for simplicity.

\begin{remark}\label{Rem:PoincareLangevin}
When $\nu$ satisfies Poincar\'e inequality, we can still prove the convergence of R\'enyi divergence along the Langevin dynamics.
However, in this case, R\'enyi divergence initially decreases linearly, then exponentially once it is less than $1$.
See Section~\ref{Sec:PoincareLangevin}.
\end{remark}

\section{R\'enyi divergence along ULA}
\label{Sec:RenyiULA}

In this section we prove a convergence guarantee for R\'enyi divergence along ULA under the assumption that the biased limit satisfies LSI.

As before, let $\nu = e^{-f}$, and let $\nu_\step$ denote the biased limit of ULA~\eqref{Eq:ULA} with step size $\step > 0$.
We first note that the asymptotic bias $R_{q,\nu}(\nu_\step)$ may be infinite for large enough $q$. 

\begin{example}\label{Ex:ULAGaussian2}
As in Examples~\ref{Ex:ULAGaussian} and~\ref{Ex:GaussianRenyi}, let
$\nu = \N(0,\frac{1}{\alpha}I)$, so $\nu_\step = \N\left(0,\frac{1}{\alpha\left(1-\frac{\step\alpha}{2}\right)}\right)$.
The bias is
\begin{align*}
R_{q,\nu}(\nu_\step) =
\begin{cases}
  \frac{n}{2(q-1)} \left(q\log\left(1-\frac{\step\alpha}{2}\right)-\log\left(1-\frac{q\step\alpha}{2}\right)\right) ~ & \text{ if } 1 < q < \frac{2}{\step\alpha}, \\
  \infty & \text{ if } q \ge \frac{2}{\step\alpha}.
\end{cases}
\end{align*}
For $1 < q < \frac{2}{\step \alpha}$, we can bound 
$R_{q,\nu}(\nu_\step) \le \frac{n\alpha^2q^2\step^2}{8(q-1)\left(1-\frac{q\step\alpha}{2}\right)}.$
\end{example}

Thus, for each fixed $q > 1$, there is an asymptotic bias $R_{q,\nu}(\nu_\step)$ which is finite for small $\step > 0$.
In Example~\ref{Ex:ULAGaussian2}, we have $R_{q,\nu}(\nu_\step) = O(\step^2)$.

\subsection{Decomposition of R\'enyi divergence}

For order $q > 1$, we have the following decomposition of R\'enyi divergence.

\begin{lemma}\label{Lem:RenyiDecomp}
Let $q > 1$.
For all probability distribution $\rho$,
\begin{align}\label{Eq:RenyiDecomp}
R_{q,\nu}(\rho) \le \left(\frac{q-\frac{1}{2}}{q-1}\right) R_{2q,\nu_\step}(\rho) + R_{2q-1,\nu}(\nu_\step).
\end{align}
\end{lemma}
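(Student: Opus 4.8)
The plan is to bound $R_{q,\nu}(\rho)$ by inserting the biased limit $\nu_\epsilon$ as an intermediate reference measure, so that the relevant integral factors into a term measuring $\rho$ against $\nu_\epsilon$ and a term measuring $\nu_\epsilon$ against $\nu$. Concretely, write
\begin{align*}
F_{q,\nu}(\rho) = \int \frac{\rho^q}{\nu^{q-1}}\,dx = \int \left(\frac{\rho}{\nu_\epsilon}\right)^q \left(\frac{\nu_\epsilon}{\nu}\right)^{q-1} \nu_\epsilon\,dx,
\end{align*}
and then apply H\"older's inequality to split the two factors $(\rho/\nu_\epsilon)^q$ and $(\nu_\epsilon/\nu)^{q-1}$ against the base measure $\nu_\epsilon$. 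The exponents should be chosen so that the first factor is raised to a power that produces $F_{2q,\nu_\epsilon}(\rho)$ (hence the conjugate exponent $2$ on that factor, i.e. $\int (\rho/\nu_\epsilon)^{2q}\nu_\epsilon$), which forces the conjugate exponent $2$ on the second factor as well, giving $\int (\nu_\epsilon/\nu)^{2(q-1)}\nu_\epsilon = \int \nu_\epsilon^{2q-1}/\nu^{2q-2}\,dx = F_{2q-1,\nu}(\nu_\epsilon)$. That is the source of the orders $2q$ and $2q-1$ appearing in the statement.

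Carrying this out: H\"older with the pair of exponents $(2,2)$ on base measure $\nu_\epsilon$ gives
\begin{align*}
F_{q,\nu}(\rho) \le \left(\int \left(\frac{\rho}{\nu_\epsilon}\right)^{2q}\nu_\epsilon\,dx\right)^{1/2}\left(\int \left(\frac{\nu_\epsilon}{\nu}\right)^{2q-2}\nu_\epsilon\,dx\right)^{1/2} = F_{2q,\nu_\epsilon}(\rho)^{1/2}\,F_{2q-1,\nu}(\nu_\epsilon)^{1/2}.
\end{align*}
Now take $\log$, divide by $q-1$, and translate back to R\'enyi divergences using the definition $R_{q,\nu}(\rho)=\frac{1}{q-1}\log F_{q,\nu}(\rho)$. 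The left side becomes $R_{q,\nu}(\rho)$. On the right, $\frac{1}{2(q-1)}\log F_{2q,\nu_\epsilon}(\rho) = \frac{2q-1}{2(q-1)}\cdot\frac{1}{2q-1}\log F_{2q,\nu_\epsilon}(\rho) = \frac{2q-1}{2(q-1)}R_{2q,\nu_\epsilon}(\rho) = \left(\frac{q-\frac12}{q-1}\right)R_{2q,\nu_\epsilon}(\rho)$, and $\frac{1}{2(q-1)}\log F_{2q-1,\nu}(\nu_\epsilon) = \frac{1}{2(q-1)}\log F_{2q-1,\nu}(\nu_\epsilon) = R_{2q-1,\nu}(\nu_\epsilon)$ since the normalizing factor for order $2q-1$ is $\frac{1}{(2q-1)-1}=\frac{1}{2(q-1)}$. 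This yields exactly inequality~\eqref{Eq:RenyiDecomp}.

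I do not anticipate a serious obstacle; the only points requiring a little care are bookkeeping of the H\"older exponents and the normalizing constants $\frac{1}{q-1}$ attached to each R\'enyi order, and checking that the inequality is vacuous (both sides $+\infty$) when $F_{2q,\nu_\epsilon}(\rho)$ or $F_{2q-1,\nu}(\nu_\epsilon)$ is infinite, so no integrability assumption is lost. One should also note that since $q>1$ all exponents $2q$, $2q-1$, $2q-2$ exceed $1$ (indeed $2q-2>0$), so the R\'enyi divergences on the right are well-defined, and the prefactor $\frac{q-1/2}{q-1}>1$ decreasing to $1$ as $q\to\infty$ is consistent with the data-processing-type intuition that the $\nu_\epsilon$-term should carry essentially all the weight for large $q$.
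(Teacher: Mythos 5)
Your proof is correct and is essentially identical to the paper's: both insert $\nu_\epsilon$ as an intermediate reference measure, apply Cauchy--Schwarz (your H\"older with exponents $(2,2)$), and convert the resulting inequality on $F$-functionals into the stated bound on R\'enyi divergences. The bookkeeping of exponents and normalizing constants checks out.
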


We provide the proof of Lemma~\ref{Lem:RenyiDecomp} in Section~\ref{App:RenyiDecomp}.
The first term in the bound above is the R\'enyi divergence with respect to the biased limit, which converges exponentially fast under LSI assumption (see Lemma~\ref{Lem:RenyiRateLSI}).
The second term in~\eqref{Eq:RenyiDecomp} is the asymptotic bias in R\'enyi divergence.

\subsection{Rapid convergence of R\'enyi divergence to biased limit under LSI}
\label{Sec:RenyiBiased}

We show that R\'enyi divergence with respect to the biased limit $\nu_\step$ converges exponentially fast along ULA, assuming $\nu_\step$ itself satisfies LSI.

\begin{assumption}\label{As:RenyiLSI}
The probability distribution $\nu_\step$ satisfies LSI with a constant $\beta \equiv \beta_\step > 0$.
\end{assumption}

We can verify Assumption~\ref{As:RenyiLSI} in the Gaussian case.
We can also verify Assumption~\ref{As:RenyiLSI} when $\nu$ is smooth and strongly log-concave; see Section~\ref{Sec:BiasLSI}.
However, it is unclear how to verify Assumption~\ref{As:RenyiLSI} in general.
One might hope to prove that if $\nu$ satisfies LSI, then Assumption~\ref{As:RenyiLSI} holds.

\begin{example}
Let $\nu = \N(0,\frac{1}{\alpha}I)$,
so $\nu_\step = \N\Big(0,\frac{1}{\alpha\left(1-\frac{\step\alpha}{2}\right)} I \Big)$,
which is strongly log-concave (and hence satisfies LSI) with parameter $\beta = \alpha\left(1-\frac{\step\alpha}{2}\right)$.
In particular, $\beta \ge \frac{\alpha}{2}$ for $\step \le \frac{1}{\alpha}$.
\end{example}

Under Assumption~\ref{As:RenyiLSI}, we can prove an exponential convergence rate to the biased limit $\nu_\step$.

\begin{lemma}\label{Lem:RenyiRateLSI}
Assume Assumption~\ref{As:RenyiLSI}.
Suppose $\nu = e^{-f}$ is $L$-smooth, and let $0 < \step \le \min\left\{\frac{1}{3L}, \frac{1}{9\beta}\right\}$.
For $q \ge 1$,
along ULA~\eqref{Eq:ULA},
\begin{align}\label{Eq:RenyiRateLSI}
R_{q,\nu_\step}(\rho_k) \le e^{-\frac{\beta \step k}{q}} R_{q,\nu_\step}(\rho_0).
\end{align}
\end{lemma}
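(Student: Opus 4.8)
\textbf{Proof proposal for Lemma~\ref{Lem:RenyiRateLSI}.}

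The plan is to mirror the structure of the KL-divergence analysis (Lemma~\ref{Lem:OneStep} and Theorem~\ref{Thm:Main}), but now tracking the functional $F_{q,\nu_\epsilon}(\rho)$ and its logarithm rather than $H_\nu$. The key conceptual point is that one step of ULA~\eqref{Eq:ULA} factors as the composition of (i) a deterministic gradient-descent step $x \mapsto x - \epsilon \nabla f(x)$, followed by (ii) the time-$\epsilon$ heat flow (convolution with $\N(0,2\epsilon I)$), which is exactly the Langevin dynamics for the constant potential. I would introduce the continuous-time interpolation: let $\rho_{k}$ be the law at the start of the step, let $\rho_{k}^{-}$ be the pushforward under the deterministic map, and then run the heat semigroup for time $t \in [0,\epsilon]$ to obtain a path $\rho_{k,t}$ with $\rho_{k,0} = \rho_k^{-}$ and $\rho_{k,\epsilon} = \rho_{k+1}$. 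Along this heat-flow path I would compute $\frac{d}{dt} R_{q,\nu_\epsilon}(\rho_{k,t})$ using a Fokker-Planck-type identity analogous to Lemma~\ref{Lem:RenyiLD}, except the drift is $-\nabla \log \nu_\epsilon$ evaluated against a process whose actual stationarity is with respect to Lebesgue measure, so there will be a cross term that has to be controlled.

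The main steps, in order, would be: First, establish that the deterministic gradient-descent step does not increase $R_{q,\nu_\epsilon}$ too much — or better, bound the change $R_{q,\nu_\epsilon}(\rho_k^-) - R_{q,\nu_\epsilon}(\rho_k)$ in terms of $\epsilon L$ and the R\'enyi information $G_{q,\nu_\epsilon}$; this uses $L$-smoothness of $f$ and the fact that $\nu_\epsilon$ inherits some regularity. Second, along the heat flow compute the dissipation: the dominant term should be $-q \frac{G_{q,\nu_\epsilon}(\rho_{k,t})}{F_{q,\nu_\epsilon}(\rho_{k,t})}$ (as in Lemma~\ref{Lem:RenyiLD}) plus an error term arising because the heat flow's stationary measure is Lebesgue, not $\nu_\epsilon$; this error should be controllable by $\epsilon$ times lower-order quantities via smoothness. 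Third, invoke Assumption~\ref{As:RenyiLSI} through Lemma~\ref{Lem:RenyiLSI}: $\frac{G_{q,\nu_\epsilon}(\rho)}{F_{q,\nu_\epsilon}(\rho)} \ge \frac{2\beta}{q^2} R_{q,\nu_\epsilon}(\rho)$, which converts the dissipation into a contraction $\frac{d}{dt} R_{q,\nu_\epsilon} \le -\frac{2\beta}{q} R_{q,\nu_\epsilon} + (\text{error})$. Fourth, integrate over $[0,\epsilon]$ and chain the deterministic step, absorbing the error terms using the step-size restriction $\epsilon \le \min\{\frac{1}{3L}, \frac{1}{9\beta}\}$; the factor of $2$ in the exponent gets halved somewhere in the bookkeeping (the statement has rate $\frac{\beta\epsilon}{q}$, not $\frac{2\beta\epsilon}{q}$), presumably because half the dissipation budget is spent cancelling the discretization error. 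Finally, iterate the one-step recursion $R_{q,\nu_\epsilon}(\rho_{k+1}) \le e^{-\beta\epsilon/q} R_{q,\nu_\epsilon}(\rho_k)$ over $k$ steps. Note there is genuinely no additive bias term here, unlike Theorem~\ref{Thm:Main}, because $\nu_\epsilon$ is \emph{exactly} the stationary distribution of ULA — this is the whole reason one measures against $\nu_\epsilon$ rather than $\nu$.

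The hard part will be controlling the discretization error in the R\'enyi dissipation along the heat-flow sub-step without picking up quantities that are not already on the right-hand side. In the KL case (Lemma~\ref{Lem:OneStep}) the analogous error is the $6\epsilon^2 n L^2$ term, which is harmless because KL converges to a \emph{biased} limit anyway. Here, since we want exact exponential decay to $\nu_\epsilon$ with no additive term, every error must be absorbed multiplicatively into the contraction factor; this forces a careful argument showing the error is bounded by a small constant times $\epsilon \cdot G_{q,\nu_\epsilon}/F_{q,\nu_\epsilon}$ (so it can be subtracted from the dissipation) plus possibly a small constant times $\epsilon \cdot R_{q,\nu_\epsilon}$ times a constant depending on $L$, $\beta$ — and this is precisely where $L$-smoothness of $f$ and the bounds $\epsilon \le \frac{1}{3L}$, $\epsilon \le \frac{1}{9\beta}$ enter. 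A secondary technical nuisance is that $\nu_\epsilon$ is not known in closed form, so one cannot directly assume bounds on $\nabla^2 \log \nu_\epsilon$; I would get around this either by working with $\nu_\epsilon$ only through Assumption~\ref{As:RenyiLSI} and the convolution structure $\nu_\epsilon = (\text{pushforward of } \nu_\epsilon \text{ under grad-descent}) * \N(0,2\epsilon I)$, which gives $\nu_\epsilon$ automatic smoothness of order $1/\epsilon$, or by phrasing the whole computation symmetrically in $\rho_k$ and $\nu_\epsilon$ so that only $\nabla f$ (not $\nabla \log \nu_\epsilon$) appears explicitly.
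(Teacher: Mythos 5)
There is a genuine gap, and it is conceptual rather than computational: your plan is organized around controlling a discretization error that, in the paper's argument, does not exist. You fix $\nu_\epsilon$ as the reference measure throughout the interpolation and then expect a cross term ``because the heat flow's stationary measure is Lebesgue, not $\nu_\epsilon$,'' which you hope to absorb multiplicatively into the contraction. This is the wrong frame. The paper's key observation is that $\nu_\epsilon$ is itself a fixed point of the ULA step, so it factors through the very same two operations: $\nu_\epsilon = T_\#\nu_\epsilon \ast \N(0,2\epsilon I)$ with $T(x) = x - \epsilon\nabla f(x)$. One therefore moves \emph{both} measures through both sub-steps simultaneously. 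The deterministic step is a differentiable bijection applied to both $\rho_k$ and $\nu_\epsilon$, and R\'enyi divergence is \emph{exactly invariant} under a common pushforward (Lemma~\ref{Lem:RenyiBij}) — not merely ``does not increase too much,'' as you propose to show. The heat-flow sub-step is a \emph{simultaneous} heat flow on $\tilde\rho_{k,t}$ and $\tilde\nu_{\epsilon,t}$, for which Lemma~\ref{Lem:RenyiHeat} gives the exact identity $\frac{d}{dt}R_{q,\tilde\nu_{\epsilon,t}}(\tilde\rho_{k,t}) = -q\,G_{q,\tilde\nu_{\epsilon,t}}(\tilde\rho_{k,t})/F_{q,\tilde\nu_{\epsilon,t}}(\tilde\rho_{k,t})$ — with no error term whatsoever, because the reference measure moves too. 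You gesture at this in your last paragraph (``phrasing the whole computation symmetrically in $\rho_k$ and $\nu_\epsilon$''), but the main body of the proposal is built on the fixed-reference picture and would not close.

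The second consequence of this misdiagnosis is your explanation of the lost factor of $2$ in the rate. You attribute it to ``half the dissipation budget being spent cancelling the discretization error.'' In fact the rate degrades because the reference measure that appears in the dissipation at time $t$ is $\tilde\nu_{\epsilon,t} = (T_\#\nu_\epsilon)\ast\N(0,2tI)$, not $\nu_\epsilon$, and its LSI constant is worse than $\beta$: the $(1+\epsilon L)$-Lipschitz pushforward degrades it to $\beta/(1+\epsilon L)^2$ (Lemma~\ref{Lem:LSILipschitz}) and the convolution further to $\bigl((1+\epsilon L)^2/\beta + 2t\bigr)^{-1}$ (Lemma~\ref{Lem:LSIGaussianConv}). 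The step-size restriction $\epsilon \le \min\{\tfrac{1}{3L},\tfrac{1}{9\beta}\}$ serves precisely and only to bound this degraded constant below by $\beta/2$; no error terms are being absorbed. So the recursion obtained is the clean $R_{q,\nu_\epsilon}(\rho_{k+1}) \le e^{-\beta\epsilon/q}R_{q,\nu_\epsilon}(\rho_k)$, exact at every sub-step, and the only bookkeeping is Lipschitz and convolution stability of LSI. To repair your proposal you would abandon the fixed-reference interpolation and error-absorption machinery entirely and instead prove: (i) the pushforward invariance of $R_q$ under a common differentiable bijection; (ii) the exact dissipation identity for simultaneous heat flow; (iii) the two LSI-stability lemmas; and (iv) the arithmetic that $\epsilon \le \min\{\tfrac{1}{3L},\tfrac{1}{9\beta}\}$ forces the time-$t$ LSI constant $\ge \beta/2$.
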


We provide the proof of Lemma~\ref{Lem:RenyiRateLSI} in Section~\ref{App:RenyiRateLSI}.
In the proof of Lemma~\ref{Lem:RenyiRateLSI}, we decompose each step of ULA as a sequence of two operations; see Figure~\ref{Fig:RenyiProof} for an illustration.
In the first part, we take a gradient step; this is a deterministic bijective map, so it preserves R\'enyi divergence.
In the second part, we add an independent Gaussian; this is the result of evolution along the heat flow, and we can derive a formula on the decrease in R\'enyi divergence (which is similar to the formula~\eqref{Eq:RenyiLD} along the Langevin dynamics;
see Section~\ref{App:RenyiRateLSI} for detail).

 \begin{figure}[h!t!]
 \centering
  \includegraphics[width=0.4\textwidth]{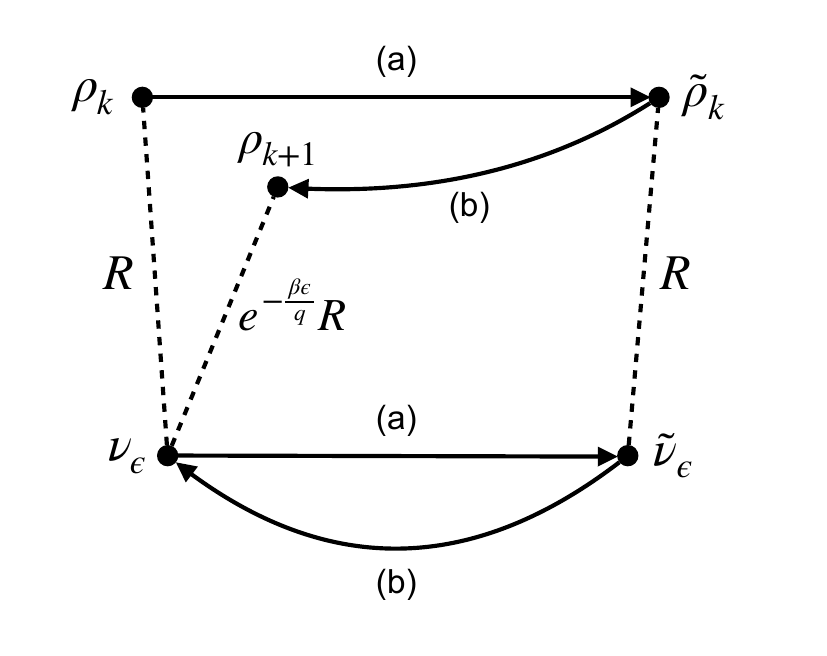}
 \caption{An illustration for the proof of Lemma~\ref{Lem:RenyiRateLSI}.
 We decompose each step of ULA into two operations: (a) a deterministic gradient step, and (b) an evolution along the heat flow. 
 If the current R\'enyi divergence is $R \equiv R_{q,\nu_\step}(\rho_k)$, then the gradient step (a) does not change the R\'enyi divergence: $R_{q,\tilde \nu_\step}(\tilde \rho_k) = R$, while the heat flow (b) decreases the R\'enyi divergence: $R_{q,\nu_\step}(\rho_{k+1}) \le e^{-\alpha \step} R$.
 }
 \label{Fig:RenyiProof}
 \end{figure}

\subsection{Convergence of R\'enyi divergence along ULA under LSI}
\label{Sec:RenyiULALSI}

We combine Lemma~\ref{Lem:RenyiDecomp} and Lemma~\ref{Lem:RenyiRateLSI} to obtain the following characterization of the convergence of R\'enyi divergence along ULA under LSI.

\begin{theorem}\label{Thm:RenyiRate}
Assume Assumption~\ref{As:RenyiLSI}.
Suppose $\nu = e^{-f}$ is $L$-smooth, and let $0 < \step \le \min\left\{\frac{1}{3L}, \frac{1}{9\beta}\right\}$.
Let $q > 1$, and suppose $R_{2q,\nu_\step}(\rho_0) < \infty$.
Then along ULA~\eqref{Eq:ULA},
\begin{align}
R_{q,\nu}(\rho_k) \le \left(\frac{q-\frac{1}{2}}{q-1}\right) R_{2q,\nu_\step}(\rho_0) e^{-\frac{\beta \step k}{2q}} + R_{2q-1,\nu}(\nu_\step).
\end{align}
\end{theorem}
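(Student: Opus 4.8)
The plan is to combine the two preceding results directly, with no new ingredients needed. The statement follows by plugging Lemma~\ref{Lem:RenyiRateLSI} into Lemma~\ref{Lem:RenyiDecomp} and then controlling the bias term by the growth function.

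First I would apply the decomposition Lemma~\ref{Lem:RenyiDecomp} to $\rho = \rho_k$, which gives
\[
R_{q,\nu}(\rho_k) \le \left(\frac{q-\frac{1}{2}}{q-1}\right) R_{2q,\nu_\epsilon}(\rho_k) + R_{2q-1,\nu}(\nu_\epsilon).
\]
Second, I would bound the first term on the right using Lemma~\ref{Lem:RenyiRateLSI} with order $2q$ in place of $q$ (note $2q \ge 1$ since $q > 1$, so the lemma applies, and the hypothesis $0 < \epsilon \le \min\{\frac{1}{3L}, \frac{1}{9\beta}\}$ is exactly the one carried over): this yields $R_{2q,\nu_\epsilon}(\rho_k) \le e^{-\frac{\beta\epsilon k}{2q}} R_{2q,\nu_\epsilon}(\rho_0)$, and here we use the assumption $R_{2q,\nu_\epsilon}(\rho_0) < \infty$ so the right-hand side is meaningful and finite. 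Third, I would bound the bias term $R_{2q-1,\nu}(\nu_\epsilon) \le g_{2q-1}(\epsilon)$ using the definition of the growth function (applied at order $2q-1 > 1$). Substituting both bounds gives exactly the claimed inequality.

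There is essentially no obstacle here; the theorem is a two-line corollary of the lemmas it cites, and the only things to check are bookkeeping: that the order $2q$ is a legitimate argument for Lemma~\ref{Lem:RenyiRateLSI} (it is, as $2q \ge 1$), that the order $2q-1$ is a legitimate argument for the growth-function bound (it is, as $2q - 1 > 1$ when $q > 1$), and that the step-size constraint is identical across all three invocations (it is, by the shared hypothesis on $\epsilon$). The coefficient $\frac{q - 1/2}{q-1}$ is simply carried through unchanged from Lemma~\ref{Lem:RenyiDecomp}. The only mild subtlety worth a remark is monotonicity of finiteness: since $R_{2q,\nu_\epsilon}(\rho_0) < \infty$ and $q \mapsto R_{q,\nu_\epsilon}$ is increasing, all the quantities appearing are finite, so no degenerate cases arise.
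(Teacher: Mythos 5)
Your proof is correct and follows the paper's own proof exactly: the paper likewise obtains the theorem by applying Lemma~\ref{Lem:RenyiDecomp} to $\rho_k$, bounding the resulting $R_{2q,\nu_\epsilon}(\rho_k)$ via Lemma~\ref{Lem:RenyiRateLSI} at order $2q$, and controlling $R_{2q-1,\nu}(\nu_\epsilon)$ by the growth function $g_{2q-1}(\epsilon)$. The bookkeeping you note (legitimacy of orders $2q$ and $2q-1$, matching step-size constraint) is all that is needed, and it checks out.
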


We provide the proof of Theorem~\ref{Thm:RenyiRate} in Section~\ref{App:ProofThmRenyiRate}.
For $\err > 0$, let $\gamma_q(\err)= \sup \{ \step > 0 \colon R_{q,\nu}(\nu_\step) \le \err \}$.
Theorem~\ref{Thm:RenyiRate} states that to achieve $R_{q,\nu}(\rho_k) \le \err$, it suffices to run ULA with step size $\step = \Theta\left(\min\left\{\frac{1}{L},\gamma_{2q-1}\left(\frac{\err}{2}\right)\right\}\right)$ for $k = \Theta\left(\frac{1}{\beta \step} \log \frac{R_{2q,\nu_\step}(\rho_0)}{\err}\right)$ iterations.
Suppose $\err$ is small enough that $\gamma_{2q-1}\left(\frac{\err}{2}\right) < \frac{1}{L}$.
Note that $\nu_\step$ is $\frac{1}{2\step}$-smooth, so by choosing $\rho_0$ to be a Gaussian with covariance $2\step I$, we have $R_{2q,\nu_\step}(\rho_0) = \tilde O(n)$ by Lemma~\ref{Lem:GaussianRenyi}.
Therefore, Theorem~\ref{Thm:RenyiRate} yields an iteration complexity of $k = \tilde \Theta\left(\frac{1}{\beta \gamma_{2q-1}\left(\frac{\err}{2}\right)}\right)$.

For example, if $R_{q,\nu}(\nu_\step) = O(\step)$, then $\gamma_q(\err) = \Omega(\err)$, so the iteration complexity is $k = \tilde \Theta\left(\frac{1}{\beta \err}\right)$ with step size $\step = \Theta(\err)$.
On the other hand, if $R_{q,\nu}(\step) = O(\step^2)$, as in Example~\ref{Ex:ULAGaussian2}, then $\gamma_q(\err) = \Omega(\sqrt{\err})$, so the iteration complexity is $k = \tilde \Theta\left(\frac{1}{\beta \sqrt{\err}}\right)$ with step size $\step = \Theta(\sqrt{\err})$.

\begin{remark}
Our result for R\'enyi divergence above involves the asymptotic bias, which we do not bound.
Another approach to analyze ULA in R\'enyi divergence was proposed in~\cite{GT20} (and improved in~\cite{EHZ20}), albeit with a bound that does not provide an estimate of the R\'enyi bias.
The work of~\cite{CELSZ21} extended our one-step interpolation technique to show the convergence of ULA in R\'enyi divergence under LSI and smoothness, and provides an estimate on the R\'enyi bias.
\end{remark}

\section{Poincar\'e inequality}
\label{Sec:Poincare}

In this section we review the definition of Poincar\'e inequality and prove convergence guarantees for the R\'enyi divergence along the Langevin dynamics and ULA.
As before, let $\rho, \nu$ be smooth probability distributions on $\R^n$.

Recall we say $\nu$ satisfies {\bf Poincar\'e inequality (PI)} with a constant $\alpha > 0$ if for all smooth function $g \colon \R^n \to \R$,
\begin{align}\label{Eq:PI}
\Var_\nu(g) \le \frac{1}{\alpha} \E_\nu[\|\nabla g\|^2]
\end{align}
where $\Var_\nu(g) = \E_\nu[g^2] - \E_\nu[g]^2$ is the variance of $g$ under $\nu$.
Poincar\'e inequality is an isoperimetric-type statement, but it is weaker than LSI.
It is known that LSI implies PI with the same constant; in fact, PI is a linearization of LSI~\eqref{Eq:LSI-KL}, i.e., when $\rho = (1+\eta g) \nu$ as $\eta \to 0$~\cite{R81,Vil03}.
Furthermore, it is also known that Talagrand's inequality implies PI with the same constant, and in fact PI is also a linearization of Talagrand's inequality~\cite{OV00}.
Poincar\'e inequality is better behaved than LSI~\cite{CM10}, and there are various Lyapunov function criteria and integrability conditions that can be used to verify when a probability distribution satisfies Poincar\'e inequality; see for example~\cite{BBCG08,MS14,C18}.

Under Poincar\'e inequality, we can prove the following bound on R\'enyi divergence, which is analogous to Lemma~\ref{Lem:RenyiLSI} under LSI. When $R_{q,\nu}(\rho)$ is small, the two bounds are approximately equivalent.

\begin{lemma}\label{Lem:RenyiPI}
Suppose $\nu$ satisfies Poincar\'e inequality with constant $\alpha > 0$.
Let $q \ge 2$.
For all $\rho$,
\begin{align*}
\frac{G_{q,\nu}(\rho)}{F_{q,\nu}(\rho)} \ge \frac{4\alpha}{q^2} \left(1-e^{-R_{q,\nu}(\rho)}\right). 
\end{align*}
\end{lemma}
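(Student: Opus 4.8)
The plan is to mimic the proof of Lemma~\ref{Lem:RenyiLSI}, but replace the use of LSI by the Poincar\'e inequality applied to the function $g = (\rho/\nu)^{q/2}$. Recall from~\eqref{Eq:InfoDef} that $G_{q,\nu}(\rho) = \frac{4}{q^2}\E_\nu[\|\nabla g\|^2]$ with $g = (\rho/\nu)^{q/2}$, and $F_{q,\nu}(\rho) = \E_\nu[g^2]$. First I would apply~\eqref{Eq:PI} to this $g$, obtaining
\begin{align*}
\frac{q^2}{4} G_{q,\nu}(\rho) = \E_\nu[\|\nabla g\|^2] \ge \alpha\,\Var_\nu(g) = \alpha\left(\E_\nu[g^2] - \E_\nu[g]^2\right) = \alpha\left(F_{q,\nu}(\rho) - \E_\nu\big[(\rho/\nu)^{q/2}\big]^2\right).
\end{align*}
Dividing by $F_{q,\nu}(\rho)$ gives $\frac{q^2}{4}\,\frac{G_{q,\nu}(\rho)}{F_{q,\nu}(\rho)} \ge \alpha\left(1 - \frac{\E_\nu[(\rho/\nu)^{q/2}]^2}{\E_\nu[(\rho/\nu)^q]}\right)$, so it remains to show the ratio in parentheses is at most $e^{-R_{q,\nu}(\rho)}$, equivalently $\E_\nu[(\rho/\nu)^{q/2}]^2 \le e^{-R_{q,\nu}(\rho)}\,\E_\nu[(\rho/\nu)^q] = e^{-R_{q,\nu}(\rho)} F_{q,\nu}(\rho)$.

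The second step is this remaining inequality, which I expect to be the main (though still routine) point. Writing $u = \rho/\nu$, I want $\E_\nu[u^{q/2}]^2 \le e^{-R_{q,\nu}(\rho)}\E_\nu[u^q]$. Now $e^{-R_{q,\nu}(\rho)} = F_{q,\nu}(\rho)^{-1/(q-1)} = \E_\nu[u^q]^{-1/(q-1)}$ from~\eqref{Eq:RenyiDef}, so the claim becomes $\E_\nu[u^{q/2}]^2 \le \E_\nu[u^q]^{1 - \frac{1}{q-1}} = \E_\nu[u^q]^{\frac{q-2}{q-1}}$. This I would obtain by interpolating the $L^{q/2}(\nu)$-norm of $u$ between its $L^1(\nu)$-norm (which equals $1$ since $\rho$ is a probability density) and its $L^q(\nu)$-norm. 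Indeed, by the standard log-convexity of $L^p$ norms (or H\"older with exponent $\frac{q}{q-1}$ applied as $\E_\nu[u^{q/2}] = \E_\nu[u^{q/2}\cdot 1]$ after writing $q/2 = \theta\cdot 1 + (1-\theta)\cdot q$ with the appropriate $\theta$), one gets $\E_\nu[u^{q/2}] \le \E_\nu[u]^{\theta}\,\E_\nu[u^q]^{1-\theta}$ where $\theta$ solves $q/2 = \theta + (1-\theta)q$, i.e.\ $1-\theta = \frac{q/2 - 1}{q-1} = \frac{q-2}{2(q-1)}$. Since $\E_\nu[u] = 1$, this yields $\E_\nu[u^{q/2}] \le \E_\nu[u^q]^{\frac{q-2}{2(q-1)}}$, and squaring gives exactly $\E_\nu[u^{q/2}]^2 \le \E_\nu[u^q]^{\frac{q-2}{q-1}}$, as needed. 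The hypothesis $q \ge 2$ is used precisely to ensure $q/2 \ge 1$ so that $q/2$ lies in the interpolation interval $[1,q]$ and $\theta \in [0,1]$.

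Combining the two steps completes the proof: $\frac{G_{q,\nu}(\rho)}{F_{q,\nu}(\rho)} \ge \frac{4\alpha}{q^2}\left(1 - e^{-R_{q,\nu}(\rho)}\right)$. I would also remark, to match the statement's parenthetical comment, that when $R_{q,\nu}(\rho)$ is small, $1 - e^{-R_{q,\nu}(\rho)} \approx R_{q,\nu}(\rho)$, recovering (up to the constant $\frac{4\alpha}{q^2}$ versus $\frac{2\alpha}{q^2}$) the shape of the LSI bound in Lemma~\ref{Lem:RenyiLSI}; the factor-of-two discrepancy reflects that PI is the linearization of LSI. The only genuinely delicate point to be careful about is the direction and normalization in the interpolation inequality, and the fact that it holds with no integrability assumptions beyond $F_{q,\nu}(\rho) < \infty$ (if $F_{q,\nu}(\rho) = \infty$ the claimed inequality is vacuous since $G_{q,\nu}(\rho)/F_{q,\nu}(\rho)$ should then be interpreted appropriately, or one argues the bound holds trivially).
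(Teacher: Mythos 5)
Your proof is correct and follows essentially the same route as the paper: apply the Poincar\'e inequality to $g=(\rho/\nu)^{q/2}$, then bound $\E_\nu[(\rho/\nu)^{q/2}]^2 \le F_{q,\nu}(\rho)^{(q-2)/(q-1)}$. The only cosmetic difference is that the paper obtains this last inequality by invoking the already-proved monotonicity of $q\mapsto R_{q,\nu}(\rho)$ (Lemma~\ref{Lem:RenyiIncreasing}, noting $F_{q/2,\nu}(\rho)^2 = e^{(q-2)R_{q/2,\nu}(\rho)} \le e^{(q-2)R_{q,\nu}(\rho)}$ since $q\ge 2$), whereas you re-derive it directly via H\"older/log-convexity of moments — which is exactly the content of that monotonicity lemma, so the arguments coincide.
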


We provide the proof of Lemma~\ref{Lem:RenyiPI} in Section~\ref{App:RenyiPI}.

\subsection{Convergence of R\'enyi divergence along Langevin dynamics under Poincar\'e}
\label{Sec:PoincareLangevin}

When $\nu$ satisfies Poincar\'e inequality, R\'enyi divergence converges along the Langevin dynamics.
The convergence is initially linear, then becomes exponential once the R\'enyi divergence falls below a constant.

\begin{reptheorem}{Thm:RenyiLP}
Suppose $\nu$ satisfies Poincar\'e inequality with constant $\alpha > 0$.
Let $q \ge 2$.
Along the Langevin dynamics~\eqref{Eq:LD},
\begin{align*}
R_{q,\nu}(\rho_t) \le
\begin{cases}
R_{q,\nu}(\rho_0) -\frac{2\alpha t}{q} ~~ & \text{ if } R_{q,\nu}(\rho_0) \ge 1 \text{ and as long as } R_{q,\nu}(\rho_t) \ge 1, \\
e^{-\frac{2\alpha t}{q}} R_{q,\nu}(\rho_0) ~~ & \text{ if } R_{q,\nu}(\rho_0) \le 1.
\end{cases}
\end{align*}
\end{reptheorem}

We provide the proof of Theorem~\ref{Thm:RenyiLP} in Section~\ref{App:RenyiLP}.
Theorem~\ref{Thm:RenyiLP} states that starting from $R_{q,\nu}(\rho_0) \ge 1$, the Langevin dynamics reaches $R_{q,\nu}(\rho_t) \le \err$ in $t \le O\left(\frac{q}{\alpha} \left(R_{q,\nu}(\rho_0) + \log \frac{1}{\err}\right)\right)$ time.

\subsection{Convergence of R\'enyi divergence to biased limit under Poincar\'e}
\label{Sec:PoincareULA}

We show that R\'enyi divergence with respect to the biased limit $\nu_\step$ converges exponentially fast along ULA, assuming $\nu_\step$ satisfies Poincar\'e inequality.

\begin{assumption}\label{As:RenyiP}
The distribution $\nu_\step$ satisfies Poincar\'e inequality with a constant $\beta \equiv \beta_\step > 0$.
\end{assumption}

We can verify Assumption~\ref{As:RenyiP} in the Gaussian case, and when $\nu$ is smooth and strongly log-concave; see Section~\ref{Sec:BiasLSI}.
However, it is unclear how to verify Assumption~\ref{As:RenyiP} in general.
One might hope to prove that if $\nu$ satisfies Poincar\'e, then Assumption~\ref{As:RenyiP} holds.

Analogous to Lemma~\ref{Lem:RenyiRateLSI}, we have the following convergence to the biased limit in discrete time, at a rate which matches the continuous-time convergence in Theorem~\ref{Thm:RenyiRatePoincare}.

\begin{lemma}\label{Lem:RenyiRateP}
Assume Assumption~\ref{As:RenyiP}.
Suppose $\nu = e^{-f}$ is $L$-smooth, and let $0 < \step \le \min\left\{\frac{1}{3L}, \frac{1}{9\beta}\right\}$.
For $q \ge 2$,
along ULA~\eqref{Eq:ULA},
\begin{align}\label{Eq:RenyiRateP}
R_{q,\nu_\step}(\rho_k) \le
\begin{cases}
R_{q,\nu_\step}(\rho_0) -\frac{\beta \step k}{q} ~~ & \text{ if } R_{q,\nu_\step}(\rho_0) \ge 1 \text{ and as long as } R_{q,\nu_\step}(\rho_k) \ge 1, \\
e^{-\frac{\beta \step k}{q}} R_{q,\nu_\step}(\rho_0) ~~ & \text{ if } R_{q,\nu_\step}(\rho_0) \le 1.
\end{cases}
\end{align}
\end{lemma}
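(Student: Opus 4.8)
The plan is to mirror the proof of Lemma~\ref{Lem:RenyiRateLSI}, replacing the LSI-based differential inequality by the Poincar\'e-based one from Lemma~\ref{Lem:RenyiPI}. First I would decompose each step of ULA into a deterministic gradient step followed by evolution along the heat flow for time $\epsilon$, exactly as in Figure~\ref{Fig:RenyiProof}. The gradient map $x \mapsto x - \epsilon \nabla f(x)$ is a diffeomorphism for $\epsilon \le \frac{1}{3L} < \frac{1}{L}$ (by $L$-smoothness), so it preserves R\'enyi divergence: if $\tilde\rho_k, \tilde\nu_\epsilon$ are the pushforwards of $\rho_k, \nu_\epsilon$ under this map, then $R_{q,\tilde\nu_\epsilon}(\tilde\rho_k) = R_{q,\nu_\epsilon}(\rho_k)$. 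Along the heat flow from $\tilde\rho_k$ to $\rho_{k+1}$, one has the analogue of~\eqref{Eq:RenyiLD}, namely $\frac{d}{dt} R_{q,\tilde\nu_t}(\tilde\rho_t) = -q \frac{G_{q,\tilde\nu_t}(\tilde\rho_t)}{F_{q,\tilde\nu_t}(\tilde\rho_t)}$ plus an extra term coming from the fact that $\tilde\nu_t$ is also moving under the heat flow (this cross term is exactly where the $L$-smoothness and the restriction $\epsilon \le \frac{1}{3L}$ get used, as worked out in Section~\ref{App:RenyiRateLSI}).

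Next I would invoke Lemma~\ref{Lem:RenyiPI}: since $\nu_\epsilon$ satisfies Poincar\'e inequality with constant $\beta$, and the heat flow composed with the (inverse) gradient step keeps us comparing against measures close to $\nu_\epsilon$, we get $\frac{G_{q,\nu_\epsilon}(\rho)}{F_{q,\nu_\epsilon}(\rho)} \ge \frac{4\beta}{q^2}\left(1 - e^{-R_{q,\nu_\epsilon}(\rho)}\right)$ for $q \ge 2$. Feeding this into the differential inequality for $R(t) := R_{q,\nu_\epsilon}$ along the continuous interpolation over one step yields, after absorbing the discretization error from the moving base measure, something of the form $\frac{d}{dt} R(t) \le -\frac{c\beta}{q}\left(1 - e^{-R(t)}\right)$ for an absolute constant $c$ (the factor $4$ in Lemma~\ref{Lem:RenyiPI} against the $q \frac{G}{F}$ rate and the error budget combine to give a clean constant; the condition $\epsilon \le \frac{1}{9\beta}$ is there to make the error term at most half of the main term). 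Now split into two regimes exactly as in Theorem~\ref{Thm:RenyiLP}: if $R(t) \ge 1$ then $1 - e^{-R(t)} \ge 1 - e^{-1} \ge \frac{1}{2}$, giving linear decrease $\frac{d}{dt} R(t) \le -\frac{\beta}{q}$ after fixing constants, hence over one step $R_{q,\nu_\epsilon}(\rho_{k+1}) \le R_{q,\nu_\epsilon}(\rho_k) - \frac{\beta\epsilon}{q}$; while if $R(t) \le 1$ then $1 - e^{-R(t)} \ge (1 - e^{-1}) R(t) \ge \frac{R(t)}{2}$ (using $1 - e^{-r} \ge (1-e^{-1})r$ on $[0,1]$ by concavity), giving $\frac{d}{dt} R(t) \le -\frac{\beta}{q} R(t)$, hence $R_{q,\nu_\epsilon}(\rho_{k+1}) \le e^{-\beta\epsilon/q} R_{q,\nu_\epsilon}(\rho_k)$. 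Iterating these one-step bounds, and noting that once $R_{q,\nu_\epsilon}$ drops below $1$ it stays below $1$ (both recursions are contractive), gives the two cases in~\eqref{Eq:RenyiRateP}.

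The main obstacle I expect is the careful bookkeeping of the discretization error in the heat-flow step — showing that the term arising from the base measure $\nu_\epsilon$ itself being carried along (versus the fixed target in the pure Langevin computation of Lemma~\ref{Lem:RenyiLD}) is dominated by a constant fraction of $q\frac{G_{q,\nu_\epsilon}}{F_{q,\nu_\epsilon}}$ under the stated step-size constraints. This is the same technical heart as in Lemma~\ref{Lem:RenyiRateLSI}, so I would largely reuse that argument verbatim, only swapping the final application of the isoperimetric inequality (Lemma~\ref{Lem:RenyiPI} in place of Lemma~\ref{Lem:RenyiLSI}) and then doing the elementary two-regime ODE comparison above. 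A minor secondary point is justifying that the one-step recursions can be chained across the $R_{q,\nu_\epsilon} = 1$ threshold without loss; this follows since in the linear regime each step decreases $R_{q,\nu_\epsilon}$ by a fixed amount and in the exponential regime it only decreases further, so the stated piecewise bound holds for all $k$.
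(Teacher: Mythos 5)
Your overall plan is correct and follows the same route as the paper: decompose each step of ULA into a deterministic gradient step followed by Gaussian convolution (heat flow), use that the gradient map preserves R\'enyi divergence (Lemma~\ref{Lem:RenyiBij}), derive a differential inequality along the heat-flow interpolation using the Poincar\'e inequality via Lemma~\ref{Lem:RenyiPI}, and then do the two-regime ODE comparison from Theorem~\ref{Thm:RenyiLP}. That is exactly the structure of the paper's proof.

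However, your description of the ``technical heart'' is wrong in a way that would send you looking for something that is not there. There is no ``extra term'' or ``discretization error from the moving base measure'' in the heat-flow step. Lemma~\ref{Lem:RenyiHeat} shows that when \emph{both} $\tilde\rho_{k,t}$ and $\tilde\nu_{\epsilon,t}$ evolve under the simultaneous heat flow, the \emph{exact} identity $\frac{d}{dt} R_{q,\tilde\nu_{\epsilon,t}}(\tilde\rho_{k,t}) = -q\,G_{q,\tilde\nu_{\epsilon,t}}(\tilde\rho_{k,t})/F_{q,\tilde\nu_{\epsilon,t}}(\tilde\rho_{k,t})$ holds, with no cross term at all; this is the whole point of the simultaneous-heat-flow trick. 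The $L$-smoothness assumption and the restrictions $\epsilon \le \frac{1}{3L}$, $\epsilon \le \frac{1}{9\beta}$ are not an error budget; they control the Poincar\'e constant of the \emph{moving target}. Concretely, $T(x)=x-\epsilon\nabla f(x)$ is $(1+\epsilon L)$-Lipschitz, so by Lemma~\ref{Lem:PLipschitz} the pushforward $\tilde\nu_\epsilon = T_\#\nu_\epsilon$ satisfies Poincar\'e with constant $\beta/(1+\epsilon L)^2$; then by Lemma~\ref{Lem:PGaussianConv} the intermediate $\tilde\nu_{\epsilon,t} = \tilde\nu_\epsilon \ast \N(0,2tI)$ satisfies Poincar\'e with constant $\bigl((1+\epsilon L)^2/\beta + 2t\bigr)^{-1} \ge \beta/2$ under the stated step-size conditions. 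This is also why Lemma~\ref{Lem:RenyiPI} must be applied to $\tilde\nu_{\epsilon,t}$ (with constant $\beta/2$), not to $\nu_\epsilon$ directly, since the exact identity above features $\tilde\nu_{\epsilon,t}$, not $\nu_\epsilon$. The resulting differential inequality is $\frac{d}{dt}R \le -\frac{2\beta}{q}(1-e^{-R})$, and your two-case ODE argument then goes through (the paper's case $R\le 1$ uses $\frac{1-e^{-r}}{r}\ge\frac{1}{1+r}\ge\frac12$, a slightly tighter bound than your concavity estimate, but the conclusion is the same). Since you explicitly say you would reuse the proof of Lemma~\ref{Lem:RenyiRateLSI} verbatim, you would in fact recover all this --- but as written, your account of where the smoothness and step-size hypotheses enter is incorrect.
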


We provide the proof of Lemma~\ref{Lem:RenyiRateP} in Section~\ref{App:RenyiRateP}.
Lemma~\ref{Lem:RenyiRateP} states that starting from $R_{q,\nu_\step}(\rho_0) \ge 1$, ULA reaches $R_{q,\nu_\step}(\rho_k) \le \err$ in $k \le O\left(\frac{q}{\step \beta} \left(R_{q,\nu_\step}(\rho_0) + \log \frac{1}{\err}\right)\right)$ iterations.

\subsection{Convergence of R\'enyi divergence along ULA under Poincar\'e}
\label{Sec:RenyiRatePoincare}

We combine Lemma~\ref{Lem:RenyiDecomp} and Lemma~\ref{Lem:RenyiRateP} to obtain the following characterization of the convergence of R\'enyi divergence along ULA to the true target distribution under Poincar\'e inequality.

\begin{theorem}\label{Thm:RenyiRatePoincare}
Assume Assumption~\ref{As:RenyiP}.
Suppose $\nu = e^{-f}$ is $L$-smooth, and let $0 < \step \le \min\left\{\frac{1}{3L}, \frac{1}{9\beta}\right\}$.
Let $q > 1$, and suppose $1 \le R_{2q,\nu_\step}(\rho_0) < \infty$.
Then along ULA~\eqref{Eq:ULA}, for $k \ge k_0 := \frac{2q}{\beta\step}(R_{2q,\nu_\step}(\rho_0)-1)$,
\begin{align}
R_{q,\nu}(\rho_k) \le \left(\frac{q-\frac{1}{2}}{q-1}\right) e^{-\frac{\beta \step (k-k_0)}{2q}} + R_{2q-1,\nu}(\nu_\step).
\end{align}
\end{theorem}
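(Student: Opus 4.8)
The plan is to bolt together the decomposition Lemma~\ref{Lem:RenyiDecomp} with the two-phase rate Lemma~\ref{Lem:RenyiRateP}, using the growth function to absorb the asymptotic bias.

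I would start by applying Lemma~\ref{Lem:RenyiDecomp} to the iterate $\rho_k$ at order $q > 1$, which gives
\begin{align*}
R_{q,\nu}(\rho_k) \le \left(\frac{q-\frac{1}{2}}{q-1}\right) R_{2q,\nu_\epsilon}(\rho_k) + R_{2q-1,\nu}(\nu_\epsilon).
\end{align*}
Since $q > 1$ we have $2q-1 > 1$, so by definition of the growth function the $k$-independent second term is at most $g_{2q-1}(\epsilon)$, which is exactly the additive term in the claimed bound. Hence everything reduces to controlling $R_{2q,\nu_\epsilon}(\rho_k)$.

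Next I would invoke Lemma~\ref{Lem:RenyiRateP} at order $2q$; this is legitimate because $q>1$ forces $2q \ge 2$, and the step-size restriction $0 < \epsilon \le \min\{\frac{1}{3L},\frac{1}{9\beta}\}$ together with Assumption~\ref{As:RenyiP} are precisely the hypotheses of that lemma. Since by assumption $R_{2q,\nu_\epsilon}(\rho_0) \ge 1$, the lemma first gives the linear decay $R_{2q,\nu_\epsilon}(\rho_k) \le R_{2q,\nu_\epsilon}(\rho_0) - \frac{\beta\epsilon k}{2q}$ while the R\'enyi divergence stays above $1$, and this upper bound equals $1$ precisely at $k = k_0 = \frac{2q}{\beta\epsilon}\bigl(R_{2q,\nu_\epsilon}(\rho_0)-1\bigr)$. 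Thus at the first step $m$ with $R_{2q,\nu_\epsilon}(\rho_m) \le 1$ one has $m \le k_0$ (up to integer rounding of $k_0$); applying the exponential branch of Lemma~\ref{Lem:RenyiRateP} to the time-shifted chain started at step $m$, with initial value $R_{2q,\nu_\epsilon}(\rho_m) \le 1$, yields for every $k \ge m$
\begin{align*}
R_{2q,\nu_\epsilon}(\rho_k) \le e^{-\frac{\beta\epsilon(k-m)}{2q}}\, R_{2q,\nu_\epsilon}(\rho_m) \le e^{-\frac{\beta\epsilon(k-k_0)}{2q}}.
\end{align*}
Substituting this into the decomposition for $k \ge k_0$ gives the theorem.

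The argument is essentially pure assembly, since the substance is carried by Lemmas~\ref{Lem:RenyiDecomp} and~\ref{Lem:RenyiRateP}, both proved earlier. The one point that needs care — and where I expect the bookkeeping to be most delicate — is splicing the linear regime to the exponential regime: one must locate the integer first-passage time $m$ relative to the real threshold $k_0$, verify that the exponential branch of Lemma~\ref{Lem:RenyiRateP} can legitimately be restarted from step $m$ (immediate since ULA is time-homogeneous with stationary law $\nu_\epsilon$, so the lemma applies verbatim to $(\rho_{m+j})_{j\ge 0}$), and absorb the lower-order slack from rounding $k_0$ into the estimate. Apart from this, no genuine obstacle arises.
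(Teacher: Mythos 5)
Your proposal is correct and takes essentially the same approach as the paper: apply Lemma~\ref{Lem:RenyiDecomp} at order $q$ to split off the bias term $g_{2q-1}(\epsilon)$, then control $R_{2q,\nu_\epsilon}(\rho_k)$ via Lemma~\ref{Lem:RenyiRateP} applied at order $2q$, running the linear phase until the divergence drops below $1$ (which by the choice of $k_0$ occurs within $k_0$ steps) and the exponential phase thereafter. Your first-passage-time bookkeeping for the linear-to-exponential splice is in fact a bit more careful than the paper's, which simply asserts $R_{2q,\nu_\epsilon}(\rho_{k_0}) \le 1$ even though $k_0$ need not be an integer, but the substance is identical.
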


We provide the proof of Theorem~\ref{Thm:RenyiRatePoincare} in Section~\ref{App:RenyiRatePoincare}.

For $\err > 0$, recall $\gamma_q(\err) = \sup \{ \step > 0 \colon R_{q,\nu}(\nu_\step) \le \err \}$.
Theorem~\ref{Thm:RenyiRatePoincare} states that to achieve $R_{q,\nu}(\rho_k) \le \err$, it suffices to run ULA with step size $\step = \Theta\left(\min\left\{\frac{1}{L},\gamma_{2q-1}\left(\frac{\err}{2}\right)\right\}\right)$ for $k = \Theta\left(\frac{1}{\beta \step} \left( R_{2q,\nu_\step}(\rho_0) + \log \frac{1}{\err}\right)\right)$ iterations.
Suppose $\err$ is small enough that $\gamma_{2q-1}\left(\frac{\err}{2}\right) < \frac{1}{L}$.
Note that $\nu_\step$ is $\frac{1}{2\step}$-smooth, so by choosing $\rho_0$ to be a Gaussian with covariance $2\step I$, we have $R_{2q,\nu_\step}(\rho_0) = \tilde O(n)$ by Lemma~\ref{Lem:GaussianRenyi}.
Therefore, Theorem~\ref{Thm:RenyiRatePoincare} yields an iteration complexity of $k = \tilde \Theta\left(\frac{n}{\beta \gamma_{2q-1}\left(\frac{\err}{2}\right)}\right)$.
Note the additional dependence on dimension, compared to the LSI case in Section~\ref{Sec:RenyiULALSI}.

For example, if $R_{q,\nu}(\nu_\step) = O(\step)$, then $\gamma_q(\err) = \Omega(\err)$, so the iteration complexity is $k = \tilde \Theta\left(\frac{n}{\beta \err}\right)$ with step size $\step = \Theta(\err)$.
On the other hand, if $R_{q,\nu}(\nu_\step) = O(\step^2)$, as in Example~\ref{Ex:ULAGaussian2}, then $\gamma_q(\err) = \Omega(\sqrt{\err})$, so the iteration complexity is $k = \tilde \Theta\left(\frac{n}{\beta \sqrt{\err}}\right)$ with step size $\step = \Theta(\sqrt{\err})$.

\section{Properties of Biased Limit}
\label{Sec:Bias}

\subsection{Bound on bias under third-order smoothness}
\label{Sec:BiasULA}

Let $\nu_\step$ be the biased limit of ULA with step size $\step > 0$.
Let 
$\mu_{\step} = (I - \step \nabla f)_\# \nu_\step$,
so $\nu_\step$ satisfies
\begin{align*}
    \nu_\step = \mu_{\step} \ast \N(0,2\step I).
\end{align*}

We will bound bound the relative Fisher information $J_\nu(\nu_\step)$ under third-order smoothness.
We say $f$ is {\bf $(L,M)$-smooth} if $f$ is $L$-smooth ($\nabla f$ is $L$-Lipschitz), and $\nabla^2 f$ is $M$-Lipschitz, or $\|\nabla^3 f\|_\op \le M$.
We provide the proof of Theorem~\ref{Thm:Bias} in Sections~\ref{Sec:ProofBiasUpper} and~\ref{Sec:ProofBiasLower}.

\begin{theorem}\label{Thm:Bias}
\begin{enumerate}
    \item If $f$ is $(L,M)$-smooth and $\step \le \frac{1}{2L}$, then:
    \begin{align*}
        J_\nu(\nu_\step) \le  2 \step n \left(L^2 + 2 \sqrt{nL}M + 3nM^2\right).
    \end{align*}
        
    \item For any $f$ and $\step > 0$ (such that $\nu_\step$ exists and the quantities below are defined):
    \begin{align*}
        J_\nu(\nu_\step) \ge \frac{\step^2}{4} \frac{\big( \E_{\nu_\step}[\|\nabla f\|^2] \big)^2}{\Var_{\nu_\step}(X)}.
    \end{align*}
\end{enumerate}
\end{theorem}

Note the dependence on $\step$ in the upper bound above is $O(\eta^2)$, while the lower bound is $\Omega(\eta^2)$.

\begin{example}
Recall that if $\nu = \N(0,\frac{1}{\alpha} I)$, then $\nu_\step = \N(0, \frac{1}{\alpha (1-\frac{\step \alpha}{2})} I)$ for $\step < \frac{2}{\alpha}$. 
Then\footnote{
Recall for $\nu = \N(0,\frac{1}{\alpha} I)$ and 
$\rho = \N(0,\frac{1}{\beta} I)$,
the relative Fisher information is $J_\nu(\rho) = \frac{n}{\beta} (\beta-\alpha)^2$.
}
\begin{align}
J_\nu(\nu_\step) = 
\frac{\step^2}{4} \frac{n\alpha^2}{(1-\frac{\step \alpha}{2})^2} = \Theta(\step^2 n \alpha^2).
\end{align}
So the lower bound in Theorem~\ref{Thm:Bias} has the right order of $\step$, but not the upper bound.
\end{example}

\begin{remark}
Recall from Theorem~\ref{Thm:Main} that under LSI and $L$-smoothness we have $H_\nu(\nu_\step) \le O(\step n L^2)$.
Under LSI, the upper bound in Theorem~\ref{Thm:Bias} 
implies $H_\nu(\nu_\step) \le O(\step n (L^2 + \sqrt{nL} M + n M^2))$, which has an additional dependence on third-order smoothness, but without requiring isoperimetry.
\end{remark}

However, we also note that in general, convergence in relative Fisher information does not necessarily imply convergence of the underlying distributions; see for example~\cite{BCESZ22}.

\begin{remark}
By examining the proof of the upper bound in Section~\ref{Sec:ProofBiasUpper}, we can also conclude that $J_\nu(\nu_\step) \le \step n L^2$ assuming $f$ is $L$-smooth and $\Delta \Delta f \ge 0$. \end{remark}

\subsection{Isoperimetry of biased limit under strong log-concavity and smoothness}
\label{Sec:BiasLSI}

If $\nu$ is smooth and strongly log-concave, then the biased limit $\nu_\step$ satisfies LSI (hence also Poincar\'e), so Assumptions~\ref{As:RenyiLSI} and~\ref{As:RenyiP} are satisfied. 
The authors thank Sinho Chewi for communicating the following result to us.
We provide the proof of Theorem~\ref{Thm:BiasLSI} in Section~\ref{Sec:ProofBiasLSI}.

\begin{theorem}\label{Thm:BiasLSI}
If $\nu$ is $\alpha$-strongly log-concave and $L$-smooth, and $\step \le \frac{1}{L}$, then $\nu_\step$ is $\beta$-LSI with $\beta \ge \frac{\alpha}{2}$.
\end{theorem}

With Theorem~\ref{Thm:BiasLSI}, we know that for target distributions which are smooth and strongly log-concave, we have convergence of ULA in R\'enyi divergence to the biased limit, as in Theorem~\ref{Thm:RenyiRate}.
However, the final bound is in terms of the bias in R\'enyi divergence, which we do not bound.
(Under third-order smoothness, we can bound it in relative Fisher information as in Theorem~\ref{Thm:Bias}, but it does not bound the R\'enyi divergence.)
The work of~\cite{BCESZ22} extends our interpolation technique to show the convergence in R\'enyi divergence under LSI as well as a general family of isoperimetric inequalities, and proves a bound on the R\'enyi bias under LSI and smoothness.

\section{Proofs and details}
\label{Sec:Proofs}

\subsection{Proofs for \S\ref{Sec:Review}: KL divergence along Langevin dynamics}
\label{App:Review}

\subsubsection{Proof of Lemma~\ref{Lem:KLEstimate}}
\label{App:ProofLemKLEstimate}

\begin{proof}[Proof of Lemma~\ref{Lem:KLEstimate}]
Since $f$ is $L$-smooth and $\nabla f(x^\ast) = 0$, we have the bound
\begin{align*}
f(x) \le f(x^\ast) + \langle \nabla f(x^\ast),x-x^\ast \rangle + \frac{L}{2} \|x-x^\ast\|^2 = f(x^\ast) + \frac{L}{2}\|x-x^\ast\|^2.
\end{align*}
Let $X \sim \rho = \N(x^\ast,\frac{1}{L}I)$. 
Then
\begin{align*}
\E_\rho[f(X)] \le f(x^\ast) + \frac{L}{2} \Var_\rho(X) = f(x^\ast) + \frac{n}{2}.
\end{align*}
Recall the entropy of $\rho$ is $H(\rho) = -\E_\rho[\log \rho(X)] = \frac{n}{2} \log \frac{2\pi e}{L}$.
Therefore, the KL divergence is
\begin{align*}
H_\nu(\rho) = \int \rho \left(\log \rho + f\right) dx = -H(\rho) + \E_\rho[f] \le f(x^\ast) + \frac{n}{2} \log \frac{L}{4\pi e}.
\end{align*}
\end{proof}

\subsubsection{Proof of Lemma~\ref{Lem:Hdot}}
\label{App:ProofHdot}

\begin{proof}[Proof of Lemma~\ref{Lem:Hdot}]
Recall the time derivative of KL divergence along any flow is given by
\begin{align*}
\frac{d}{dt} H_\nu(\rho_t) &= \frac{d}{dt} \int_{\R^n} \rho_t \log \frac{\rho_t}{\nu} \, dx
= \int_{\R^n} \part{\rho_t}{t} \log \frac{\rho_t}{\nu} \, dx
\end{align*}
since the second part of the chain rule is zero:
$\int \rho_t \frac{\partial}{\partial t} \log \frac{\rho_t}{\nu} \, dx 
= \int \part{\rho_t}{t} \, dx
= \frac{d}{dt} \int \rho_t \, dx
= 0.$
Therefore, along the Fokker-Planck equation~\eqref{Eq:FP} for the Langevin dynamics~\eqref{Eq:LD}, 
\begin{align*}
\frac{d}{dt} H_\nu(\rho_t) &= \int \nabla \cdot \left( \rho_t \nabla \log \frac{\rho_t}{\nu} \right) \, \log \frac{\rho_t}{\nu} \, dx \\
&= -\int \rho_t \left\| \nabla \log \frac{\rho_t}{\nu} \right\|^2 dx \\
&= -J_\nu(\rho_t)
\end{align*}
where in the second equality we have applied integration by parts.
\end{proof}

\subsubsection{Proof of Theorem~\ref{Thm:HRate-LSI}}
\label{App:HRate-LSI}

\begin{proof}[Proof of Theorem~\ref{Thm:HRate-LSI}]
From Lemma~\ref{Lem:Hdot} and the LSI assumption~\eqref{Eq:LSI-KL},
\begin{align*}
\frac{d}{dt} H_\nu(\rho_t) = -J_\nu(\rho_t) \le -2\alpha H_\nu(\rho_t).
\end{align*}
Integrating implies the desired bound $H_\nu(\rho_t) \le e^{-2\alpha t} H_\nu(\rho_0)$.

Furthermore, since $\nu$ satisfies LSI with constant $\alpha$, it also satisfies Talagrand's inequality~\eqref{Eq:T} with constant $\alpha$~\cite[Theorem~1]{OV00}.
Therefore, $W_2(\rho_t,\nu)^2 \le \frac{2}{\alpha} H_\nu(\rho_t) \le \frac{2}{\alpha} e^{-2\alpha t} H_\nu(\rho_0)$, as desired.
\end{proof}

\subsection{Proofs for \S\ref{Sec:ULA}: Unadjusted Langevin Algorithm}
\label{App:ULA}

\subsubsection{Proof of Lemma~\ref{Lem:OneStep}}
\label{App:ProofULA}

We will use the following auxiliary results.

\begin{lemma}\label{Lem:GradStat}
Assume $\nu = e^{-f}$ is $L$-smooth.
Then
\begin{align*}
\E_\nu[\|\nabla f\|^2] \le nL.
\end{align*}
\end{lemma}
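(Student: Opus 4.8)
\textbf{Proof proposal for Lemma~\ref{Lem:GradStat}.}
The plan is to use an integration by parts identity relating $\E_\nu[\|\nabla f\|^2]$ to $\E_\nu[\Delta f]$, and then bound the Laplacian using the Hessian bound. First I would note that since $\nu = e^{-f}$, we have $\nabla \nu = -(\nabla f)\,\nu$, so that
\begin{align*}
\E_\nu[\|\nabla f\|^2] = \int_{\R^n} \|\nabla f(x)\|^2 \, e^{-f(x)}\, dx = -\int_{\R^n} \nabla f(x) \cdot \nabla \nu(x)\, dx.
\end{align*}
Integrating by parts (moving the divergence onto $\nabla f$) gives
\begin{align*}
-\int_{\R^n} \nabla f \cdot \nabla \nu \, dx = \int_{\R^n} (\nabla \cdot \nabla f)\, \nu \, dx = \int_{\R^n} \Delta f \, \nu\, dx = \E_\nu[\Delta f].
\end{align*}
Then since $\nabla^2 f(x) \preceq LI$ for all $x$, taking traces yields $\Delta f(x) = \Tr(\nabla^2 f(x)) \le nL$ pointwise, hence $\E_\nu[\|\nabla f\|^2] = \E_\nu[\Delta f] \le nL$, as claimed.

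The step I expect to be the only genuine obstacle is justifying the integration by parts, i.e.\ checking that the boundary term at infinity vanishes and that the intermediate integrals are well defined; a priori $\E_\nu[\|\nabla f\|^2]$ could be infinite before we know the identity. The clean way to handle this is a truncation argument: integrate by parts on a ball $B_R$, where the boundary term is $-\int_{\partial B_R} (\nabla f \cdot \mathbf{n})\, \nu \, dS$, and then let $R \to \infty$. Since $\nu$ is a probability density that is $L$-smooth, $f$ grows at most quadratically while still being integrable, so $\nu$ decays fast enough that this boundary term tends to $0$ along a suitable sequence of radii, and monotone/dominated convergence upgrades the truncated identity to the full one (with $\Delta f \le nL$ providing the needed integrable bound on the right-hand side). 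I would present this as a standard approximation step rather than spelling out every estimate.

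Alternatively, if the paper is willing to invoke it, one can phrase the same computation probabilistically: along the stationary Langevin diffusion $X_t \sim \nu$, the identity $\E_\nu[\|\nabla f\|^2] = \E_\nu[\Delta f]$ is exactly the statement that the generator $\mathcal{L} = -\nabla f\cdot\nabla + \Delta$ satisfies $\E_\nu[\mathcal{L}\varphi] = 0$ applied with $\varphi = f$, together with the carré-du-champ identity. Either route reduces to the same one-line Hessian-trace bound at the end, so I would pick whichever is consistent with the level of rigor used elsewhere in the paper and keep the argument to a few lines.
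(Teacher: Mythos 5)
Your argument is identical to the paper's: integrate by parts to obtain $\E_\nu[\|\nabla f\|^2]=\E_\nu[\Delta f]$, then bound $\Delta f=\Tr(\nabla^2 f)\le nL$ pointwise from the Hessian bound. The paper states the integration-by-parts step without the truncation justification you sketch, but otherwise the two proofs coincide.
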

\begin{proof}
Since $\nu = e^{-f}$, by integration by parts we can write
\begin{align*}
\E_\nu[\|\nabla f\|^2] = \E_\nu[\Delta f].
\end{align*}
Since $\nu$ is $L$-smooth, $\nabla^2 f(x) \preceq L \, I$, so $\Delta f(x) \le nL$ for all $x \in \R^n$.
Therefore,
$\E_\nu[\|\nabla f\|^2] = \E_\nu[\Delta f] \le nL$,
as desired.
\end{proof}

\begin{lemma}\label{Lem:Grad}
Suppose $\nu$ satisfies Talagrand's inequality with constant $\alpha > 0$ and is $L$-smooth.
For any $\rho$,
\begin{align*}
\E_\rho[\|\nabla f\|^2] \le \frac{4L^2}{\alpha} H_\nu(\rho) + 2nL.
\end{align*}
\end{lemma}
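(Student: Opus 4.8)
The plan is to bound $\E_\rho[\|\nabla f(X)\|^2]$ by comparing $X \sim \rho$ with a coupled sample $Y \sim \nu$ drawn from an optimal $W_2$-coupling, and then using that $\nabla f$ is $L$-Lipschitz to transfer the bound at $Y$ (which is controlled by Lemma~\ref{Lem:GradStat}) to a bound at $X$. Concretely, first I would write, for any coupling $\Pi$ of $(X,Y)$ with $X \sim \rho$, $Y \sim \nu$,
\begin{align*}
\|\nabla f(X)\|^2 \le \left(\|\nabla f(Y)\| + \|\nabla f(X) - \nabla f(Y)\|\right)^2 \le 2\|\nabla f(Y)\|^2 + 2L^2\|X-Y\|^2,
\end{align*}
using the triangle inequality, the elementary bound $(a+b)^2 \le 2a^2 + 2b^2$, and $L$-smoothness. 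Taking expectations over $\Pi$ gives $\E_\rho[\|\nabla f\|^2] \le 2\E_\nu[\|\nabla f\|^2] + 2L^2 \E_\Pi[\|X-Y\|^2]$.

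Next I would take the infimum over all couplings $\Pi$, which turns the last term into $2L^2 W_2(\rho,\nu)^2$, and apply Lemma~\ref{Lem:GradStat} to the first term to get $\E_\rho[\|\nabla f\|^2] \le 2nL + 2L^2 W_2(\rho,\nu)^2$. Finally, since $\nu$ satisfies Talagrand's inequality with constant $\alpha$, we have $W_2(\rho,\nu)^2 \le \frac{2}{\alpha}H_\nu(\rho)$, and substituting yields $\E_\rho[\|\nabla f\|^2] \le \frac{4L^2}{\alpha}H_\nu(\rho) + 2nL$, as claimed. One should note that this argument implicitly uses that an optimal (or near-optimal) coupling exists and that $\E_\Pi[\|X-Y\|^2]$ is finite for some coupling; if $H_\nu(\rho) = \infty$ the inequality is vacuous, and otherwise Talagrand's inequality guarantees $W_2(\rho,\nu) < \infty$, so a minimizing sequence of couplings suffices.

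There is no serious obstacle here — the argument is a short chain of the triangle inequality, a quadratic inequality, the definition of $W_2$, and two results already established (Lemma~\ref{Lem:GradStat} and the Otto--Villani implication LSI $\Rightarrow$ Talagrand, which is invoked in the hypothesis as ``$\nu$ satisfies Talagrand's inequality''). The only point requiring a little care is the order of operations: one must take expectations \emph{before} taking the infimum over couplings, so that the bound holds for every coupling and hence for the infimum. The constants ($2nL$ rather than $nL$, $\frac{4L^2}{\alpha}$ rather than $\frac{2L^2}{\alpha}$) are exactly what the factor-of-$2$ split in $(a+b)^2 \le 2a^2+2b^2$ produces, so they match the statement without further tuning.
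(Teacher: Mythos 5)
Your proof is correct and follows essentially the same route as the paper: triangle inequality plus $L$-Lipschitzness of $\nabla f$, the quadratic splitting $(a+b)^2 \le 2a^2 + 2b^2$, Lemma~\ref{Lem:GradStat} for the stationary term, and Talagrand's inequality to convert $W_2(\rho,\nu)^2$ into $\frac{2}{\alpha}H_\nu(\rho)$. The paper simply fixes an optimal coupling at the outset rather than taking an infimum at the end; the two are interchangeable.
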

\begin{proof}
Let $x \sim \rho$ and $x^\ast \sim \nu$ with an optimal coupling $(x,x^\ast)$ so that $\E[\|x-x^\ast\|^2] = W_2(\rho,\nu)^2$.
Since $\nu = e^{-f}$ is $L$-smooth, $\nabla f$ is $L$-Lipschitz.
By triangle inequality,
\begin{align*}
\|\nabla f(x)\| &\le \|\nabla f(x) - \nabla f(x^\ast)\| + \|\nabla f(x^\ast)\| \\
&\le L\|x-x^\ast\| + \|\nabla f(x^\ast)\|.
\end{align*}
Squaring, using $(a+b)^2 \le 2a^2 + 2b^2$, and taking expectation, we get
\begin{align*}
\E_\rho[\|\nabla f(x)\|^2] &\le 2L^2 \, \E[\|x-x^\ast\|^2] + 2\E_\nu[\|\nabla f(x^\ast)\|^2] \\
&= 2L^2 \, W_2(\rho,\nu)^2 + 2\E_\nu[\|\nabla f(x^\ast)\|^2].
\end{align*}
By Talagrand's inequality~\eqref{Eq:T}, $W_2(\rho,\nu)^2 \le \frac{2}{\alpha} H_\nu(\rho)$.
By Lemma~\ref{Lem:GradStat} we have $\E_\nu[\|\nabla f(x^\ast)\|^2] \le nL$.
Plugging these to the bound above gives the desired result.
\end{proof}

We are now ready to prove Lemma~\ref{Lem:OneStep}.

\begin{proof}[Proof of Lemma~\ref{Lem:OneStep}]
For simplicity suppose $k=0$, so we start at $x_0 \sim \rho_0$.
We write one step of ULA 
\begin{align*}
x_0 \mapsto x_0 - \step \nabla f(x_0) + \sqrt{2\step} z_0
\end{align*}
as the output at time $\step$ of the stochastic differential equation
\begin{align}\label{Eq:ULAStoch}
dx_t = -\nabla f(x_0) \, dt + \sqrt{2} \, dW_t
\end{align}
where $W_t$ is the standard Brownian motion in $\R^n$ starting at $W_0 = 0$.
Indeed, the solution to~\eqref{Eq:ULAStoch} at time $t = \step$ is
\begin{align}
x_\step &= x_0 - \step \nabla f(x_0) + \sqrt{2} \, W_\step \notag \\ 
&\stackrel{d}{=} \, x_0 - \step \nabla f(x_0) + \sqrt{2\step} \, z_0. \label{Eq:ULASoln}
\end{align}
where $z_0 \sim \N(0,I)$, which is identical to the ULA update.

We derive the continuity equation corresponding to~\eqref{Eq:ULAStoch} as follows.
For each $t > 0$, let $\rho_{0t}(x_0,x_t)$ denote the joint distribution of $(x_0,x_t)$, which we write in terms of the conditionals and marginals as
\begin{align*}
\rho_{0t}(x_0,x_t) \,=\, \rho_0(x_0) \rho_{t|0}(x_t\,|\,x_0) \,=\, \rho_t(x_t) \rho_{0|t}(x_0\,|\,x_t).
\end{align*}
Conditioning on $x_0$, the drift vector field $-\nabla f(x_0)$ is a constant, so the Fokker-Planck formula for the conditional density $\rho_{t|0}(x_t\,|\,x_0)$ is
\begin{align}\label{Eq:FPCond}
\part{\rho_{t|0}(x_t\,|\,x_0)}{t} = \nabla \cdot \left(\rho_{t|0}(x_t\,|\,x_0) \nabla f(x_0)\right) + \Delta \rho_{t|0}(x_t\,|\,x_0).
\end{align}
To derive the evolution of $\rho_t$, we take expectation over $x_0 \sim \rho_0$.
Multiplying both sides of~\eqref{Eq:FPCond} by $\rho_0(x_0)$ and integrating over $x_0$, we obtain
\begin{align}
\part{\rho_{t}(x)}{t} 
&= \int_{\R^n} \part{\rho_{t|0}(x\,|\,x_0)}{t} \, \rho_0(x_0) \, dx_0 \notag \\
&= \int_{\R^n} \left( \nabla \cdot \left(\rho_{t|0}(x\,|\,x_0) \nabla f(x_0)\right) + \Delta \rho_{t|0}(x\,|\,x_0) \right) \rho_0(x_0) \, dx_0 \notag \\
&= \int_{\R^n} \left( \nabla \cdot \left(\rho_{t,0}(x, x_0) \nabla f(x_0)\right) + \Delta \rho_{t,0}(x, x_0) \right) dx_0 \notag \\
&= \nabla \cdot \left( \rho_t(x) \int_{\R^n} \rho_{0|t}(x_0\,|\,x) \nabla f(x_0) \, dx_0 \right) + \Delta \rho_{t}(x) \notag \\
&= \nabla \cdot \left( \rho_t(x) \E_{\rho_{0|t}}[\nabla f(x_0) \,|\, x_t = x]\right) + \Delta \rho_t(x).  \label{Eq:ULAPDE}
\end{align}
Observe that the difference between the Fokker-Planck equations~\eqref{Eq:ULAPDE} for ULA and~\eqref{Eq:FP} for Langevin dynamics is in the first term, that the drift is now the conditional expectation $\E_{\rho_{0|t}}[\nabla f(x_0) \,|\, x_t = x]$, rather than the true gradient $\nabla f(x)$.

Recall the time derivative of relative entropy along any flow is given by
\begin{align*}
\frac{d}{dt} H_\nu(\rho_t) &= \frac{d}{dt} \int_{\R^n} \rho_t \log \frac{\rho_t}{\nu} \, dx
= \int_{\R^n} \part{\rho_t}{t} \log \frac{\rho_t}{\nu} \, dx
\end{align*}
since the second part of the chain rule is zero:
$\int \rho_t \frac{\partial}{\partial t} \log \frac{\rho_t}{\nu} \, dx 
= \int \part{\rho_t}{t} \, dx
= \frac{d}{dt} \int \rho_t \, dx
= 0.$

Therefore, the time derivative of relative entropy for ULA, using the Fokker-Planck equation~\eqref{Eq:ULAPDE} and integrating by parts, is given by: 
\begin{align}
\frac{d}{dt} H_\nu(\rho_t) &= \int_{\R^n} \left(\nabla \cdot \left( \rho_t(x) \E_{\rho_{0|t}}[\nabla f(x_0) \,|\, x_t = x]\right) + \Delta \rho_t(x) \right) \, \log \frac{\rho_t(x)}{\nu(x)} \, dx \notag \\
&= \int_{\R^n} \left(\nabla \cdot \left( \rho_t(x) \left( \nabla \log \frac{\rho_t(x)}{\nu(x)} + \E_{\rho_{0|t}}[\nabla f(x_0) \,|\, x_t = x] - \nabla f(x) \right) \right) \right) \, \log \frac{\rho_t(x)}{\nu(x)} \, dx \notag \\
&= -\int_{\R^n} \rho_t(x) \left\langle \nabla \log \frac{\rho_t(x)}{\nu(x)} + \E_{\rho_{0|t}}[\nabla f(x_0) \,|\, x_t = x] - \nabla f(x), \, \nabla \log \frac{\rho_t(x)}{\nu(x)} \right\rangle \, dx \notag \\
&= -\int_{\R^n} \rho_t(x) \left\|\nabla \log \frac{\rho_t}{\nu} \right\|^2 dx + \int_{\R^n} \rho_t(x) \left\langle \nabla f(x) - \E_{\rho_{0|t}}[\nabla f(x_0) \,|\, x_t = x], \, \nabla \log \frac{\rho_t(x)}{\nu(x)} \right\rangle dx \notag  \\
&= -J_\nu(\rho_t) + \int_{\R^n \times \R^n} \rho_{0t}(x_0,x) \left\langle \nabla f(x) - \nabla f(x_0), \nabla \log \frac{\rho_t(x)}{\nu(x)} \right\rangle dx_0 \, dx \notag \\
&= -J_\nu(\rho_t) + \E_{\rho_{0t}}\left[ \left\langle \nabla f(x_t) - \nabla f(x_0), \, \nabla \log \frac{\rho_t(x_t)}{\nu(x_t)} \right\rangle \right] \label{Eq:ULAHdot}
\end{align}
where in the last step we have renamed $x$ as $x_t$. 
The first term in~\eqref{Eq:ULAHdot} is the same as in the Langevin dynamics.
The second term in~\eqref{Eq:ULAHdot} is the discretization error, which we can bound as follows.
Using $\langle a,b \rangle \le \|a\|^2 + \frac{1}{4} \|b\|^2$ and since $\nabla f$ is $L$-Lipschitz,
\begin{align}
\E_{\rho_{0t}}\left[\left\langle \nabla f(x_t) - \nabla f(x_0), \nabla \log \frac{\rho_t(x_t)}{\nu(x_t)} \right\rangle \right] 
&\le \E_{\rho_{0t}}[\|\nabla f(x_t)-\nabla f(x_0)\|^2] + \frac{1}{4} \E_{\rho_{0t}}\left[\left\|\nabla \log \frac{\rho_t(x_t)}{\nu(x_t)}\right\|^2\right] \notag \\
&= \E_{\rho_{0t}}[\|\nabla f(x_t)-\nabla f(x_0)\|^2] + \frac{1}{4} J_\nu(\rho_t)  \notag \\
&\le L^2 \E_{\rho_{0t}}[\|x_t-x_0\|^2] + \frac{1}{4} J_\nu(\rho_t) \label{Eq:Comp1}
\end{align}
Recall from~\eqref{Eq:ULASoln} the solution of ULA is $x_t \stackrel{d}{=} x_0 - t\nabla f(x_0) + \sqrt{2t} \, z_0$, 
where $z_0 \sim \N(0,I)$ is independent of $x_0$.
Then
\begin{align*}
\E_{\rho_{0t}}[\|x_t-x_0\|^2] &= \E_{\rho_{0t}}[\|-t \nabla f(x_0) + \sqrt{2t} z_0\|^2] \\
&= t^2 \E_{\rho_0}[\|\nabla f(x_0)\|^2] + 2tn \\
&\le \frac{4 t^2 L^2}{\alpha} H_\nu(\rho_0) + 2t^2 nL + 2tn
\end{align*}
where in the last inequality we have used Lemma~\ref{Lem:Grad}.
This bounds the discretization error by
\begin{align*}
\E_{\rho_{0t}}\left[\left\langle \nabla f(x_t) - \nabla f(x_0), \nabla \log \frac{\rho_t(x_t)}{\nu(x_t)} \right\rangle \right]  \le 
\frac{4 t^2 L^4}{\alpha} H_\nu(\rho_0) + 2t^2 nL^3 + 2tn L^2 + \frac{1}{4} J_\nu(\rho_t).
\end{align*}

Therefore, from~\eqref{Eq:ULAHdot}, the time derivative of KL divergence along ULA is bounded by
\begin{align*}
\frac{d}{dt} H_\nu(\rho_t) \le -\frac{3}{4} J_\nu(\rho_t) + \frac{4 t^2 L^4}{\alpha} H_\nu(\rho_0) + 2t^2 nL^3 + 2tn L^2.
\end{align*}
Then by the LSI~\eqref{Eq:LSI-KL} assumption,
\begin{align*}
\frac{d}{dt} H_\nu(\rho_t) 
\le -\frac{3\alpha}{2} H_\nu(\rho_t) + \frac{4 t^2 L^4}{\alpha} H_\nu(\rho_0) + 2t^2 nL^3 + 2tn L^2.
\end{align*}
We wish to integrate the inequality above for $0 \le t \le \step$.
Using $t \le \step$ and since $\step \le \frac{1}{2L}$, we simplify the above to
\begin{align*}
\frac{d}{dt} H_\nu(\rho_t) &\le -\frac{3\alpha}{2} H_\nu(\rho_t) + \frac{4 \step^2 L^4}{\alpha} H_\nu(\rho_0) + 2\step^2 nL^3 + 2\step n L^2 \\
&\le -\frac{3\alpha}{2} H_\nu(\rho_t) + \frac{4 \step^2 L^4}{\alpha} H_\nu(\rho_0) + 3\step n L^2.
\end{align*}
Multiplying both sides by $e^{\frac{3\alpha}{2} t}$, we can write the above as
\begin{align*}
\frac{d}{dt} \left(e^{\frac{3\alpha}{2} t} H_\nu(\rho_t) \right) &\le e^{\frac{3\alpha}{2} t} \left(\frac{4 \step^2 L^4}{\alpha} H_\nu(\rho_0) + 3\step n L^2 \right).
\end{align*}
Integrating from $t=0$ to $t=\step$ gives
\begin{align*}
e^{\frac{3}{2} \alpha\step} H_\nu(\rho_\step) - H_\nu(\rho_0) &\le \frac{2(e^{\frac{3}{2} \alpha\step}-1)}{3\alpha} \left(\frac{4 \step^2 L^4}{\alpha} H_\nu(\rho_0) + 3\step n L^2 \right) \\
&\le 2\step \left(\frac{4 \step^2 L^4}{\alpha} H_\nu(\rho_0) + 3\step n L^2 \right)
\end{align*}
where in the last step we have used the inequality $e^c \le 1+2c$ for $0 < c = \frac{3}{2} \alpha \step \le 1$, which holds because $0 < \step \le \frac{2}{3\alpha}$.
Rearranging, the inequality above gives
\begin{align*}
H_\nu(\rho_\step) \le e^{-\frac{3}{2} \alpha \step} \left(1+ \frac{8\step^3 L^4}{\alpha}\right) H_\nu(\rho_0) + e^{-\frac{3}{2} \alpha \step} 6\step^2 nL^2.
\end{align*}
Since $1+ \frac{8\step^3 L^4}{\alpha} \le 1+\frac{\alpha \step}{2} \le e^{\frac{1}{2} \alpha \step}$ for $\step \le \frac{\alpha}{4L^2}$, and using $e^{-\frac{3}{2} \alpha \step} \le 1$, we conclude that
\begin{align*}
H_\nu(\rho_\step) \le e^{-\alpha \step} H_\nu(\rho_0) + 6\step^2 nL^2.
\end{align*}
This is the desired inequality, after renaming $\rho_0 \equiv \rho_k$ and $\rho_\step \equiv \rho_{k+1}$.
Note that the conditions $\step \le \frac{1}{2L}$ and $\step \le \frac{2}{3\alpha}$ above are also implied by the assumption $\step \le \frac{\alpha}{4L^2}$ since $\alpha \le L$.
\end{proof}

\subsubsection{Proof of Theorem~\ref{Thm:Main}}
\label{App:ProofThmMain}

\begin{proof}[Proof of Theorem~\ref{Thm:Main}]
Applying the recursion~\eqref{Eq:Rec} from Lemma~\ref{Lem:OneStep}, we obtain
\begin{align*}
H_\nu(\rho_k) 
\,\le\, e^{-\alpha \step k} H_\nu(\rho_0) + \frac{6 \step^2 n L^2}{1-e^{-\alpha \step}}
\,\le\, e^{-\alpha \step k} H_\nu(\rho_0) + \frac{8 \step n L^2}{\alpha}
\end{align*}
where in the last step we have used the inequality $1-e^{-c} \ge \frac{3}{4} c$ for $0 < c = \alpha \step \le \frac{1}{4}$, which holds since $\step \le \frac{\alpha}{4 L^2} \le \frac{1}{4\alpha}$.

Given $\err > 0$, if we further assume $\step \le \frac{\err \alpha}{16 n L^2}$, then the above implies
$H_\nu(\rho_k) \le e^{-\alpha \step k} H_\nu(\rho_0) + \frac{\err}{2}.$
This means for $k \ge \frac{1}{\alpha \step} \log \frac{2 H_\nu(\rho_0)}{\err}$, we have $H_\nu(\rho_k) \le \frac{\err}{2} + \frac{\err}{2} = \err$, as desired.
\end{proof}

\subsection{Details for \S\ref{Sec:Renyi}: R\'enyi divergence along Langevin dynamics}
\label{App:Renyi}

\subsubsection{Properties of R\'enyi divergence}
\label{App:RenyiProperties}

We recall that R\'enyi divergence is increasing in the order.

\begin{lemma}\label{Lem:RenyiIncreasing}
For any probability distributions $\rho,\nu$,
the map $q \mapsto R_{q,\nu}(\rho)$ is increasing for $q > 0$.
\end{lemma}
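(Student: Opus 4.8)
The plan is to show that $q\mapsto R_{q,\nu}(\rho)$ is nondecreasing by computing its derivative and recognizing the sign via Jensen's inequality. Write $u = \rho/\nu$, a nonnegative random variable under the base measure $\nu$, so that $F_{q,\nu}(\rho) = \E_\nu[u^q]$ and $R_{q,\nu}(\rho) = \frac{1}{q-1}\log \E_\nu[u^q]$. Note $\E_\nu[u] = \int \rho\, dx = 1$. I would first handle the generic case $q \neq 1$ and then argue continuity at $q = 1$ (where the value is $H_\nu(\rho)$ by the limit in~\eqref{Eq:RenyiDef1}).

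The key computation: differentiating $R_{q,\nu}(\rho) = \frac{1}{q-1}\log \E_\nu[u^q]$ in $q$ gives
\begin{align*}
\frac{d}{dq} R_{q,\nu}(\rho) = -\frac{1}{(q-1)^2}\log \E_\nu[u^q] + \frac{1}{q-1}\cdot\frac{\E_\nu[u^q \log u]}{\E_\nu[u^q]}.
\end{align*}
Putting this over the common denominator $(q-1)^2 \E_\nu[u^q]$, the sign of the derivative is that of
\begin{align*}
(q-1)\,\frac{\E_\nu[u^q\log u]}{\E_\nu[u^q]} - \log \E_\nu[u^q] = \frac{\E_\nu[u^q \log u^{q-1}]}{\E_\nu[u^q]} - \log \E_\nu[u^{q-1}\cdot u}.
\end{align*}
The cleaner route is to introduce the tilted probability measure $d\mu_q = \frac{u^q}{\E_\nu[u^q]}\,d\nu$ and observe that the bracketed quantity equals $\E_{\mu_q}[\log u^{q-1}] - \log \E_{\mu_q}[u^{q-1}]\cdot\frac{\E_\nu[u^q]}{\E_\nu[u^q]}$; more precisely, one checks $\E_\nu[u^{q-1}] = \E_\nu[u^q]\cdot \E_{\mu_q}[1/u]$ is not quite what appears, so instead I would directly verify that the numerator of $\frac{d}{dq}R_{q,\nu}$, namely $(q-1)\E_{\mu_q}[\log u] - \log\E_\nu[u^q]$, can be rewritten using $\log\E_\nu[u^q] = \log\E_\nu[u\cdot u^{q-1}]$ and then apply Jensen: since $t\mapsto t\log t$ is convex, or equivalently since $\log$ is concave, $\log \E_{\mu_q}[u^{-1}] \ge \E_{\mu_q}[\log u^{-1}] = -\E_{\mu_q}[\log u]$, and combining with $\E_\nu[u^q] = \E_{\mu_q}[u^{-1}]^{-1}\cdot$(something) establishes the inequality in the desired direction for both $q > 1$ and $0 < q < 1$ (the two sign flips cancel).

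\textbf{Main obstacle.} The delicate point is bookkeeping the sign across the regimes $q>1$ versus $0<q<1$: in both, the factor $(q-1)$ and the convexity inequality flip together, so monotonicity holds throughout, but this must be written carefully. A second technical point is justifying differentiation under the integral sign (dominated convergence, valid on any compact $q$-interval where $R_{q,\nu}(\rho)<\infty$) and handling the possibility that $R_{q,\nu}(\rho) = \infty$ for $q$ beyond some threshold—there monotonicity is trivially preserved since once the quantity is $+\infty$ it stays $+\infty$ (the map $q\mapsto \E_\nu[u^q]$ restricted to where $u\ge 1$ is nondecreasing, so $u^q$'s unbounded contribution only grows). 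Finally, continuity at $q=1$ follows from the explicit limit already recorded in~\eqref{Eq:RenyiDef1}, so the monotone function extends continuously and the claim holds for all $q>0$. I would present the tilted-measure/Jensen argument as the cleanest path and relegate the integrability caveats to a remark.
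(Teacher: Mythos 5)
Your overall plan---show $\frac{d}{dq}R_{q,\nu}(\rho)\ge 0$ by exhibiting the numerator as a Jensen gap under a tilted measure---is a genuinely different and valid route from the paper's, which instead bounds $F_{q,\nu}(\rho)$ directly by $F_{r,\nu}(\rho)^{(q-1)/(r-1)}$ via the power-mean inequality and chains through the three cases $1\le q\le r$, $q\le r<1$, $q<1\le r$. Both are essentially consequences of the convexity of $\phi(q):=\log\E_\nu[u^q]$ together with $\phi(1)=0$, i.e.\ $R_{q,\nu}(\rho)=\frac{\phi(q)-\phi(1)}{q-1}$ is a slope of a convex function, hence nondecreasing; the paper encodes this as a finite-difference inequality (no regularity needed), while you encode it infinitesimally.

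However, your write-up has a concrete gap in the one step that matters. After writing the sign of the derivative as that of
$(q-1)\,\E_{\mu_q}[\log u]-\log\E_\nu[u^q]$, you apply Jensen to $u^{-1}$ under $\mu_q$ (``$\log\E_{\mu_q}[u^{-1}]\ge -\E_{\mu_q}[\log u]$'') and then try to relate $\E_{\mu_q}[u^{-1}]$ to $\E_\nu[u^q]$, but these are \emph{not} reciprocals: $\E_{\mu_q}[u^{-1}]=\E_\nu[u^{q-1}]/\E_\nu[u^q]$, not $1/\E_\nu[u^q]$. You acknowledge this (``is not quite what appears''), and the ``combining with $\dots$ (something)'' leaves the argument incomplete. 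The repair is to Jensen the \emph{correct} power $u^{1-q}$ rather than $u^{-1}$: one has the exact identity
\begin{align*}
\E_{\mu_q}\!\left[u^{1-q}\right]
=\frac{\E_\nu[u^q\cdot u^{1-q}]}{\E_\nu[u^q]}
=\frac{\E_\nu[u]}{\E_\nu[u^q]}
=\frac{1}{\E_\nu[u^q]},
\end{align*}
using $\E_\nu[u]=\int\rho\,dx=1$. Then concavity of $\log$ gives
\begin{align*}
-\log\E_\nu[u^q]=\log\E_{\mu_q}\!\left[u^{1-q}\right]\ge\E_{\mu_q}\!\left[\log u^{1-q}\right]=(1-q)\,\E_{\mu_q}[\log u],
\end{align*}
i.e.\ $(q-1)\,\E_{\mu_q}[\log u]-\log\E_\nu[u^q]\ge 0$, which is precisely what you need. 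Note that with this fix there is \emph{no} case split between $q>1$ and $0<q<1$ and no ``cancelling sign flips'' to track: Jensen is applied the same way for all $q\ne 1$, because the factor $(1-q)$ lives inside the expectation rather than in front of a convexity inequality. Your remarks on differentiation under the integral, the $q=1$ limit from~\eqref{Eq:RenyiDef1}, and the persistence of $+\infty$ once $\E_\nu[u^q]$ diverges are all correct and would belong in the final write-up; but as written the core inequality is not established, and the worry about sign bookkeeping is an artifact of Jensen-ing the wrong function.
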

\begin{proof}
Let $0 < q \le r$. 
We will show that $R_{q,\nu}(\rho) \le R_{r,\nu}(\rho)$.

First suppose $q > 1$.
We write $F_{q,\nu}(\rho)$ as an expectation over $\rho$ and use power mean inequality:
\begin{align*}
F_{q,\nu}(\rho) = \E_{\nu}\left[\left(\frac{\rho}{\nu}\right)^{q}\right] 
= \E_{\rho}\left[\left(\frac{\rho}{\nu}\right)^{q-1}\right]
\le \E_{\rho}\left[\left(\frac{\rho}{\nu}\right)^{r-1}\right]^{\frac{q-1}{r-1}}
= \E_{\nu}\left[\left(\frac{\rho}{\nu}\right)^{r}\right]^{\frac{q-1}{r-1}}
= F_{r,\nu}(\rho)^{\frac{q-1}{r-1}}.
\end{align*}
Taking logarithm and dividing by $q-1 > 0$ gives
\begin{align*}
R_{q,\nu}(\rho) = \frac{1}{q-1} \log F_{q,\nu}(\rho) \le \frac{1}{r-1} \log F_{r,\nu}(\rho) = R_{r,\nu}(\rho).
\end{align*}
The case $q=1$ follows by taking limit $q \to 1$.

Now suppose $q \le r < 1$, so $1-q \ge 1-r > 0$.
We again write $F_{q,\nu}(\rho)$ as an expectation over $\rho$ and use power mean inequality:
\begin{align*}
F_{q,\nu}(\rho) = \E_{\nu}\left[\left(\frac{\rho}{\nu}\right)^{q}\right] 
= \E_{\rho}\left[\left(\frac{\nu}{\rho}\right)^{1-q}\right]
\ge \E_{\rho}\left[\left(\frac{\nu}{\rho}\right)^{1-r}\right]^{\frac{1-q}{1-r}}
= \E_{\nu}\left[\left(\frac{\rho}{\nu}\right)^{r}\right]^{\frac{1-q}{1-r}}
= F_{r,\nu}(\rho)^{\frac{1-q}{1-r}}.
\end{align*}
Taking logarithm and dividing by $q-1 < 0$ (which flips the inequality) gives
\begin{align*}
R_{q,\nu}(\rho) = \frac{1}{q-1} \log F_{q,\nu}(\rho) \le \frac{1}{r-1} \log F_{r,\nu}(\rho) = R_{r,\nu}(\rho).
\end{align*}
The case $q < 1 \le r$ follows since $R_{q,\nu}(\rho) \le R_{1,\nu}(\rho) \le R_{r,\nu}(\rho)$.
\end{proof}

\subsubsection{Proof of Lemma~\ref{Lem:GaussianRenyi}}
\label{App:GaussianRenyi}

\begin{proof}[Proof of Lemma~\ref{Lem:GaussianRenyi}]
Since $f$ is $L$-smooth and $x^\ast$ is a stationary point of $f$,
for all $x \in \R^n$ we have
\begin{align*}
f(x) &\le f(x^\ast) + \langle \nabla f(x^\ast), x-x^\ast \rangle + \frac{L}{2}\|x-x^\ast\|^2
= f(x^\ast) + \frac{L}{2}\|x-x^\ast\|^2.
\end{align*}
Let $q > 1$.
Then for $\rho = \N(x^\ast,\sigma^2 I)$ with $\frac{q}{\sigma^2} > (q-1)L$,
\begin{align*}
F_{q,\nu}(\rho) &= \int_{\R^n} \frac{\rho(x)^q}{\nu(x)^{q-1}} dx \\
&= \frac{1}{(2\pi \sigma^2)^{\frac{nq}{2}}} \int_{\R^n} e^{-\frac{q}{2\sigma^2}\|x-x^\ast\|^2 + (q-1)f(x)} dx \\
&\le \frac{1}{(2\pi \sigma^2)^{\frac{nq}{2}}} \int_{\R^n} e^{-\frac{q}{2\sigma^2}\|x-x^\ast\|^2 + (q-1)f(x^\ast) + \frac{(q-1)L}{2}\|x-x^\ast\|^2} dx \\
&= \frac{e^{(q-1)f(x^\ast)}}{(2\pi \sigma^2)^{\frac{nq}{2}}} \int_{\R^n} e^{-\frac{1}{2} \left(\frac{q}{\sigma^2} - (q-1)L\right)\|x-x^\ast\|^2} dx \\
&= \frac{e^{(q-1)f(x^\ast)}}{(2\pi \sigma^2)^{\frac{nq}{2}}} \left(\frac{2\pi}{\frac{q}{\sigma^2} - (q-1)L}\right)^{\frac{n}{2}} \\
&= \frac{e^{(q-1)f(x^\ast)}}{(2\pi)^{\frac{n}{2}(q-1)} (\sigma^2)^{\frac{nq}{2}}} \frac{1}{\left(\frac{q}{\sigma^2} - (q-1)L\right)^{\frac{n}{2}}}.
\end{align*}
Therefore,
\begin{align*}
R_{q,\nu}(\rho) = \frac{1}{q-1} \log F_{q,\nu}(\rho)
\le f(x^\ast) - \frac{n}{2} \log 2\pi - \frac{n}{2(q-1)} \log \sigma^{2q}\left(\frac{q}{\sigma^2} - (q-1)L\right).
\end{align*}
In particular, if $\sigma^2 = \frac{1}{L}$, then $\frac{q}{\sigma^2} - (q-1)L = L > 0$, and the bound above becomes
\begin{align*}
R_{q,\nu}(\rho) \le f(x^\ast) + \frac{n}{2} \log \frac{L}{2\pi}.
\end{align*}
The case $q=1$ follows from Lemma~\ref{Lem:KLEstimate}, since $\frac{1}{4\pi e} < \frac{1}{2\pi}$.
\end{proof}

\subsubsection{Proof of Lemma~\ref{Lem:RenyiLSI}}
\label{App:RenyiLSI}

\begin{proof}[Proof of Lemma~\ref{Lem:RenyiLSI}]
We plug in $h^2 = \left(\frac{\rho}{\nu}\right)^q$ to the LSI definition~\eqref{Eq:LSI} to obtain
\begin{align}
\frac{q^2}{2\alpha} G_{q,\nu}(\rho) 
&\ge q \E_\nu\left[\left(\frac{\rho}{\nu}\right)^q \log \frac{\rho}{\nu} \right] - F_{q,\nu}(\rho) \log F_{q,\nu}(\rho) \label{Eq:RenyiLSICalc} \\
&= q \part{}{q} F_{q,\nu}(\rho) - F_{q,\nu}(\rho) \log F_{q,\nu}(\rho).  \notag
\end{align}
Therefore, 
\begin{align*}
\frac{q^2}{2\alpha} \frac{G_{q,\nu}(\rho)}{F_{q,\nu}(\rho)}
 &\ge q \part{}{q} \log F_{q,\nu}(\rho) - \log F_{q,\nu}(\rho) \\
 &= q \part{}{q} \left((q-1) R_{q,\nu}(\rho) \right) - (q-1) R_{q,\nu}(\rho) \\
 &= q R_{q,\nu}(\rho) + q(q-1) \part{}{q} R_{q,\nu}(\rho)  - (q-1) R_{q,\nu}(\rho) \\
 &= R_{q,\nu}(\rho) + q(q-1) \part{}{q} R_{q,\nu}(\rho) \\
 &\ge R_{q,\nu}(\rho)
\end{align*}
where in the last inequality we have used $q \ge 1$ and $\part{}{q} R_{q,\nu}(\rho) \ge 0$ since $q \mapsto R_{q,\nu}(\rho)$ is increasing by Lemma~\ref{Lem:RenyiIncreasing}.
\end{proof}

\subsubsection{Proof of Lemma~\ref{Lem:RenyiLD}}
\label{App:RenyiLD}

\begin{proof}[Proof of Lemma~\ref{Lem:RenyiLD}]
Let $q > 0$, $q \neq 1$.
By the Fokker-Planck formula~\eqref{Eq:FP} and integration by parts,
\begin{align}
\frac{d}{dt} F_{q,\nu}(\rho_t) 
&= \int_{\R^n} \nu \frac{\part{}{t} (\rho_t^q)}{\nu^q} dx \notag \\
&= q \int_{\R^n} \frac{\rho_t^{q-1}}{\nu^{q-1}} \part{\rho_t}{t} dx \notag \\
&= q \int_{\R^n} \left(\frac{\rho_t}{\nu}\right)^{q-1} \nabla \cdot \left(\rho_t \nabla \log \frac{\rho_t}{\nu} \right) dx \notag \\
&= -q \int_{\R^n} \rho_t \left \langle \nabla \left(\frac{\rho_t}{\nu}\right)^{q-1}, \nabla \log \frac{\rho_t}{\nu} \right\rangle dx \notag \\
&= -q(q-1) \int_{\R^n} \rho_t \left \langle \left(\frac{\rho_t}{\nu}\right)^{q-2} \nabla \frac{\rho_t}{\nu}, \left(\frac{\rho_t}{\nu}\right)^{-1} \nabla \frac{\rho_t}{\nu} \right\rangle dx \notag \\
&= -q(q-1) \E_\nu\left[ \left(\frac{\rho_t}{\nu}\right)^{q-2} \left\|\nabla \frac{\rho_t}{\nu}\right\|^2\right] \notag \\
&= -q(q-1) G_{q,\nu}(\rho_t).  \label{Eq:Fdot}
\end{align}
Therefore,
\begin{align*}
\frac{d}{dt} R_{q,\nu}(\rho_t) = \frac{1}{q-1} \frac{\frac{d}{dt} F_{q,\nu}(\rho_t)}{F_{q,\nu}(\rho_t)}
= -q \frac{G_{q,\nu}(\rho_t)}{F_{q,\nu}(\rho_t)}.
\end{align*}
For $q = 1$, we have $R_{1,\nu}(\rho_t) = H_\nu(\rho_t)$, $G_{1,\nu}(\rho_t) = J_\nu(\rho_t)$, and $F_{1,\nu}(\rho_t) = 1$, and the claim~\eqref{Eq:RenyiLD} follows from Lemma~\ref{Lem:Hdot}.
\end{proof}

\subsubsection{Proof of Theorem~\ref{Thm:RenyiLD}}
\label{App:ProofThmRenyiLD}

\begin{proof}[Proof of Theorem~\ref{Thm:RenyiLD}]
By Lemma~\ref{Lem:RenyiLSI} and Lemma~\ref{Lem:RenyiLD},
\begin{align*}
\frac{d}{dt} R_{q,\nu}(\rho_t)
= -q \frac{G_{q,\nu}(\rho_t)}{F_{q,\nu}(\rho_t)} \le -\frac{2\alpha}{q} R_{q,\nu}(\rho_t).
\end{align*}
Integrating gives
\begin{align*}
R_{q,\nu}(\rho_t) \le e^{-\frac{2\alpha}{q}t} R_{q,\nu}(\rho_0)
\end{align*}
as desired.
\end{proof}

\subsubsection{Hypercontractivity}
\label{App:Hypercontractivity}

\begin{lemma}\label{Lem:Hypercontractivity}
Suppose $\nu$ satisfies LSI with constant $\alpha > 0$.
Let $q_0 > 1$, and suppose $R_{q_0,\nu}(\rho_0) < \infty$.
Define $q_t = 1 + e^{2\alpha t}(q_0-1)$.
Along the Langevin dynamics~\eqref{Eq:LD}, for all $t \ge 0$,
\begin{align}\label{Eq:Hypercontractivity}
\left(1-\frac{1}{q_t}\right) R_{q_t,\nu}(\rho_t) \le \left(1-\frac{1}{q_0}\right)  R_{q_0,\nu}(\rho_0).
\end{align}
In particular, for any $q \ge q_0$, we have $R_{q,\nu}(\rho_t) \le R_{q_0,\nu}(\rho_0) < \infty$ for all $t \ge \frac{1}{2\alpha} \log \frac{q-1}{q_0-1}$.
\end{lemma}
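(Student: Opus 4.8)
The plan is to introduce the auxiliary function $\phi(t) := \left(1-\frac{1}{q_t}\right) R_{q_t,\nu}(\rho_t)$ and show that $\phi$ is nonincreasing; since $\phi(0)$ equals the right-hand side of~\eqref{Eq:Hypercontractivity}, this proves the first claim. The first step is the bookkeeping identity $\left(1-\tfrac1q\right) R_{q,\nu}(\rho) = \frac{q-1}{q}\cdot\frac{1}{q-1}\log F_{q,\nu}(\rho) = \frac1q \log F_{q,\nu}(\rho)$, so writing $F(q,t) := F_{q,\nu}(\rho_t) = \E_\nu[(\rho_t/\nu)^q]$ we have $\phi(t) = \frac{1}{q_t}\log F(q_t,t)$.

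Next I would differentiate $\phi$, tracking the two sources of time-dependence ($q_t$ and $\rho_t$). I need $\partial_t F(q,t) = -q(q-1) G_{q,\nu}(\rho_t)$, which is exactly~\eqref{Eq:Fdot} from the proof of Lemma~\ref{Lem:RenyiLD}, together with $\partial_q F(q,t) = \E_\nu[(\rho_t/\nu)^q \log(\rho_t/\nu)]$; also $\dot q_t = 2\alpha e^{2\alpha t}(q_0-1) = 2\alpha(q_t-1)$. Collecting terms, the chain rule gives
\begin{align*}
\phi'(t) = \frac{\dot q_t}{q_t^2 F(q_t,t)}\Bigl(q_t\, \partial_q F(q_t,t) - F(q_t,t)\log F(q_t,t)\Bigr) - \frac{(q_t-1) G_{q_t,\nu}(\rho_t)}{F(q_t,t)}.
\end{align*}

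Now I invoke LSI. Substituting $g^2 = (\rho_t/\nu)^{q_t}$ into~\eqref{Eq:LSI} — this is precisely the computation~\eqref{Eq:RenyiLSICalc} in the proof of Lemma~\ref{Lem:RenyiLSI} with $q=q_t$, $\rho=\rho_t$ — gives $q_t\,\partial_q F(q_t,t) - F(q_t,t)\log F(q_t,t) \le \frac{q_t^2}{2\alpha} G_{q_t,\nu}(\rho_t)$. Since $\dot q_t/(q_t^2 F(q_t,t)) > 0$ and $G_{q_t,\nu}(\rho_t)\ge 0$, plugging this in and using $\dot q_t = 2\alpha(q_t-1)$ makes the coefficient of $G_{q_t,\nu}(\rho_t)/F(q_t,t)$ equal to $\frac{\dot q_t}{2\alpha} - (q_t-1) = 0$; hence $\phi'(t) \le 0$. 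Integrating from $0$ to $t$ yields~\eqref{Eq:Hypercontractivity}. For the ``in particular'' part, given $q \ge q_0$ set $t^\ast = \frac{1}{2\alpha}\log\frac{q-1}{q_0-1}$, so that $q_{t^\ast} = q$; then~\eqref{Eq:Hypercontractivity} at $t^\ast$, together with $q \ge q_0 \ge 1$, gives $(1-\tfrac1q)R_{q,\nu}(\rho_{t^\ast}) \le (1-\tfrac1{q_0})R_{q_0,\nu}(\rho_0) \le (1-\tfrac1q)R_{q_0,\nu}(\rho_0)$, so $R_{q,\nu}(\rho_{t^\ast}) \le R_{q_0,\nu}(\rho_0) < \infty$; and for $t \ge t^\ast$ the order-$q$ R\'enyi divergence is nonincreasing along the Langevin dynamics by Lemma~\ref{Lem:RenyiLD} (its derivative is $-q\, G_{q,\nu}(\rho_t)/F_{q,\nu}(\rho_t) \le 0$), so $R_{q,\nu}(\rho_t) \le R_{q,\nu}(\rho_{t^\ast}) \le R_{q_0,\nu}(\rho_0)$.

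The main obstacle I anticipate is making the differentiation rigorous: a priori $\phi$ need not even be finite for $t$ near $0$ — indeed $F(q,0)$ may be $+\infty$ for $q$ slightly above $q_0$ — so one must justify differentiating under the integral sign and the absolute continuity of $t\mapsto\phi(t)$ before integrating the differential inequality. The standard remedy is a regularization: run the dynamics for a small time $s>0$ first (which by the computation above already improves integrability, making $F(q,s)$ finite for $q$ slightly above $q_0$), establish the differential inequality on the time interval where all quantities are finite, and take $s\to 0$; alternatively one can appeal to the smoothing properties of the Fokker--Planck semigroup under LSI. I would flag this technical point and otherwise carry out the formal computation as above.
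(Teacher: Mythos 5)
Your proposal is correct and follows essentially the same strategy as the paper's proof: show $\frac{d}{dt}\bigl[(1-\tfrac1{q_t})R_{q_t,\nu}(\rho_t)\bigr]\le 0$ by combining the chain rule in $(q,t)$, the formula $\partial_t F_{q,\nu}(\rho_t)=-q(q-1)G_{q,\nu}(\rho_t)$ from~\eqref{Eq:Fdot}, the LSI in the form~\eqref{Eq:RenyiLSICalc}, and the identity $\dot q_t=2\alpha(q_t-1)$, observing that the terms cancel exactly. The only cosmetic difference is that you differentiate $\phi(t)=\tfrac1{q_t}\log F(q_t,t)$ directly, while the paper first derives $\frac{d}{dt}R_{q_t,\nu}(\rho_t)\le -\tfrac{2\alpha}{q_t}R_{q_t,\nu}(\rho_t)$ and then differentiates the product; your handling of the ``in particular'' clause (using monotonicity from Lemma~\ref{Lem:RenyiLD} rather than the exponential rate of Theorem~\ref{Thm:RenyiLD}) is also an equivalent justification. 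Your flagged concern about justifying differentiation under the integral when $F(q,0)$ may be infinite for $q$ slightly above $q_0$ is a legitimate technical point that the paper's proof passes over silently.
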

\begin{proof}
We will show $\frac{d}{dt} \left\{ \left(1-\frac{1}{q_t}\right) R_{q_t,\nu}(\rho_t)\right\} \le 0$, which implies the desired relation~\eqref{Eq:Hypercontractivity}.
Since $q_t = 1 + e^{2\alpha t}(q_0-1)$, we have $\dot q_t = \frac{d}{dt} q_t = 2\alpha (q_t-1)$.
Note that
\begin{align*}
\frac{d}{dt} R_{q_t,\nu}(\rho_t)
  &= \frac{d}{dt} \left(\frac{\log F_{q_t,\nu}(\rho_t)}{q_t-1}\right) \\
  &\stackrel{\eqref{Eq:Fdot}}{=} -\frac{\dot q_t \log F_{q_t,\nu}(\rho_t)}{(q_t-1)^2} + \frac{\dot q_t \E_\nu\left[\left(\frac{\rho_t}{\nu}\right)^{q_t} \log \frac{\rho_t}{\nu}\right] - q_t (q_t-1) G_{q_t,\nu}(\rho_t)}{(q_t-1) F_{q_t,\nu}(\rho_t)} \\
  &= -2\alpha R_{q_t,\nu}(\rho_t) + 2\alpha \frac{\E_\nu\left[\left(\frac{\rho_t}{\nu}\right)^{q_t} \log \frac{\rho_t}{\nu}\right]}{F_{q_t,\nu}(\rho_t)} - q_t \frac{G_{q_t,\nu}(\rho_t)}{F_{q_t,\nu}(\rho_t)}.
\end{align*}
In the second equality above we have used our earlier calculation~\eqref{Eq:Fdot} which holds for fixed $q$.
Then by LSI in the form~\eqref{Eq:RenyiLSICalc}, we have
\begin{align*}
\frac{d}{dt} R_{q_t,\nu}(\rho_t) 
  &\le -2\alpha R_{q_t,\nu}(\rho_t) + 2\alpha \left(\frac{q_t}{2\alpha} \frac{G_{q_t,\nu}(\rho_t)}{F_{q_t,\nu}(\rho_t)} + \frac{1}{q_t} \log F_{q_t,\nu}(\rho_t)\right) - q_t \frac{G_{q_t,\nu}(\rho_t)}{F_{q_t,\nu}(\rho_t)} \\
  &= -2\alpha R_{q_t,\nu}(\rho_t) + 2\alpha\left(1-\frac{1}{q_t}\right) R_{q_t,\nu}(\rho_t) \\
  &= -\frac{2\alpha}{q_t} R_{q_t,\nu}(\rho_t).
\end{align*}
Therefore,
\begin{align*}
\frac{d}{dt} \left\{ \left(1-\frac{1}{q_t}\right) R_{q_t,\nu}(\rho_t)\right\}
  &= \frac{\dot q_t}{q_t^2} R_{q_t,\nu}(\rho_t) + \left(1-\frac{1}{q_t}\right) \frac{d}{dt} R_{q_t,\nu}(\rho_t) \\
  &\le \frac{2\alpha(q_t-1)}{q_t^2} R_{q_t,\nu}(\rho_t) - \left(1-\frac{1}{q_t}\right) \frac{2\alpha}{q_t} R_{q_t,\nu}(\rho_t) \\
  &= 0,
\end{align*}
as desired.

Now given $q \ge q_0$, let $t_0 = \frac{1}{2\alpha} \log \frac{q-1}{q_0-1}$ so $q_{t_0} = q$.
Then $R_{q,\nu}(\rho_{t_0}) \le \frac{q}{(q-1)} \frac{(q_0-1)}{q_0} R_{q_0,\nu}(\rho_0) \le R_{q_0,\nu}(\rho_0) < \infty$.
For $t > t_0$, by applying Theorem~\ref{Thm:RenyiLD} starting from $\rho_{t_0}$, we obtain
$R_{q,\nu}(\rho_t) \le e^{-\frac{2\alpha}{q}(t-t_0)} R_{q,\nu}(\rho_{t_0}) \le R_{q,\nu}(\rho_{t_0}) \le R_{q_0,\nu}(\rho_0) < \infty$.
\end{proof}

By combining Theorem~\ref{Thm:RenyiLD} and Lemma~\ref{Lem:Hypercontractivity}, we obtain the following characterization of the behavior of Renyi divergence along the Langevin dynamics under LSI.

\begin{corollary}\label{Cor:RenyiLD}
Suppose $\nu$ satisfies LSI with constant $\alpha > 0$.
Suppose $\rho_0$ satisfies $R_{q_0,\nu}(\rho_0) < \infty$ for some $q_0 > 1$.
Along the Langevin dynamics~\eqref{Eq:LD}, for all $q \ge q_0$ and $t \ge t_0 := \frac{1}{2\alpha} \log \frac{q-1}{q_0-1}$,
\begin{align}\label{Eq:RenyiLDRateFinal}
R_{q,\nu}(\rho_t) \le e^{-\frac{2\alpha}{q}(t-t_0)} R_{q_0,\nu}(\rho_0).
\end{align}
\end{corollary}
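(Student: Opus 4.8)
The plan is to combine the hypercontractivity estimate of Lemma~\ref{Lem:Hypercontractivity} with the exponential decay of Theorem~\ref{Thm:RenyiLD}, in that order. Fix $q \ge q_0$ and set $t_0 = \frac{1}{2\alpha}\log\frac{q-1}{q_0-1}$, chosen precisely so that $q_{t_0} = 1 + e^{2\alpha t_0}(q_0-1) = q$ in the notation of Lemma~\ref{Lem:Hypercontractivity}. First I would invoke Lemma~\ref{Lem:Hypercontractivity} at time $t_0$: it gives $\left(1-\frac{1}{q}\right) R_{q,\nu}(\rho_{t_0}) \le \left(1-\frac{1}{q_0}\right) R_{q_0,\nu}(\rho_0)$, and since $q \ge q_0 > 1$ implies $1-\frac{1}{q} \ge 1-\frac{1}{q_0} > 0$, dividing through yields $R_{q,\nu}(\rho_{t_0}) \le R_{q_0,\nu}(\rho_0) < \infty$. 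In particular the Rényi divergence of order $q$ has become finite by time $t_0$.

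Next, for $t \ge t_0$, I would restart the clock at $t_0$ and apply Theorem~\ref{Thm:RenyiLD} to the Langevin dynamics initialized from $\rho_{t_0}$ (which is legitimate since the dynamics is Markov and time-homogeneous, and since $R_{q,\nu}(\rho_{t_0}) < \infty$). This gives
\begin{align*}
R_{q,\nu}(\rho_t) \le e^{-\frac{2\alpha}{q}(t-t_0)} R_{q,\nu}(\rho_{t_0}).
\end{align*}
Chaining this with the bound $R_{q,\nu}(\rho_{t_0}) \le R_{q_0,\nu}(\rho_0)$ from the previous step produces exactly the claimed inequality~\eqref{Eq:RenyiLDRateFinal}.

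I do not anticipate a genuine obstacle here, since both ingredients are already established in the excerpt; the only point requiring a little care is the bookkeeping around the substitution $q_{t_0} = q$ and the sign of the factor $1-\frac1q$ when dividing, together with the observation that applying Theorem~\ref{Thm:RenyiLD} from an intermediate time is valid by the Markov property of the dynamics. The statement is essentially a clean corollary obtained by splitting the trajectory at the ``warm-up'' time $t_0$: before $t_0$ one uses hypercontractivity to guarantee finiteness and a controlled value, and after $t_0$ one uses the exponential contraction rate $\frac{2\alpha}{q}$.
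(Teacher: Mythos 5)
Your proposal is correct and follows essentially the same route as the paper: apply Lemma~\ref{Lem:Hypercontractivity} at $t_0$ (where $q_{t_0}=q$) to get $R_{q,\nu}(\rho_{t_0}) \le R_{q_0,\nu}(\rho_0)$, then restart Theorem~\ref{Thm:RenyiLD} from $\rho_{t_0}$ and chain the two bounds. The extra remarks about the sign of $1-\frac{1}{q}$ and the Markov/time-homogeneity justification for restarting are harmless clarifications of what the paper leaves implicit.
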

\begin{proof}
By Lemma~\ref{Lem:Hypercontractivity}, at $t = t_0$ we have $R_{q,\nu}(\rho_{t_0}) \le R_{q_0,\nu}(\rho_0)$.
For $t > t_0$, by applying Theorem~\ref{Thm:RenyiLD} starting from $\rho_{t_0}$, we have
$R_{q,\nu}(\rho_t) \le e^{-\frac{2\alpha}{q}(t-t_0)} R_{q,\nu}(\rho_{t_0}) \le e^{-\frac{2\alpha}{q}(t-t_0)} R_{q_0,\nu}(\rho_0)$.
\end{proof}

\subsection{Proofs for \S\ref{Sec:RenyiULA}: R\'enyi divergence along ULA}
\label{App:RenyiULA}

\subsubsection{Proof of Lemma~\ref{Lem:RenyiDecomp}}
\label{App:RenyiDecomp}

\begin{proof}[Proof of Lemma~\ref{Lem:RenyiDecomp}]
By Cauchy-Schwarz inequality,
\begin{align*}
F_{q,\nu}(\rho)
  &= \int \frac{\rho^q}{\nu^{q-1}} \, dx \\
  &= \int \nu_\step \left(\frac{\rho}{\nu_\step} \right)^q \left(\frac{\nu_\step}{\nu}\right)^{q-1} dx \\
  &\le \left(\int \nu_\step \left( \frac{\rho}{\nu_\step} \right)^{2q} dx \right)^{\frac{1}{2}} \left(\int \nu_\step \left( \frac{\nu_\step}{\nu} \right)^{2(q-1)} dx \right)^{\frac{1}{2}} \\
  &= F_{2q,\nu_\step}(\rho)^{\frac{1}{2}} F_{2q-1,\nu}(\nu_\step)^{\frac{1}{2}}.
\end{align*}
Taking logarithm gives
\begin{align*}
(q-1) R_{q,\nu}(\rho) \le \frac{(2q-1)}{2} R_{2q,\nu_\step}(\rho) + \frac{(2q-2)}{2} R_{2q-1,\nu}(\nu_\step).
\end{align*}
Dividing both sides by $q-1>0$ gives the desired inequality~\eqref{Eq:RenyiDecomp}.
\end{proof}

\subsubsection{Proof of Lemma~\ref{Lem:RenyiRateLSI}}
\label{App:RenyiRateLSI}

We will use the following auxiliary results.
Recall that given a map $T \colon \R^n \to \R^n$ and a probability distribution $\rho$, the pushforward $T_\#\rho$ is the distribution of $T(x)$ when $x \sim \rho$.

\begin{lemma}\label{Lem:RenyiBij}
Let $T \colon \R^n \to \R^n$ be a differentiable bijective map.
For any probability distributions $\rho,\nu$, and for all $q > 0$,
\begin{align*}
R_{q,T_\#\nu}(T_\#\rho) = R_{q,\nu}(\rho).
\end{align*}
\end{lemma}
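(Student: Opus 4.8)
The plan is to prove the change-of-variables invariance of R\'enyi divergence directly from the definition, using the standard formula for how densities transform under a diffeomorphism. First I would recall that if $T \colon \R^n \to \R^n$ is a differentiable bijection and $x \sim \rho$, then $y = T(x) \sim T_\#\rho$ has density $(T_\#\rho)(y) = \rho(T^{-1}(y)) \, |\det D T^{-1}(y)|$, where $D T^{-1}$ is the Jacobian of the inverse map; the same formula holds with $\rho$ replaced by $\nu$. The key observation is that the Jacobian factor is \emph{the same} for $\rho$ and $\nu$, so it will cancel in the ratio $\frac{(T_\#\rho)(y)}{(T_\#\nu)(y)} = \frac{\rho(T^{-1}(y))}{\nu(T^{-1}(y))}$.

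Next I would substitute into the definition of $F_{q,\nu}$. Writing $x = T^{-1}(y)$ and using the change-of-variables formula $dy = |\det D T(x)| \, dx$, I compute
\begin{align*}
F_{q,T_\#\nu}(T_\#\rho)
&= \int_{\R^n} (T_\#\nu)(y) \left(\frac{(T_\#\rho)(y)}{(T_\#\nu)(y)}\right)^q dy \\
&= \int_{\R^n} \nu(T^{-1}(y)) \, |\det D T^{-1}(y)| \left(\frac{\rho(T^{-1}(y))}{\nu(T^{-1}(y))}\right)^q dy \\
&= \int_{\R^n} \nu(x) \left(\frac{\rho(x)}{\nu(x)}\right)^q dx \\
&= F_{q,\nu}(\rho),
\end{align*}
where in the third line the Jacobian $|\det D T^{-1}(y)|$ combines with $dy$ to give $dx$. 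Since $R_{q,\nu}(\rho) = \frac{1}{q-1} \log F_{q,\nu}(\rho)$ for $q \neq 1$, the equality $F_{q,T_\#\nu}(T_\#\rho) = F_{q,\nu}(\rho)$ immediately yields $R_{q,T_\#\nu}(T_\#\rho) = R_{q,\nu}(\rho)$; the case $q = 1$ follows by taking the limit $q \to 1$ (or equivalently by the analogous well-known invariance of KL divergence, since $H_\nu = R_{1,\nu}$).

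I do not expect a serious obstacle here, as this is essentially a bookkeeping exercise with the change-of-variables formula; the only point requiring mild care is making sure the Jacobian factors are handled consistently (one appears in the transformed densities, the other in the change of the integration variable, and they are reciprocals). A secondary point worth a sentence is that $T$ being a differentiable bijection with differentiable inverse (a diffeomorphism) is what guarantees $D T^{-1}$ exists and $|\det D T| \neq 0$ a.e.; in the application (Lemma~\ref{Lem:RenyiRateLSI}) the map is the gradient step $x \mapsto x - \epsilon \nabla f(x)$, which is a diffeomorphism precisely when $\epsilon$ is small relative to the smoothness constant $L$, so this hypothesis is met.
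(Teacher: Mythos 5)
Your proof is correct and is essentially the same argument the paper gives: observe that the Jacobian factor cancels in the ratio $\frac{T_\#\rho}{T_\#\nu}$, so $F_{q,T_\#\nu}(T_\#\rho) = F_{q,\nu}(\rho)$ by the change-of-variables formula, and then pass to $R_q$ (with $q=1$ by limit). The only cosmetic difference is that you carry out the substitution $x = T^{-1}(y)$ explicitly with absolute values on the Jacobian determinant, whereas the paper phrases the same computation as an expectation over $X \sim \nu$ (and is a bit loose about the sign of $\det \nabla T$); your version is, if anything, slightly more careful.
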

\begin{proof}
Let $\tilde \rho = T_\#\rho$ and $\tilde \nu = T_\#\nu$.
By the change of variable formula,
\begin{align*}
\rho(x) &= \det(\nabla T(x)) \, \tilde \rho(T(x)), \\
\nu(x) &= \det(\nabla T(x)) \, \tilde \nu(T(x)).
\end{align*}
Since $T$ is differentiable and bijective, $\det(\nabla T(x)) \neq 0$.
Therefore,
\begin{align*}
\frac{\tilde \rho(T(x))}{\tilde \nu(T(x))} = \frac{\rho(x)}{\nu(x)}.
\end{align*}
Now let $X \sim \nu$, so $T(X) \sim \tilde \nu$.
Then for all $q > 0$.
\begin{align*}
F_{q,\tilde\nu}(\tilde\rho) 
  = \E_{\tilde\nu}\left[\left(\frac{\tilde \rho}{\tilde \nu}\right)^q\right]
  = \E_{X \sim \nu}\left[\left(\frac{\tilde \rho(T(X))}{\tilde \nu(T(X))}\right)^q\right]
  = \E_{X \sim \nu}\left[\left(\frac{\rho(X)}{\nu(X)}\right)^q\right]
  = F_{q,\nu}(\rho).
\end{align*}
Suppose $q \neq 1$.
Taking logarithm on both sides and dividing by $q-1 \neq 0$ yields $R_{q,\tilde\nu}(\tilde\rho) = R_{q,\nu}(\rho)$, as desired.
The case $q=1$ follows from taking limit $q \to 1$, or by an analogous direct argument:
\begin{align*}
H_{\tilde \nu}(\tilde \rho) 
  = \E_{\tilde\nu}\left[\frac{\tilde \rho}{\tilde \nu}  \log \frac{\tilde \rho}{\tilde \nu}\right]
  = \E_{X \sim \nu}\left[\frac{\tilde \rho(T(X))}{\tilde \nu(T(X))} \log \frac{\tilde \rho(T(X))}{\tilde \nu(T(X))}\right]
  = \E_{X \sim \nu}\left[\frac{\rho(X)}{\nu(X)} \log \frac{\rho(X)}{\nu(X)}\right]
  = H_{\nu}(\rho).
\end{align*}
\end{proof}

We have the following standard result on how the LSI constant changes under a Lipschitz mapping.
We recall that $T \colon \R^n \to \R^n$ is $L$-Lipschitz if $\|T(x)-T(y)\| \le L\|x-y\|$ for all $x,y \in \R^n$.
For completeness, we provide the proof of Lemma~\ref{Lem:LSILipschitz} in Appendix~\ref{App:ProofLSILipschitz}.

\begin{lemma}\label{Lem:LSILipschitz}
Suppose a probability distribution $\nu$ satisfies LSI with constant $\alpha > 0$.
Let $T \colon \R^n \to \R^n$ be a differentiable $L$-Lipschitz map.
Then $\tilde \nu = T_\#\nu$ satisfies LSI with constant $\alpha/L^2$.
\end{lemma}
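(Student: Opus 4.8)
The plan is to transfer the log-Sobolev inequality of $\nu$ along $T$ by pulling back an arbitrary test function on the target space to one on the source space, applying the hypothesis, and then pushing everything forward again. So, fix a smooth $g \colon \R^n \to \R$ with $\E_{\tilde\nu}[g^2] < \infty$, and set $h = g \circ T$. Since $T$ is differentiable, $h$ is a valid test function for the LSI~\eqref{Eq:LSI} of $\nu$ (note $\E_\nu[h^2] = \E_{\tilde\nu}[g^2] < \infty$), so that
\[
\E_\nu[h^2 \log h^2] - \E_\nu[h^2] \log \E_\nu[h^2] \le \frac{2}{\alpha} \E_\nu[\|\nabla h\|^2].
\]

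Next I would use the defining property of the pushforward, $\E_\nu[\phi \circ T] = \E_{\tilde\nu}[\phi]$ for every integrable $\phi$, applied with $\phi = g^2$ and $\phi = g^2\log g^2$, to recognize the left-hand side above as exactly the entropy functional of the LSI for $\tilde\nu$ evaluated at $g$, namely $\E_{\tilde\nu}[g^2\log g^2] - \E_{\tilde\nu}[g^2]\log\E_{\tilde\nu}[g^2]$. For the right-hand side, the chain rule gives $\nabla h(x) = \nabla T(x)^\top (\nabla g)(T(x))$, where $\nabla T(x)$ is the Jacobian of $T$ at $x$; since $T$ is differentiable and $L$-Lipschitz, $\|\nabla T(x)\|_{\mathrm{op}} \le L$ for every $x$, hence $\|\nabla h(x)\| \le L\,\|(\nabla g)(T(x))\|$. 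Squaring, taking expectation under $\nu$, and using the pushforward identity once more with $\phi = \|\nabla g\|^2$ yields $\E_\nu[\|\nabla h\|^2] \le L^2\,\E_{\tilde\nu}[\|\nabla g\|^2]$.

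Combining the two bounds gives
\[
\E_{\tilde\nu}[g^2\log g^2] - \E_{\tilde\nu}[g^2]\log\E_{\tilde\nu}[g^2] \le \frac{2L^2}{\alpha}\,\E_{\tilde\nu}[\|\nabla g\|^2] = \frac{2}{\alpha/L^2}\,\E_{\tilde\nu}[\|\nabla g\|^2],
\]
which is precisely LSI for $\tilde\nu$ with constant $\alpha/L^2$. I do not expect a substantial obstacle: the argument is the standard transfer of LSI along a Lipschitz map, and the only point requiring minor care is the admissibility of $h = g\circ T$ in~\eqref{Eq:LSI}, which follows from differentiability of $T$ together with $\E_{\tilde\nu}[g^2] = \E_\nu[h^2] < \infty$; the Jacobian norm bound $\|\nabla T\|_{\mathrm{op}} \le L$ is the one place the Lipschitz hypothesis is used.
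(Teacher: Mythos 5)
Your argument is correct and is essentially the paper's own proof: pull a test function $g$ on the target back to $h = g\circ T$, apply the LSI for $\nu$, identify the entropy term via the pushforward identity, and bound $\|\nabla h\|$ by $L\,\|(\nabla g)\circ T\|$ using the Jacobian operator-norm bound from Lipschitzness. The only cosmetic difference is that you explicitly flag the admissibility of $h$ as a test function, which the paper takes for granted.
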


We also recall the following result on how the LSI constant changes along Gaussian convolution.
We provide the proof of Lemma~\ref{Lem:LSIGaussianConv} in Appendix~\ref{App:ProofLSIGaussianConv}.

\begin{lemma}\label{Lem:LSIGaussianConv}
Suppose a probability distribution $\nu$ satisfies LSI with constant $\alpha > 0$.
For $t > 0$, the probability distribution $\tilde \nu_t = \nu \ast \mathcal{N}(0,\,2tI)$ satisfies LSI with constant $\big(\frac{1}{\alpha}+2t \big)^{-1}$.
\end{lemma}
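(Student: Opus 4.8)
The plan is to realize $\tilde\nu_t$ as the distribution of $W=X+\sqrt{2t}\,Z$ with $X\sim\nu$ and $Z\sim\N(0,I)$ independent, and to bound the log-Sobolev functional of $\tilde\nu_t$ by conditioning on $X$ and splitting the entropy. This is the special case $\gamma=\N(0,2tI)$ (which satisfies LSI with constant $\tfrac{1}{2t}$ by Bakry--\'Emery) of the general fact that a convolution $\mu\ast\gamma$ of two measures satisfying LSI with constants $\alpha,\lambda$ again satisfies LSI, with the reciprocal-additive constant $(\alpha^{-1}+\lambda^{-1})^{-1}$.

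First I would record the entropy decomposition. Writing $Q_t$ for the heat semigroup, $Q_t\phi(x):=\E_Z[\phi(x+\sqrt{2t}Z)]=\E[\phi(W)\mid X=x]$, the tower property gives, for $\phi=g^2$,
\[
\mathrm{Ent}_{\tilde\nu_t}(g^2)=\E_{X\sim\nu}\!\big[\mathrm{Ent}_{\N(X,2tI)}(g^2)\big]+\mathrm{Ent}_\nu(Q_t g^2).
\]
For the first term, each conditional law $\N(x,2tI)$ is a translate of $\N(0,2tI)$, hence satisfies LSI with constant $\tfrac{1}{2t}$, so $\mathrm{Ent}_{\N(x,2tI)}(g^2)\le 4t\,\E_{\N(x,2tI)}[\|\nabla g\|^2]$; averaging over $X$ and using the identity $\E_\nu[Q_t h]=\E_{\tilde\nu_t}[h]$ (symmetry of the heat kernel together with Fubini) bounds this contribution by $4t\,\E_{\tilde\nu_t}[\|\nabla g\|^2]$.

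For the second term I would apply LSI for $\nu$ (constant $\alpha$) to $\sqrt{Q_t g^2}$, which gives $\mathrm{Ent}_\nu(Q_t g^2)\le\tfrac{1}{2\alpha}\,\E_\nu\!\big[\|\nabla Q_t g^2\|^2/Q_t g^2\big]$, and then use the pointwise estimate $\|\nabla Q_t g^2\|^2/Q_t g^2\le 4\,Q_t(\|\nabla g\|^2)$. The latter follows because $\nabla Q_t g^2=Q_t(\nabla g^2)=2\,\E_Z[g\nabla g\,(x+\sqrt{2t}Z)]$, so Cauchy--Schwarz in $Z$ yields $\|\nabla Q_t g^2(x)\|\le 2(Q_t g^2(x))^{1/2}(Q_t(\|\nabla g\|^2)(x))^{1/2}$; averaging over $\nu$ and using $\E_\nu[Q_t h]=\E_{\tilde\nu_t}[h]$ again bounds the second term by $\tfrac{2}{\alpha}\,\E_{\tilde\nu_t}[\|\nabla g\|^2]$. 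Summing the two contributions, $\mathrm{Ent}_{\tilde\nu_t}(g^2)\le\big(4t+\tfrac{2}{\alpha}\big)\E_{\tilde\nu_t}[\|\nabla g\|^2]$, i.e. LSI for $\tilde\nu_t$ with constant $\beta$ solving $\tfrac{2}{\beta}=4t+\tfrac{2}{\alpha}$, which is exactly $\beta=\big(\tfrac{1}{\alpha}+2t\big)^{-1}$.

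The main obstacle I anticipate is not any single inequality but the bookkeeping needed to make the steps rigorous: justifying differentiation under the integral sign so that $\nabla Q_t=Q_t\nabla$, ensuring the integrability required for the conditional-entropy chain rule, and handling the places where $Q_t g^2$ could degenerate or where $g$ is merely smooth rather than compactly supported (this is dealt with by first proving the inequality for sufficiently nice $g$ and then approximating). The pointwise gradient bound and the self-adjointness identity $\E_\nu[Q_t h]=\E_{\tilde\nu_t}[h]$ are the two ingredients doing the real work.
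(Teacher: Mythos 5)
Your proof is correct and self-contained, whereas the paper simply \emph{cites} the general convolution stability of LSI (the result that if $\mu$ satisfies LSI with constant $\alpha$ and $\gamma$ satisfies LSI with constant $\lambda$, then $\mu \ast \gamma$ satisfies LSI with constant $(\alpha^{-1}+\lambda^{-1})^{-1}$, attributed to~\cite{C04}) and then specializes to $\gamma = \N(0,2tI)$ with $\lambda = \tfrac{1}{2t}$. What you have done is re-derive that cited lemma from first principles, via the standard conditional-entropy decomposition
\[
\mathrm{Ent}_{\tilde\nu_t}(g^2)=\E_{X\sim\nu}\big[\mathrm{Ent}_{\N(X,2tI)}(g^2)\big]+\mathrm{Ent}_\nu(Q_t g^2),
\]
handling the first term with the Gaussian LSI ($\frac{1}{2t}$ for each translate $\N(x,2tI)$) and the second term with LSI for $\nu$ applied to $\sqrt{Q_t g^2}$ together with the pointwise Cauchy--Schwarz bound $\|\nabla Q_t g^2\|^2 \le 4\,(Q_t g^2)\,Q_t(\|\nabla g\|^2)$. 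The bookkeeping checks out: the two contributions give $4t\,\E_{\tilde\nu_t}[\|\nabla g\|^2]$ and $\frac{2}{\alpha}\,\E_{\tilde\nu_t}[\|\nabla g\|^2]$ respectively, and solving $\frac{2}{\beta}=4t+\frac{2}{\alpha}$ yields exactly $\beta=(\frac{1}{\alpha}+2t)^{-1}$. One small simplification: the identity $\E_\nu[Q_t h]=\E_{\tilde\nu_t}[h]$ needs only Fubini (write out the double integral of $h(x+z)$ against $\nu(dx)\,\gamma_t(dz)$); symmetry of the heat kernel is not required. The trade-off is clear: the paper's one-line proof is modular and short, while your argument makes the constant's origin transparent and would let a reader verify the sharp reciprocal-additivity without chasing the reference -- and it also makes explicit exactly where the regularity/integrability caveats you flag at the end would need to be addressed.
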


We now derive a formula for the decrease of R\'enyi divergence along simultaneous heat flow.
We note the resulting formula~\eqref{Eq:RenyiHeat} is similar to the formula~\eqref{Eq:RenyiLD} for the decrease of R\'enyi divergence along the Langevin dynamics.
A generalization of the following formula is also useful for analyzing a proximal sampling algorithm under isoperimetry~\cite{CCSW22}.

\begin{lemma}\label{Lem:RenyiHeat}
For any probability distributions $\rho_0,\nu_0$,
and for any  $t \ge 0$, let $\rho_t = \rho_0 \ast \N(0,2tI)$ and $\nu_t = \nu_0 \ast \N(0,2tI)$.
Then for all $q > 0$,
\begin{align}\label{Eq:RenyiHeat}
\frac{d}{dt} R_{q,\nu_t}(\rho_t) = -q\frac{G_{q,\nu_t}(\rho_t)}{F_{q,\nu_t}(\rho_t)}.
\end{align}
\end{lemma}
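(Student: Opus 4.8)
The plan is to mimic the proof of Lemma~\ref{Lem:RenyiLD}: work with $F_{q,\nu_t}(\rho_t)=\int_{\R^n}\rho_t^{q}\,\nu_t^{1-q}\,dx$, compute $\frac{d}{dt}F_{q,\nu_t}(\rho_t)$ directly, and then divide by $(q-1)F_{q,\nu_t}(\rho_t)$ to read off the rate of $R_{q,\nu_t}$ from \eqref{Eq:RenyiDef}. The only structural input is that $\rho_t$ and $\nu_t$ are Gaussian convolutions, so both solve the heat equation: $\partial_t\rho_t=\Delta\rho_t$ and $\partial_t\nu_t=\Delta\nu_t$. (The subtlety compared with Lemma~\ref{Lem:RenyiLD} is that here the reference measure $\nu_t$ is also moving, so both factors of the integrand get differentiated in $t$.)

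First I would differentiate under the integral sign and substitute the two heat equations:
\[
\frac{d}{dt}F_{q,\nu_t}(\rho_t)=\int_{\R^n}\Big(q\,\rho_t^{q-1}\nu_t^{1-q}\,\Delta\rho_t+(1-q)\,\rho_t^{q}\nu_t^{-q}\,\Delta\nu_t\Big)\,dx.
\]
Next, integrate by parts to move each Laplacian onto its companion factor, then expand $\nabla(\rho_t^{q-1}\nu_t^{1-q})$ and $\nabla(\rho_t^{q}\nu_t^{-q})$ by the product rule. The two cross terms proportional to $\nabla\rho_t\cdot\nabla\nu_t$ add up, and the whole integrand completes a square:
\[
\frac{d}{dt}F_{q,\nu_t}(\rho_t)=-q(q-1)\int_{\R^n}\rho_t^{q-2}\nu_t^{1-q}\,\Big\|\nabla\rho_t-\tfrac{\rho_t}{\nu_t}\nabla\nu_t\Big\|^2\,dx
=-q(q-1)\int_{\R^n}\nu_t\Big(\tfrac{\rho_t}{\nu_t}\Big)^{q-2}\Big\|\nabla\tfrac{\rho_t}{\nu_t}\Big\|^2\,dx,
\]
where the last equality uses $\nabla\rho_t-\tfrac{\rho_t}{\nu_t}\nabla\nu_t=\nu_t\,\nabla\tfrac{\rho_t}{\nu_t}$ together with $\rho_t^{q-2}\nu_t^{1-q}\cdot\nu_t^2=\nu_t(\rho_t/\nu_t)^{q-2}$. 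By the definition \eqref{Eq:InfoDef} of the Rényi information this is exactly $\frac{d}{dt}F_{q,\nu_t}(\rho_t)=-q(q-1)G_{q,\nu_t}(\rho_t)$, the same identity as \eqref{Eq:Fdot} but with the fixed $\nu$ replaced by the moving $\nu_t$.

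Finally, for $q\neq1$, dividing by $(q-1)F_{q,\nu_t}(\rho_t)$ and using $R_{q,\nu_t}(\rho_t)=\frac{1}{q-1}\log F_{q,\nu_t}(\rho_t)$ gives $\frac{d}{dt}R_{q,\nu_t}(\rho_t)=-q\,G_{q,\nu_t}(\rho_t)/F_{q,\nu_t}(\rho_t)$, which is \eqref{Eq:RenyiHeat}; the case $q=1$ follows by taking $q\to1$ (or directly, since then $F_{1,\nu_t}\equiv1$ and the identity collapses to the heat-flow analogue of Lemma~\ref{Lem:Hdot}). I do not expect a genuine obstacle here: the content is the algebra in the middle step plus justifying that one may differentiate under the integral and that the boundary terms in the integration by parts vanish. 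Since $\rho_t$ and $\nu_t$ are Gaussian convolutions (so they and their derivatives enjoy Gaussian-type control outside the supports of $\rho_0,\nu_0$) and we work under the paper's standing smoothness and decay assumptions on densities, these justifications are routine.
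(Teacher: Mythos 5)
Your proposal is correct and follows essentially the same route as the paper: differentiate $F_{q,\nu_t}(\rho_t)$, substitute the two heat equations, integrate by parts, and read off $-q(q-1)G_{q,\nu_t}(\rho_t)$; the only cosmetic difference is that you do a single integration by parts and complete the square, whereas the paper uses the symmetric identity $\int f\,\Delta g\,dx=\int g\,\Delta f\,dx$ together with the explicit formula for $\Delta(h^q)$, and these are algebraically equivalent. Your intermediate formula $\frac{d}{dt}F_{q,\nu_t}(\rho_t)=-q(q-1)\int\rho_t^{q-2}\nu_t^{1-q}\|\nabla\rho_t-\tfrac{\rho_t}{\nu_t}\nabla\nu_t\|^2\,dx$ checks out, and the $q=1$ case is handled the same way in both arguments.
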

\begin{proof}
By definition, $\rho_t$ and $\nu_t$ evolve following the simultaneous heat flow:
\begin{align}\label{Eq:SHF}
\part{\rho_t}{t} = \Delta \rho_t, \qquad \part{\nu_t}{t} = \Delta \nu_t.
\end{align}
We will use the following identity for any smooth function $h \colon \R^n \to \R$,
\begin{align*} 
\Delta(h^q) = \nabla \cdot \left(qh^{q-1} \nabla h\right)
= q(q-1)h^{q-2} \|\nabla h\|^2 + qh^{q-1} \Delta h.
\end{align*}
We will also use the integration by parts formula~\eqref{Eq:IBP}.
Then along the simultaneous heat flow~\eqref{Eq:SHF},
\begin{align}
\frac{d}{dt} F_{q,\nu_t}(\rho_t)
 &= \frac{d}{dt} \int \frac{\rho_t^q}{\nu_t^{q-1}} \, dx \notag \\
 &= \int q \left( \frac{\rho_t}{\nu_t} \right)^{q-1} \part{\rho_t}{t} \, dx - \int (q-1) \left(\frac{\rho_t}{\nu_t}\right)^q \part{\nu_t}{t} \, dx \notag \\
 &= q \int \left( \frac{\rho_t}{\nu_t} \right)^{q-1} \Delta \rho_t \, dx - (q-1)  \int \left(\frac{\rho_t}{\nu_t}\right)^q \Delta \nu_t \, dx \notag \\
 &= q \int \Delta\left(\left( \frac{\rho_t}{\nu_t} \right)^{q-1}\right) \rho_t \, dx - (q-1)  \int \Delta \left(\left(\frac{\rho_t}{\nu_t}\right)^q \right) \nu_t \, dx \notag \\
 &= q \int \left((q-1)(q-2) \left( \frac{\rho_t}{\nu_t} \right)^{q-3} \left\|\nabla \frac{\rho_t}{\nu_t} \right\|^2 + (q-1) \left( \frac{\rho_t}{\nu_t} \right)^{q-2} \Delta \frac{\rho_t}{\nu_t} \right) \rho_t \, dx \notag \\
 &~~~~ - (q-1) \int \left(q(q-1) \left( \frac{\rho_t}{\nu_t} \right)^{q-2} \left\|\nabla \frac{\rho_t}{\nu_t} \right\|^2 + q \left( \frac{\rho_t}{\nu_t} \right)^{q-1} \Delta \frac{\rho_t}{\nu_t} \right) \nu_t \, dx \notag \\
 &= -q(q-1) \int \nu_t \left(\frac{\rho_t}{\nu_t} \right)^{q-2} \left\|\nabla \frac{\rho_t}{\nu_t} \right\|^2 \, dx \notag \\
 &= -q(q-1) G_{q,\nu_t}(\rho_t).  \label{Eq:RenyiHeatF}
\end{align}
Note that the identity~\eqref{Eq:RenyiHeatF} above is analogous to the identity~\eqref{Eq:Fdot} along the Langevin dynamics.
Therefore, for $q \neq 1$,
\begin{align*}
\frac{d}{dt} R_{q,\nu_t}{\rho_t}
  = \frac{1}{q-1} \frac{\frac{d}{dt} F_{q,\nu_t}(\rho_t)}{F_{q,\nu_t}(\rho_t)} = -q \frac{G_{q,\nu_t}(\rho_t)}{F_{q,\nu_t}(\rho_t)},
\end{align*}
as desired.

The case $q=1$ follows from taking limit $q \to 1$, or by an analogous direct calculation.
We will use the following identity for $h \colon \R^n \to \R_{> 0}$,
\begin{align*}
\Delta \log h = \nabla \cdot \left(\frac{\nabla h}{h} \right) = \frac{\Delta h}{h} - \|\nabla \log h\|^2.
\end{align*}
Then along the simultaneous heat flow~\eqref{Eq:SHF},
\begin{align*}
\frac{d}{dt} H_{\nu_t}(\rho_t)
  &= \frac{d}{dt} \int \rho_t \log \frac{\rho_t}{\nu_t} \, dx \\
  &= \int \part{\rho_t}{t} \log \frac{\rho_t}{\nu_t} \, dx + \int \rho_t  \frac{\nu_t}{\rho_t} \part{}{t} \left(\frac{\rho_t}{\nu_t}\right) \, dx \\
  &= \int \Delta \rho_t \,  \log \frac{\rho_t}{\nu_t} \, dx + \int \nu_t \left( \frac{1}{\nu_t} \part{\rho_t}{t} \, dx - \frac{\rho_t}{\nu_t^2} \part{\nu_t}{t} \right) \, dx \\
  &= \int \rho_t \, \Delta \log \frac{\rho_t}{\nu_t} \, dx - \int \frac{\rho_t}{\nu_t} \Delta \nu_t \, dx \\
  &= \int \rho_t \, \left(\frac{\nu_t}{\rho_t} \Delta \left( \frac{\rho_t}{\nu_t} \right) - \left\|\nabla \log \frac{\rho_t}{\nu_t}\right\|^2
  \right)  \, dx - \int \frac{\rho_t}{\nu_t} \Delta \nu_t \, dx \\
  &= -J_{\nu_t}(\rho_t),
\end{align*}
as desired. 
Note that this is also analogous to the identity~\eqref{Eq:HdotLD} along the Langevin dynamics.
\end{proof}

We are now ready to prove Lemma~\ref{Lem:RenyiRateLSI}.

\begin{proof}[Proof of Lemma~\ref{Lem:RenyiRateLSI}]
We will prove that along each step of ULA~\eqref{Eq:ULA} from $x_k \sim \rho_k$ to $x_{k+1} \sim \rho_{k+1}$, the R\'enyi divergence with respect to $\nu_\step$ decreases by a constant factor:
\begin{align}\label{Eq:RenyiDecrease}
R_{q,\nu_\step}(\rho_{k+1}) \le e^{-\frac{\beta \step}{q}} R_{q,\nu_\step}(\rho_k).
\end{align}
Iterating the bound above yields the desired claim~\eqref{Eq:RenyiRateLSI}.

We decompose each step of ULA~\eqref{Eq:ULA} into a sequence of two steps:
\begin{subequations}
\begin{align}
\tilde \rho_k &= (I - \step \nabla f)_\# \rho_k, \label{Eq:ULA-1} \\
\rho_{k+1} &= \tilde \rho_k \ast \N(0,2\step I).  \label{Eq:ULA-2}
\end{align}
\end{subequations}

In the first step~\eqref{Eq:ULA-1}, we apply a smooth deterministic map $T(x) = x-\step \nabla f(x)$.
Since $\nabla f$ is $L$-Lipschitz and $\step < \frac{1}{L}$, $T$ is a bijection.
Then by Lemma~\ref{Lem:RenyiBij},
\begin{align}\label{Eq:Renyi-Calc1}
R_{q,\nu_\step}(\rho_k) = R_{q,\tilde \nu_\step}(\tilde \rho_k)
\end{align}
where $\tilde \nu_\step = (I - \step \nabla f)_\# \nu_\step$.
Recall by Assumption~\ref{As:RenyiLSI} that $\nu_\step$ satisfies LSI with constant $\beta$.
Since the map $T(x) = x - \step \nabla f(x)$ is $(1+\step L)$-Lipschitz, by Lemma~\ref{Lem:LSILipschitz} we know that $\tilde \nu_\step$ satisfies LSI with constant $\frac{\beta}{(1+\step L)^2}$.

In the second step~\eqref{Eq:ULA-2}, we convolve with a Gaussian distribution, which is the result of evolving along the heat flow at time $\step$.
For $0 \le t \le \step$, let $\tilde \rho_{k,t} = \tilde \rho_k \ast \N(0,2t I)$ and $\tilde \nu_{\step,t} = \tilde \nu_\step \ast \N(0,2tI)$,
so $\tilde \rho_{k,\step} = \tilde \rho_{k+1}$ and $\tilde \nu_{\step,\step} = \nu_\step$.
By Lemma~\ref{Lem:RenyiHeat},
\begin{align*}
\frac{d}{dt} R_{q,\tilde \nu_{\step,t}}(\tilde \rho_{k,t}) = -q\frac{G_{q,\tilde\nu_{\step,t}}(\tilde\rho_{k,t})}{F_{q,\tilde\nu_{\step,t}}(\tilde\rho_{k,t})}.
\end{align*}
Since $\tilde \nu_\step$ satisfies LSI with constant $\frac{\beta}{(1+\step L)^2}$, by Lemma~\ref{Lem:LSIGaussianConv} we know that $\tilde \nu_{\step,t}$ satisfies LSI with constant $\big(\frac{(1+\step L)^2}{\beta}+2t\big)^{-1} \ge \big(\frac{(1+\step L)^2}{\beta}+2\step \big)^{-1}$ for $0 \le t \le \step$. 
In particular, since $\step \le \min\{\frac{1}{3L}, \frac{1}{9\beta}\}$, the LSI constant is $\big(\frac{(1+\step L)^2}{\beta}+2\step \big)^{-1}  \ge \big(\frac{16}{9\beta}+\frac{2}{9\beta} \big)^{-1} = \frac{\beta}{2}$.
Then by Lemma~\ref{Lem:RenyiLSI},
\begin{align*}
\frac{d}{dt} R_{q,\tilde \nu_{\step,t}}(\tilde \rho_{k,t}) = -q\frac{G_{q,\tilde\nu_{\step,t}}(\tilde\rho_{k,t})}{F_{q,\tilde\nu_{\step,t}}(\tilde\rho_{k,t})} \le -\frac{\beta}{q} R_{q,\tilde\nu_{\step,t}}(\tilde\rho_{\step,t}).
\end{align*}
Integrating over $0 \le t \le \step$ gives
\begin{align}\label{Eq:Renyi-Calc2}
R_{q,\nu_\step}(\rho_{k+1}) = R_{q,\tilde\nu_{\step,\step}}(\tilde\rho_{k,\step}) \le e^{-\frac{\beta \step}{q}} R_{q,\tilde \nu_\step}(\tilde\rho_k).
\end{align}
Combining~\eqref{Eq:Renyi-Calc1} and~\eqref{Eq:Renyi-Calc2} gives the desired inequality~\eqref{Eq:RenyiDecrease}.
\end{proof}

\subsubsection{Proof of Theorem~\ref{Thm:RenyiRate}}
\label{App:ProofThmRenyiRate}

\begin{proof}[Proof of Theorem~\ref{Thm:RenyiRate}]
This follows directly from Lemma~\ref{Lem:RenyiDecomp} and Lemma~\ref{Lem:RenyiRateLSI}.
\end{proof}

\subsection{Details for \S\ref{Sec:Poincare}: Poincar\'e inequality}
\label{App:Poincare}

\subsubsection{Proof of Lemma~\ref{Lem:RenyiPI}}
\label{App:RenyiPI}

\begin{proof}[Proof of Lemma~\ref{Lem:RenyiPI}]
We plug in $g^2 = \left(\frac{\rho}{\nu}\right)^q$ to Poincar\'e inequality~\eqref{Eq:PI} and use the monotonicity condition from Lemma~\ref{Lem:RenyiIncreasing} to obtain
\begin{align*}
\frac{q^2}{4\alpha} G_{q,\nu}(\rho)
  &\ge F_{q,\nu}(\rho) - F_{\frac{q}{2},\nu}(\rho)^2 \\
  &= e^{(q-1) R_{q,\nu}(\rho)} - e^{(q-2) R_{\frac{q}{2},\nu}(\rho)} \\
  &\ge e^{(q-1) R_{q,\nu}(\rho)} - e^{(q-2) R_{q,\nu}(\rho)} \\
  &= F_{q,\nu}(\rho) \left(1-e^{-R_{q,\nu}(\rho)}\right).
\end{align*}
Dividing both sides by $F_{q,\nu}(\rho)$ and rearranging yields the desired inequality.
\end{proof}

\subsubsection{Proof of Theorem~\ref{Thm:RenyiLP}}
\label{App:RenyiLP}

\begin{proof}[Proof of Theorem~\ref{Thm:RenyiLP}]
By Lemma~\ref{Lem:RenyiLD} and Lemma~\ref{Lem:RenyiPI},
\begin{align*}
\frac{d}{dt}  R_{q,\nu}(\rho_t) = -q \frac{G_{q,\nu}(\rho_t)}{F_{q,\nu}(\rho_t)} 
\le -\frac{4\alpha}{q} \left(1-e^{-R_{q,\nu}(\rho_t)}\right).
\end{align*}
We now consider two possibilities:
\begin{enumerate}
  \item If $R_{q,\nu}(\rho_0) \ge 1$, then as long as $R_{q,\nu}(\rho_t) \ge 1$, we have $1-e^{-R_{q,\nu}(\rho_t)} \ge 1-e^{-1} > \frac{1}{2}$, so $\frac{d}{dt} R_{q,\nu}(\rho_t) \le -\frac{2\alpha}{q}$, which implies $R_{q,\nu}(\rho_t) \le R_{q,\nu}(\rho_0) - \frac{2\alpha t}{q}$.
 \item If $R_{q,\nu}(\rho_0) \le 1$, then $R_{q,\nu}(\rho_t) \le 1$, and thus $\frac{1-e^{-R_{q,\nu}(\rho_t)}}{R_{q,\nu}(\rho_t)} \ge \frac{1}{1+R_{q,\nu}(\rho_t)} \ge \frac{1}{2}$.
Thus, in this case
$\frac{d}{dt} R_{q,\nu}(\rho_t) \le -\frac{2\alpha}{q} R_{q,\nu}(\rho_t)$,
and integrating gives
$R_{q,\nu}(\rho_t) \le e^{-\frac{2 \alpha t}{q}} R_{q,\nu}(\rho_0)$,
as desired.
\end{enumerate}
\end{proof}

\subsubsection{Proof of Lemma~\ref{Lem:RenyiRateP}}
\label{App:RenyiRateP}

We will use the following auxiliary results, which are analogous to Lemma~\ref{Lem:LSILipschitz} and Lemma~\ref{Lem:LSIGaussianConv}.
We provide the proof of Lemma~\ref{Lem:PLipschitz} in Appendix~\ref{App:ProofPLipschitz}, and the proof of Lemma~\ref{Lem:PGaussianConv} in Appendix~\ref{App:ProofPGaussianConv}.

\begin{lemma}\label{Lem:PLipschitz}
Suppose a probability distribution $\nu$ satisfies Poincar\'e inequality with constant $\alpha > 0$.
Let $T \colon \R^n \to \R^n$ be a differentiable $L$-Lipschitz map.
Then $\tilde \nu = T_\#\nu$ satisfies Poincar\'e inequality with constant $\alpha/L^2$.
\end{lemma}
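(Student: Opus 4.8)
The plan is to mirror the proof of Lemma~\ref{Lem:LSILipschitz} almost verbatim, replacing the entropy functional by the variance. Given a smooth test function $g \colon \R^n \to \R$ (with $\E_{\tilde\nu}[g^2] < \infty$), set $\tilde g = g \circ T$, i.e.\ $\tilde g(x) = g(T(x))$, and let $X \sim \nu$, so $T(X) \sim \tilde\nu$. The first step is to observe that the pushforward transports moments exactly: $\E_{\tilde\nu}[g] = \E_{X\sim\nu}[g(T(X))] = \E_\nu[\tilde g]$ and likewise $\E_{\tilde\nu}[g^2] = \E_\nu[\tilde g^2]$, hence $\Var_{\tilde\nu}(g) = \Var_\nu(\tilde g)$.

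The second step controls the Dirichlet energy exactly as in Lemma~\ref{Lem:LSILipschitz}. By the chain rule $\nabla \tilde g(x) = \nabla T(x)\,\nabla g(T(x))$, and since $T$ is $L$-Lipschitz its Jacobian has operator norm $\|\nabla T(x)\| \le L$ at every $x$, so $\|\nabla \tilde g(x)\| \le L\,\|\nabla g(T(x))\|$. Taking expectation over $X \sim \nu$ yields $\E_\nu[\|\nabla \tilde g\|^2] \le L^2\,\E_{X\sim\nu}[\|\nabla g(T(X))\|^2] = L^2\,\E_{\tilde\nu}[\|\nabla g\|^2]$.

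Finally, I would apply the Poincar\'e inequality~\eqref{Eq:PI} for $\nu$ to the test function $\tilde g$ (which is smooth since $g$ is smooth and $T$ differentiable) and chain the two estimates:
\begin{align*}
\Var_{\tilde\nu}(g) \;=\; \Var_\nu(\tilde g) \;\le\; \frac{1}{\alpha}\,\E_\nu[\|\nabla \tilde g\|^2] \;\le\; \frac{L^2}{\alpha}\,\E_{\tilde\nu}[\|\nabla g\|^2].
\end{align*}
Since $g$ was an arbitrary smooth function, this is precisely the statement that $\tilde\nu$ satisfies Poincar\'e inequality with constant $\alpha/L^2$.

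There is no serious obstacle here: the argument is a routine transfer lemma, and the only points deserving a word of care are (i) that the $L$-Lipschitz hypothesis does give the pointwise operator-norm bound $\|\nabla T(x)\| \le L$ (standard for differentiable maps), and (ii) that every smooth test function $g$ on the target space pulls back to an admissible test function $\tilde g$ on the source space, so that applying $\nu$'s Poincar\'e inequality to $\tilde g$ suffices to verify $\tilde\nu$'s for all $g$.
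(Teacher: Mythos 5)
Your proof is correct and follows exactly the same route as the paper: pull back the test function along $T$, use that the pushforward preserves variance ($\Var_{\tilde\nu}(g)=\Var_\nu(\tilde g)$), bound the Dirichlet energy via the chain rule and the $L$-Lipschitz bound on $\nabla T$, and then apply $\nu$'s Poincar\'e inequality to $\tilde g$. No differences worth noting.
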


\begin{lemma}\label{Lem:PGaussianConv}
Suppose a probability distribution $\nu$ satisfies Poincar\'e inequality with constant $\alpha > 0$.
For $t > 0$, the probability distribution $\tilde \nu_t = \nu \ast \mathcal{N}(0,\,2tI)$ satisfies Poincar\'e inequality with constant $\big(\frac{1}{\alpha}+2t \big)^{-1}$.
\end{lemma}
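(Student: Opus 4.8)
The plan is to deduce this exactly as in the proof of Lemma~\ref{Lem:LSIGaussianConv}, via the general convolution property of the Poincar\'e inequality: if $\nu$ satisfies PI with constant $\alpha > 0$ and $\mu$ satisfies PI with constant $\tilde\alpha > 0$, then $\nu \ast \mu$ satisfies PI with constant $\big(\frac{1}{\alpha} + \frac{1}{\tilde\alpha}\big)^{-1}$. Granting this, since the Gaussian $\N(0, 2tI)$ satisfies PI with constant $\frac{1}{2t}$ (the classical Gaussian Poincar\'e inequality, with its sharp constant), applying the property with $\mu = \N(0,2tI)$ and $\tilde\alpha = \frac{1}{2t}$ immediately yields that $\tilde\nu_t = \nu \ast \N(0,2tI)$ satisfies PI with constant $\big(\frac{1}{\alpha} + 2t\big)^{-1}$, as claimed.

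For completeness I would prove the convolution property directly by the law of total variance. Let $X \sim \nu$ and $Y \sim \mu$ be independent, so $Z := X + Y \sim \nu \ast \mu$. For a smooth test function $g$, write $\Var_Z(g(Z)) = \E_X\big[\Var(g(X+Y) \mid X)\big] + \Var_X\big(\E[g(X+Y) \mid X]\big)$. The first term is handled by applying PI for $\mu$ conditionally on each value $X = x$: $\Var(g(x+Y) \mid X = x) \le \frac{1}{\tilde\alpha} \E_Y[\|\nabla g(x+Y)\|^2]$, and integrating over $X$ gives $\le \frac{1}{\tilde\alpha} \E_Z[\|\nabla g(Z)\|^2]$. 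For the second term, set $h(x) = \E_Y[g(x+Y)]$; then $\Var_X(h(X)) \le \frac{1}{\alpha} \E_X[\|\nabla h(X)\|^2]$ by PI for $\nu$, and since $\nabla h(x) = \E_Y[\nabla g(x+Y)]$, Jensen's inequality gives $\|\nabla h(x)\|^2 \le \E_Y[\|\nabla g(x+Y)\|^2]$, so $\E_X[\|\nabla h(X)\|^2] \le \E_Z[\|\nabla g(Z)\|^2]$. Adding the two bounds yields $\Var_Z(g(Z)) \le \big(\frac{1}{\alpha} + \frac{1}{\tilde\alpha}\big) \E_Z[\|\nabla g(Z)\|^2]$.

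The steps requiring care are only the standard measure-theoretic ones: justifying differentiation under the integral sign to obtain $\nabla h(x) = \E_Y[\nabla g(x+Y)]$ (routine for compactly supported smooth $g$, which suffices since PI may be verified on a dense class of test functions), and invoking the Gaussian Poincar\'e inequality with the sharp constant --- for $\N(0,\sigma^2 I)$ the Poincar\'e constant is $1/\sigma^2$, which with $\sigma^2 = 2t$ is $\frac{1}{2t}$. I do not expect a substantive obstacle here: the lemma is classical, and the work is simply assembling the total-variance decomposition cleanly, in exact parallel with the LSI statement of Lemma~\ref{Lem:LSIGaussianConv}.
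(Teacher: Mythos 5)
Your argument is correct and follows the same route as the paper, which likewise deduces the result from the general Poincar\'e convolution property applied with the Gaussian Poincar\'e constant $\frac{1}{2t}$. The only difference is that the paper simply cites the convolution property from the literature, whereas you give a correct self-contained derivation of it via the law of total variance (conditional Poincar\'e for the inner term, Jensen's inequality on $\nabla h = \E_Y[\nabla g(\cdot + Y)]$ for the outer term).
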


We are now ready to prove Lemma~\ref{Lem:RenyiRateP}.

\begin{proof}[Proof of Lemma~\ref{Lem:RenyiRateP}]
Following the proof of Lemma~\ref{Lem:RenyiRateLSI}, 
we decompose each step of ULA~\eqref{Eq:ULA} into two steps:
\begin{subequations}
\begin{align}
\tilde \rho_k &= (I - \step \nabla f)_\# \rho_k, \label{Eq:ULA-3} \\
\rho_{k+1} &= \tilde \rho_k \ast \N(0,2\step I).  \label{Eq:ULA-4}
\end{align}
\end{subequations}
The first step~\eqref{Eq:ULA-3} is a deterministic bijective map, so it preserves R\'enyi divergence by Lemma~\ref{Lem:RenyiBij}: 
$R_{q,\nu_\step}(\rho_k) = R_{q,\tilde \nu_\step}(\tilde \rho_k)$, where $\tilde \nu_\step = (I - \step \nabla f)_\# \nu_\step$.
Recall by Assumption~\ref{As:RenyiP} that $\nu_\step$ satisfies Poincar\'e inequality with constant $\beta$.
Since the map $T(x) = x - \step \nabla f(x)$ is $(1+\step L)$-Lipschitz, by Lemma~\ref{Lem:PLipschitz} we know that $\tilde \nu_\step$ satisfies Poincar\'e inequality with constant $\frac{\beta}{(1+\step L)^2}$.

The second step~\eqref{Eq:ULA-4} is convolution with a Gaussian distribution, which is the result of evolving along the heat flow at time $\step$.
For $0 \le t \le \step$, let $\tilde \rho_{k,t} = \tilde \rho_k \ast \N(0,2t I)$ and $\tilde \nu_{\step,t} = \tilde \nu_\step \ast \N(0,2tI)$,
so $\tilde \rho_{k,\step} = \tilde \rho_{k+1}$ and $\tilde \nu_{\step,\step} = \nu_\step$.
Since $\tilde \nu_\step$ satisfies Poincar\'e inequality with constant $\frac{\beta}{(1+\step L)^2}$, by Lemma~\ref{Lem:PGaussianConv} we know that $\tilde \nu_{\step,t}$ satisfies Poincar\'e inequality with constant $\big(\frac{(1+\step L)^2}{\beta}+2t\big)^{-1} \ge \big(\frac{(1+\step L)^2}{\beta}+2\step \big)^{-1}$ for $0 \le t \le \step$. 
In particular, since $\step \le \min\{\frac{1}{3L}, \frac{1}{9\beta}\}$, the Poincar\'e constant is $\big(\frac{(1+\step L)^2}{\beta}+2\step \big)^{-1}  \ge \big(\frac{16}{9\beta}+\frac{2}{9\beta} \big)^{-1} = \frac{\beta}{2}$.
Then by Lemma~\ref{Lem:RenyiHeat} and Lemma~\ref{Lem:RenyiPI}, 
\begin{align*}
\frac{d}{dt} R_{q,\tilde \nu_{\step,t}}(\tilde \rho_{k,t}) = -q\frac{G_{q,\tilde\nu_{\step,t}}(\tilde\rho_{k,t})}{F_{q,\tilde\nu_{\step,t}}(\tilde\rho_{k,t})} \le -\frac{2\beta}{q} \left(1-e^{-R_{q,\tilde\nu_{\step,t}}(\tilde\rho_{k,t})}\right).
\end{align*}
We now consider two possibilities, as in Theorem~\ref{Thm:RenyiLP}:
\begin{enumerate}
  \item If $R_{q,\nu_\step}(\rho_k) = R_{q,\tilde\nu_{\step,0}}(\tilde\rho_{k,0}) \ge 1$, then as long as $R_{q,\nu_\step}(\rho_{k+1}) = R_{q,\tilde\nu_{\step,\step}}(\tilde\rho_{k,\step}) \ge 1$, we have $1-e^{-R_{q,\tilde\nu_{\step,t}}(\tilde\rho_{k,t})} \ge 1-e^{-1} > \frac{1}{2}$, so $\frac{d}{dt} R_{q,\tilde \nu_{\step,t}}(\tilde \rho_{k,t}) \le -\frac{\beta}{q}$, which implies $R_{q,\nu_\step}(\rho_{k+1}) \le R_{q,\nu_\step}(\rho_{k}) - \frac{\beta \step}{q}$.
  Iterating this step, we have that $R_{q,\nu_\step}(\rho_{k}) \le R_{q,\nu_\step}(\rho_{0}) -  \frac{\beta \step k}{q}$ if $R_{q,\nu_\step}(\rho_0) \ge 1$ and as long as $R_{q,\nu_\step}(\rho_k) \ge 1$.
 \item If $R_{q,\nu_\step}(\rho_k) = R_{q,\tilde\nu_{\step,0}}(\tilde\rho_{k,0}) \le 1$, then $R_{q,\tilde\nu_{\step,t}}(\tilde\rho_{k,t}) \le 1$, and thus $\frac{1-e^{-R_{q,\tilde\nu_{\step,t}}(\tilde\rho_{k,t})}}{R_{q,\tilde\nu_{\step,t}}(\tilde\rho_{k,t})} \ge \frac{1}{1+R_{q,\tilde\nu_{\step,t}}(\tilde\rho_{k,t})} \ge \frac{1}{2}$.
Thus, in this case
$\frac{d}{dt} R_{q,\tilde\nu_{\step,t}}(\tilde\rho_{k,t}) \le -\frac{\beta}{q} R_{q,\tilde\nu_{\step,t}}(\tilde\rho_{k,t})$.
Integrating over $0 \le t \le \step$ gives
$R_{q,\nu_\step}(\rho_{k+1}) = R_{q,\tilde\nu_{\step,\step}}(\tilde\rho_{k,\step}) \le e^{-\frac{\beta \step}{q}} R_{q,\tilde\nu_{\step,0}}(\tilde\rho_{k,0}) = e^{-\frac{\beta \step}{q}} R_{q,\nu_\step}(\rho_k)$.
Iterating this step gives $R_{q,\nu_\step}(\rho_k) \le e^{-\frac{\beta \step k}{q}} R_{q,\nu_\step}(\rho_0)$ if $R_{q,\nu_\step}(\rho_0) \le 1$, as desired.
\end{enumerate}
\end{proof}

\subsubsection{Proof of Theorem~\ref{Thm:RenyiRatePoincare}}
\label{App:RenyiRatePoincare}

\begin{proof}[Proof of Theorem~\ref{Thm:RenyiRatePoincare}]

By Lemma~\ref{Lem:RenyiRateP} (which applies since $2q > 2$), after $k_0$ iterations we have $R_{2q,\nu_\step}(\rho_{k_0}) \le 1$.
Applying the second case of Lemma~\ref{Lem:RenyiRateP} starting from $k_0$ gives $R_{2q,\nu_\step}(\rho_k) \le e^{-\frac{\beta \step (k-k_0)}{2q}} R_{2q,\nu_\step}(\rho_{k_0}) \le e^{-\frac{\beta \step (k-k_0)}{2q}}$.
Then by Lemma~\ref{Lem:RenyiDecomp},
\begin{align*}
R_{q,\nu}(\rho_k) &\le \left(\frac{q-\frac{1}{2}}{q-1}\right) R_{2q,\nu_\step}(\rho_k) + R_{2q-1,\nu}(\nu_\step)
\le \left(\frac{q-\frac{1}{2}}{q-1}\right) e^{-\frac{\beta \step (k-k_0)}{2q}} + R_{2q-1,\nu}(\nu_\step)
\end{align*}
as desired.
\end{proof}

\subsection{Proofs for~\S\ref{Sec:Bias}: Properties of biased limit}

\subsubsection{Bounding relative Fisher information}

Let $H(\rho) = -\E_\rho[\log \rho]$ be Shannon entropy, $J(\rho) = \E_\rho[\|\nabla \log \rho\|^2]$ be the Fisher information, and $K(\rho) = \E_\rho[\|\nabla^2 \log \rho\|^2_{\HS}]$ be the second-order Fisher information.
We can write relative entropy as 
$$H_\nu(\rho) = \E_\rho[\log \frac{\rho}{\nu}] = -H(\rho) + \E_\rho[f]$$ 
and we can write relative Fisher information as
\begin{align}
    J_\nu(\nu_\step) = \E_{\nu_\step}\left[\left\| \nabla \log \frac{\nu_\step}{\nu} \right\|^2 \right]
    &= J(\nu_\step) + 2 \E_{\nu_\step}[\langle \nabla \log \nu_\step, \nabla f \rangle] + \E_{\nu_\step}[\|\nabla f\|^2] \notag \\
    &= J(\nu_\step) + \E_{\nu_\step}[ \|\nabla f\|^2 -2 \Delta f ] \label{Eq:JnuCalc}
\end{align}
where the last step follows from integration by parts.

We first prove the following, which only requires second-order smoothness.

\begin{lemma}\label{Lem:Fisher}
Assume $f$ is $L$-smooth ($-L I \preceq \nabla^2 f \preceq LI$) and $\step \le \frac{1}{2L}$.
Then
\begin{align*}
    J_\nu(\nu_\step) \le \E_{\nu_\step}[\|\nabla f\|^2 - \Delta f] + \step  \, nL^2.
\end{align*}
\end{lemma}
\begin{proof}
We examine how entropy changes from $\nu_\step$ to $\mu_\step$ and back, which will give us an estimate on the Fisher information.
By the change-of variable formula for $\mu_\step = (I - \step \nabla f)_\# \nu_\step$, we have
\begin{align}\label{Eq:ChangeVar}
    \log \nu_\step(x) = \log \det (I - \step \nabla^2 f(x)  + \log \mu_\step(x - \step \nabla f(x)).
\end{align}
By taking expectation over $x \sim \nu$ (equivalently, $x - \step \nabla f(x) \sim \mu$), we get
\begin{align}\label{Eq:BCalc1}
    H(\nu_\step) = H(\mu_\step) - \E_{\nu_\step}[\log \det (I - \step \nabla^2 f)].
\end{align}

On the other hand, recall that along the heat flow $\rho_t = \rho_0 \ast \N(0,2tI)$, we have the relations
\begin{align*}
    \frac{d}{dt} H(\rho_t) &= J(\rho_t), \\
    \frac{d}{dt} J(\rho_t) &= -K(\rho_t) \le 0.
\end{align*}
See for example~\cite{V00}.
Thus, $\nu_\step = \mu_\step \ast \N(0, 2\step I)$ satisfies
\begin{align}\label{Eq:BCalc15}
    H(\nu_\step) = H(\mu_\step) + \int_0^\step J(\rho_t) \, dt \ge H(\mu_\step) + \step J(\nu_\step)
\end{align}
where $\rho_t = \rho_0 \ast \N(0,2tI)$ is the heat flow from $\rho_0 = \mu_\step$ to $\rho_\step = \nu_\step$, and the last inequality holds since $t \mapsto J(\rho_t)$ is decreasing.
Combining~\eqref{Eq:BCalc1} and~\eqref{Eq:BCalc15}, we get
\begin{align}\label{Eq:BCalc2}
  \step J(\nu_\step) 
  \le 
  H(\nu_\step) - H(\mu_\step) 
  = -\E_{\nu_\step}[\log \det (I - \step \nabla^2 f)]. 
\end{align}
Let $\lambda_1, \dots, \lambda_d$ be the eigenvalues of $\nabla^2 f$.
Since $f$ is $L$-smooth, $|\lambda_i| \le L$.
Using the inequality $\log(1-\step \lambda_i) \ge -\step \lambda_i - \step^2 \lambda_i^2$, which holds since $\step |\lambda_i| \le \frac{1}{2}$ since $\step \le \frac{1}{2L}$, we have
\begin{align*}
    -\E_{\nu_\step}[\log \det (I - \step \nabla^2 f)]
    &= \sum_{i=1}^n \E_{\nu_\step}[-\log (1-\step \lambda_i)]  \\
    &\le \sum_{i=1}^n \E_{\nu_\step}[\step \lambda_i + \step^2 \lambda_i^2] \\
    &= \step  \E_{\nu_\step}[\Delta f] + \step^2 \, \E_{\nu_\step}[\|\nabla^2 f\|^2_{\HS}] \\
    &\le  \step  \E_{\nu_\step}[\Delta f] + \step^2 \, n L^2.
\end{align*}
Plugging this to~\eqref{Eq:BCalc2} gives
\begin{align}\label{Eq:BCalc3}
    J(\nu_\step) 
    &\le -\frac{1}{\step}
      \E_{\nu_\step}[\log \det (I - \step \nabla^2 f)]
    \le \E_{\nu_\step}[\Delta f] + \step \, n L^2.
\end{align}
Therefore, we can bound the relative Fisher information~\eqref{Eq:JnuCalc}:
\begin{align*}
    J_\nu(\nu_\step)
    &= J(\nu_\step) + \E_{\nu_\step}[\|\nabla f\|^2 -2 \Delta f ]
    \le \E_{\nu_\step}[\|\nabla f\|^2 - \Delta f] + \step  \, nL^2.
\end{align*}
\end{proof}

\subsubsection{Bounding the expected value}

Recall that for $\nu = e^{-f}$, we have $\E_{\nu}[\|\nabla f\|^2 - \Delta f] = 0$.
Under third-order smoothness, we will prove $\E_{\nu_\step}[\|\nabla f\|^2 - \Delta f] = O(\step)$.

\begin{lemma}\label{Lem:Exp}
Assume $f$ is $(L,M)$-smooth.
Then for $\step \le \frac{1}{2L}$,
\begin{align}
    \E_{\nu_\step}[\|\nabla f\|^2 - \Delta f] \le  \step n \left(L^2 + M \sqrt{J(\mu_\step)}\right).
\end{align}
\end{lemma}
\begin{proof}
We examine how the expected value of $f$ changes from $\nu_\step$ to $\mu_\step$ and back, which will give us an estimate on the desired quantity.

Let $x \sim \nu_\step$ and $y = x - \step \nabla f(x) \sim \mu_\step$, so $x' = y + \sqrt{2\step} \, Z \sim \nu_\step$ where $Z \sim \N(0,I)$ is independent.
Since $f$ is $L$-smooth, we have the bound:
\begin{align*}
    f(y) \le f(x) - \step\left(1-\frac{\step L}{2}\right) \|\nabla f(x)\|^2.
\end{align*}
Taking expectation over $x \sim \nu_\step$ (equivalently, $y \sim \mu_\step$) yields
\begin{align*}
    \E_{\mu_\step}[f] \le \E_{\nu_\step}[f] - \step\left(1-\frac{\step L}{2}\right) \E_{\nu_\step}[\|\nabla f\|^2].
\end{align*}

On the other hand, let $\rho_t = \rho_0 \ast \N(0,2t I)$ be the heat flow from $\rho_0 = \mu_\step$ to $\rho_\step = \nu_\step$, and recall that along the heat flow, $\frac{d}{dt} \E_{\rho_t}[f] = \E_{\rho_t}[\Delta f]$. 
Then
\begin{align*}
    \E_{\nu_\step}[f] = \E_{\mu_\step}[f] + \int_0^\step \E_{\rho_t}[\Delta f] \, dt.
\end{align*}
Combining the two relations above,
\begin{align*}
    \step\left(1-\frac{\step L}{2}\right) \E_{\nu_\step}[\|\nabla f(x)\|^2] \le \E_{\nu_\step}[f] - \E_{\mu_\step}[f] = \int_0^\step \E_{\rho_t}[\Delta f] \, dt.
\end{align*}
Therefore,
\begin{align*}
    \step\left(1-\frac{\step L}{2}\right) \E_{\nu_\step}[\|\nabla f\|^2 - \Delta f] 
    &\le \int_0^\step \E_{\rho_t}[\Delta f] \, dt - \step\left(1-\frac{\step L}{2}\right) \E_{\nu_\step}[\Delta f] \\
    &= \int_0^\step (\E_{\rho_t}[\Delta f] - \E_{\rho_\step}[\Delta f]) \, dt + \frac{\step^2 L}{2} \E_{\nu_\step}[\Delta f] \\
    &\le \int_0^\step (\E_{\rho_t}[\Delta f] - \E_{\rho_\step}[\Delta f]) \, dt + \frac{\step^2 n L^2}{2}.\label{Eq:CalcB}
\end{align*}
Since $\rho_t$ evolves following the heat flow, by Lemma~\ref{Lem:Heat} we have for any $0 \le t \le \step$:
\begin{align*}
    W_2(\rho_t,\rho_\step)^2 &\le (\step-t)^2 J(\rho_t) \le (\step-t)^2 \, J(\rho_0) = (\step-t)^2 \, J(\mu_\step)
\end{align*}
where the second inequality above follows from the fact that Fisher information is decreasing along heat flow.

Since we assume $\nabla^2 f$ is $M$-Lipschitz, the Laplacian $\Delta f = \Tr(\nabla^2 f)$ is $(n M)$-Lipschitz.
Then by the dual formulation of $W_1$ distance,\footnote{Recall $W_1(\rho,\nu) = \sup\{ \E_\rho[g] - \E_\nu[g] \colon g ~ \text{ is $1$-Lipschitz}\}$.}
\begin{align*}
    \E_{\rho_t}[\Delta f] - \E_{\rho_\step}[\Delta f] &\le nM \, W_1(\rho_t, \rho_\step)
    \le nM \, W_2(\rho_t, \rho_\step)
    \le (\step-t) \, nM \sqrt{J(\mu_\step)}.
\end{align*}
Integrating over $0 \le t \le \step$ gives
\begin{align*}
    \int_0^\step (\E_{\rho_t}[\Delta f] - \E_{\rho_\step}[\Delta f]) \, dt \le \frac{\step^2}{2} nM \sqrt{J(\mu_\step)}.
\end{align*}
Plugging this to~\eqref{Eq:CalcB} gives
\begin{align*}
    \step\left(1-\frac{\step L}{2}\right) \E_{\nu_\step}[\|\nabla f\|^2 - \Delta f] 
    &\le \frac{\step^2}{2} nM \sqrt{J(\mu_\step)} + \frac{\step^2 n L^2}{2}.
\end{align*}
Since $1-\frac{\step L}{2} \ge \frac{3}{4} > \frac{1}{2}$ for $\step \le \frac{1}{2L}$, this also implies
\begin{align*}
    \frac{\step}{2} \E_{\nu_\step}[\|\nabla f\|^2 - \Delta f] 
    &\le \frac{\step^2}{2} nM \sqrt{J(\mu_\step)} + \frac{\step^2 n L^2}{2}.
\end{align*}
Dividing by $\frac{\step}{2}$ gives the claim.
\end{proof}

\begin{remark}
We see from~\eqref{Eq:CalcB} that if $\Delta \Delta f \ge 0$, then $\E_{\rho_t}[\Delta f] - \E_{\rho_\step}[\Delta f] \le 0$. Thus, we also get the bound $J_\nu(\nu_\step) \le \step n L^2$ assuming $f$ is $L$-smooth and $\Delta \Delta f \ge 0$.
\end{remark}

In the proof above, we use the following lemma on the distance along the heat flow.
Note that a simple coupling argument gives $W_2(\rho_\step, \rho_0)^2 \le O(\step)$, rather than $O(\eta^2)$ below (when $J(\rho_0) < \infty$).

\begin{lemma}\label{Lem:Heat}
For any probability distribution $\rho_0$ and for any $\step > 0$, let $\rho_\step = \rho_0 \ast \N(0, 2\step I)$.
Then
\begin{align}
    W_2(\rho_\step, \rho_0)^2 \le \step^2 J(\rho_0).
\end{align}
\end{lemma}
\begin{proof}
By definition, $\rho_\step$ evolves following the heat flow $\part{\rho_t}{t} = \Delta \rho_t$ from time $t=0$ to time $t=\step$.
Fix $\step > 0$, and let us rescale time to be from $0$ to $1$:
Let $\tilde \rho_\tau = \rho_{\tau \step}$, so $\tilde \rho_0 = \rho_0$ and $\tilde \rho_1 = \rho_\step$.
Then $\tilde \rho_\tau$ evolves following a rescaled heat flow:
\begin{align}
    \part{\tilde \rho_\tau}{\tau} = \part{\rho_{\tau \step}}{\tau} = \step \Delta \rho_{\tau \step} = \step \Delta \tilde \rho_\tau
    = \step \nabla \cdot \left( \tilde \rho_\tau  \nabla \log \tilde \rho_\tau \right)
\end{align}
Since $(\tilde \rho_\tau)_{0 \le \tau \le 1}$ connects $\tilde \rho_0 = \rho_0$ to $\tilde \rho_1 = \rho_{\step}$, its length must exceed the $W_2$ distance: 
\begin{align*}
    W_2(\rho_\step,\rho_0)^2 &\le \int_0^1 \E_{ \tilde \rho_\tau}[\|\step \, \nabla \log \tilde \rho_\tau\|^2] \, d\tau
    = \step^2 \int_0^1 J(\tilde \rho_\tau) \, d\tau
    \le \step^2 J(\rho_0).
\end{align*}
In the last step we have used Fisher information is decreasing along the heat flow: $J(\tilde \rho_\tau) \le J(\rho_0)$.
\end{proof}

\subsubsection{Bounding the Fisher information}

\begin{lemma}\label{Lem:FisherHalf}
\begin{enumerate}
\item If $f$ is $L$-smooth and $\step \le \frac{1}{2L}$, then
\begin{align*}
    J(\nu_\step) \le \frac{3}{2} nL.
\end{align*}
\item If $f$ is $(L,M)$-smooth and $\step \le \frac{1}{2L}$, then
\begin{align*}
    J(\mu_\step) \le 12n (L + 3nM^2).
\end{align*}
\end{enumerate}
\end{lemma}
\begin{proof}
First, since $f$ is $L$-smooth and $\step \le \frac{1}{2L}$, from~\eqref{Eq:BCalc3} we can bound
\begin{align}
    J(\nu_\step) \le \E_{\nu_\step}[\Delta f] + \step \, \E_{\nu_\step}[\|\nabla^2 f\|^2_{\HS}] \le nL + \step nL^2 \le \frac{3}{2} nL.
\end{align}

Second, by taking gradient in the formula~\eqref{Eq:ChangeVar} for $\mu_\step = (I - \step \nabla f)_\# \nu_\step$, we get
\begin{align*}
    \nabla \log \nu_\step(x) = -\step \nabla^3 f(x) \, A(x)^{-1} + A(x) \nabla \log \mu_\step(x - \step \nabla f(x))
\end{align*}
or equivalently,
\begin{align}\label{Eq:CalcA}
    \nabla \log \mu_\step(x - \step \nabla f(x))
    = A(x)^{-1} \nabla \log \nu_\step(x) + \step A(x)^{-1} \nabla^3 f(x) \, A(x)^{-1}
\end{align}
where
\begin{align*}
    A(x) = I - \step \nabla^2 f(x).
\end{align*}
satisfies $ \frac{1}{2} I \preceq\, A(x) \,\preceq\, \frac{3}{2} I$ since $-L I \preceq \nabla^2 f(x) \preceq L I$ and $\step \le \frac{1}{2L}$.
In particular,
\begin{align*}
    \frac{2}{3} I \,\preceq\, A(x)^{-1} \,\preceq\, 2 I.
\end{align*}
Therefore, the first term in~\eqref{Eq:CalcA} we can bound as
\begin{align*}
    \|A(x)^{-1} \nabla \log \nu_\step(x)\|_2 &\le 2 \|\nabla \log \nu_\step(x)\|_2.
\end{align*}
For the second term, using the assumption $\|\nabla^3 f(x)\|_\op \le M$ and Lemma~\ref{Lem:Smooth}, we have
\begin{align*}
    \|A(x)^{-1} \nabla^3 f(x) \, A(x)^{-1}\|_2 &\le 2 \|\nabla^3 f(x) \, A(x)^{-1}\|_2 
    \,\le\, 4n M.
\end{align*}
Therefore, from~\eqref{Eq:CalcA}, we get
\begin{align*}
    \|\nabla \log \mu_\step(x - \step \nabla f(x))\|_2 \le 2 \|\nabla \log \nu_\step(x)\|_2 + 4n M.
\end{align*}
This implies
\begin{align*}
    \|\nabla \log \mu_\step(x - \step \nabla f(x))\|_2^2 \le 8 \|\nabla \log \nu_\step(x)\|_2^2 + 32n^2 M^2.
\end{align*}
Taking expectation over $x \sim \nu_\step$ (equivalently, $x - \step \nabla f(x) \sim \mu_\step$), we conclude that
\begin{align*}
    J(\mu_\step) &\le 8 J(\nu_\step) + 32 n^2 M^2
    \le 12 nL + 32 n^2 M^2
    \le 12n (L + 3nM^2).
\end{align*}
\end{proof}

In the above, we use the following bound from smoothness.

\begin{lemma}\label{Lem:Smooth}
Let $T \in \R^{d \times d \times d}$ be a $3$-tensor with $\|T\|_\op \le M$.
For any symmetric matrix $B \in \R^{d \times d}$ with $\|B\|_\op \le \beta$, the vector $TB \in \R^n$ satisfies $\| TB \|_2 \le n \beta M$.
\end{lemma}
\begin{proof}
Since $\|T\|_\op \le M$, for any $u,v,w \in \R^d$ with $\|u\| = \|v\| = \|w\| = 1$, $|T [u,v,w]| \le M$.
In particular, for any $u \in \R^d$ with $\|u\| = 1$, $p = T[u,u] \in \R^d$ satisfies $\|p\|_2 \le M$.
We eigendecompose $B = \sum_{i=1}^d \lambda_i u_i u_i^\top$ with eigenvectors $u_1,\dots,u_n \in \R^n$ and eigenvalues $\lambda_1,\dots,\lambda_n \in \R$ with $\|u_i\|_2 = 1$, $|\lambda_i| \le \beta$.
Then
\begin{align*}
    \| TB\|_2 &= \left\| T \sum_{i=1}^n \lambda_i u_i u_i^\top \right\|_2 
    \le \sum_{i=1}^n |\lambda_i| \cdot \|T[u_i,u_i]\|_2 
    \le \sum_{i=1}^n \beta M
    = n \beta M.
\end{align*}
\end{proof}

\subsubsection{Proof of  upper bound in Theorem~\ref{Thm:Bias}}
\label{Sec:ProofBiasUpper}

\begin{proof}[Proof of  upper bound in Theorem~\ref{Thm:Bias}]
By combining Lemmas~\ref{Lem:Fisher},~\ref{Lem:Exp}, and~\ref{Lem:FisherHalf}:
\begin{align*}
    J_\nu(\nu_\step) &\le \E_{\nu_\step}[\|\nabla f\|^2 - \Delta f] + \step  nL^2 \\
    &\le \step n \left(2L^2 + M \sqrt{J(\mu_\step)}\right) \\
    &\le \step n \left(2L^2 + M \sqrt{12n (L + 3nM^2)}\right) \\
    &\le \step n \left(2L^2 + M (4 \sqrt{nL} + 6nM)\right) \\
    &\le 2 \step n \left(L^2 + 2 \sqrt{nL}M + 3nM^2\right).
\end{align*}

\end{proof}

\subsubsection{Proof of lower bound in Theorem~\ref{Thm:Bias}}
\label{Sec:ProofBiasLower}

For the lower bound, we first prove the following properties.
Observe that for $\nu \propto e^{-f}$, $\E_\nu[\nabla f] = 0$ and $\E_{\nu}[\langle x, \nabla f(x)\rangle ] = n$.
These properties still hold when we take the expectation under the biased limit.

\begin{lemma}\label{Lem:Biased}
    For any $f$ and $\step > 0$, the biased limit $\nu_\step$ satisfies:
\begin{enumerate}
    \item $\E_{\nu_\step}[\nabla f] = 0$.
    
    \item $\E_{\nu_\step}[\langle x, \nabla f(x)\rangle ] = d + \frac{\step}{2} \E_{\nu_\step}[\|\nabla f\|^2]$.
\end{enumerate}
\end{lemma}

\begin{proof}
Let $x \sim \nu_\step$, $y = x - \step \nabla f(x) \sim \mu_\step$, and $x' = y + \sqrt{2\step} z \sim \nu_\step$ where $z \sim \N(0,I)$ is independent.
Then
\begin{align*}
    x' = x - \step \nabla f(x) + \sqrt{2\step} z.
\end{align*}
By taking expectation over $x \sim \nu_\step$ (so $x' \sim \nu_\step$), we get:
\begin{align*}
    \E_{\nu_\step}[x'] = \E_{\nu_\step}[x] - \E_{\nu_\step}[\nabla f(x)]
\end{align*}
which implies $\E_{\nu_\step}[\nabla f] = 0$.

Next, by taking covariance, we get:
\begin{align*}
    \Cov_{\nu_\step}(x') &= \Cov_{\nu_\step}(x - \step \nabla f(x)) + 2\step I \\   
    &= \Cov_{\nu_\step}(x) - \step \Cov_{\nu_\step}(x, \nabla f(x)) - \step \Cov_{\nu_\step}(\nabla f(x),x) + \step^2 \Cov_{\nu_\step}(\nabla f(x)) + 2\step I
\end{align*}
so
\begin{align*}
  \Cov_{\nu_\step}(x, \nabla f(x)) + \Cov_{\nu_\step}(\nabla f(x),x) 
  &= \step \Cov_{\nu_\step}(\nabla f(x)) + 2 I.
\end{align*}
Since $\E_{\nu_\step}[\nabla f] = 0$, this means
\begin{align*}
  \E_{\nu_\step}[x \, \nabla f(x)^\top ] + \E_{\nu_\step}[\nabla f(x) \, x^\top] 
  &= \step \E_{\nu_\step}[\nabla f(x) \, \nabla f(x)^\top] + 2 I.
\end{align*}
Taking trace and dividing by $2$ gives
\begin{align*}
    \E_{\nu_\step}[\langle x, \nabla f(x)\rangle ] = n + \frac{\step}{2} \E_{\nu_\step}[\|\nabla f(x)\|^2] .
\end{align*}
\end{proof}

We are now ready to prove the lower bound.

\begin{proof}[Proof of lower bound in Theorem~\ref{Thm:Bias}]
From Lemma~\ref{Lem:Biased} part 2, using the identity $d = \E_{\nu_\step}[\langle x, -\nabla \log \nu_\step \rangle]$ and Cauchy-Schwarz inequality, we can derive the bound:
\begin{align*}
    \frac{\step}{2} \E_{\nu_\step}[\|\nabla f(x)\|^2]
    &= \E_{\nu_\step}[\langle x, \nabla f(x)\rangle ] - d \\
    &= \E_{\nu_\step}\left[\left\langle x, \nabla \log \frac{\nu_\step}{\nu}\right\rangle \right] \\
    &= \E_{\nu_\step}\left[\left\langle x - \E_{\nu_\step}[x], \nabla \log \frac{\nu_\step}{\nu}\right\rangle \right] \\
    &\le \sqrt{\Var_{\nu_\step}(x)} \cdot \sqrt{J_\nu(\nu_\step)}.
\end{align*}
Rearranging gives us the desired result.
In the second step above we can subtract $\E_{\nu_\step}[x]$ because for any $c \in \R^n$, by Lemma~\ref{Lem:Biased} part 1,
\begin{align*}
    \E_{\nu_\step}\left[\left\langle c, \nabla \log \frac{\nu_\step}{\nu} \right\rangle\right]
    &= \langle c, \E_{\nu_\step}[\nabla \log \nu_\step] + \E_{\nu_\step}[\nabla f] \rangle
    = 0.
\end{align*}
\end{proof}

\subsubsection{Proof of Theorem~\ref{Thm:BiasLSI}}
\label{Sec:ProofBiasLSI}

The following proof is due to Sinho Chewi.

\begin{proof}[Proof of Theorem~\ref{Thm:BiasLSI}]
Suppose we run ULA from $x_0 \sim \rho_0$ to obtain $x_k \sim \rho_k$, so $\rho_k \to \nu_\step$ as $k \to \infty$.
Let $\alpha_k$ denote the LSI constant of $\rho_k$, i.e.\ the largest constant $\tilde \alpha > 0$ such that~\eqref{Eq:LSI-KL} holds.
Since $0 \le \step \le \frac{1}{L}$ and $f$ is $\alpha$-strongly convex, the map $x \mapsto x - \step \nabla f(x)$ is $(1-\step \alpha)$-Lipschitz.
Since $x_k \sim \rho_k$ is $\alpha_k$-LSI, by Lemma~\ref{Lem:LSILipschitz}, the distribution of $x_k - \step \nabla f(x_k)$ satisfies LSI with constant $\alpha_k / (1-\step \alpha)^2$.
Then by Lemma~\ref{Lem:LSIGaussianConv}, $x_{k+1} = x_k - \step \nabla f(x_k) + \sqrt{2\step} \, z_k \sim \rho_{k+1}$ satisfies $\alpha_{k+1}$-LSI with
\begin{align*} 
    \frac{1}{\alpha_{k+1}} \le \frac{(1-\step \alpha)^2}{\alpha_k} + 2\step.
\end{align*}
Suppose we start $\alpha_0 \ge \frac{\alpha}{2}$.
We claim that $\alpha_k \ge \frac{\alpha}{2}$ for all $k \ge 0$.
Indeed, if $\frac{1}{\alpha_k} \le \frac{2}{\alpha}$, then since $\step \le \frac{1}{L} \le \frac{1}{\alpha}$, we have
\begin{align*}
    \frac{1}{\alpha_{k+1}} &\le \frac{(1-\step \alpha)^2}{\alpha/2} + 2\step
    = \frac{2}{\alpha} - 2\step (1-\step \alpha)
    \le \frac{2}{\alpha}.
\end{align*}
Thus by induction, $\alpha_k \ge \frac{\alpha}{2}$ for all $k \ge 0$.
Taking the limit $k \to \infty$, this shows that $\nu_\step = \lim_{k \to \infty} \rho_k$ also satisfies LSI with constant $\beta \ge \frac{\alpha}{2}$.
\end{proof}

\section{Discussion}
\label{Sec:Disc}

In this paper we proved convergence guarantees on KL divergence and R\'enyi divergence along ULA under isoperimetric assumptions and bounded Hessian, without assuming convexity or bounds on higher derivatives.
In particular, under LSI and bounded Hessian, we prove a complexity guarantee of $O(\frac{\kappa^2 n}{\err})$ to achieve $H_\nu(\rho_k) \le \err$, where $\kappa := L/\alpha$ is the condition number. 
We note the dependence on $\kappa$ may not be tight.
In particular, the asymptotic bias in KL divergence from our result scales linearly with step size, while from the Gaussian example we see it should scale quadratically with step size.
We can achieve a smaller bias using a different algorithm, e.g.\ the underdamped Langevin algorithm~\cite{Ma19} or the proximal Langevin algorithm~\cite{W19}.
However, it remains open whether we can provide a better analysis of ULA under LSI and smoothness that yields the optimal bias.

Our convergence results for ULA in R\'enyi divergence hold assuming the biased limit satisfies isoperimetry (Assumptions~\ref{As:RenyiLSI} and~\ref{As:RenyiP}), which we can verify assuming strong log-concavity and smoothness of the target distribution.
It would be interesting to verify when Assumptions~\ref{As:RenyiLSI} and~\ref{As:RenyiP} hold more generally, whether they can be relaxed, or if they follow from assuming isoperimetry and smoothness for the target density.

Another intriguing question is whether there is an affine-invariant version of the Langevin dynamics. This might lead to a sampling algorithm with logarithmic dependence on smoothness parameters, rather than the current polynomial dependence.
There are some approaches that achieve affine invariance in continuous time, for example via interacting Langevin dynamics~\cite{GNR20} or the Newton Langevin dynamics~\cite{CLLMRS20};
however, the discretization analysis remains a challenge.

Since the publication of the conference version of this paper~\cite{VW19}, some of our techniques and results have been generalized.
The one-step interpolation technique that we use in Lemma~\ref{Lem:OneStep} proves to be useful for analyzing ULA or its variants under various assumptions.
It has been extended to analyze ULA for sampling on manifolds, for example, on a complete Riemannian manifold~\cite{WLP20}; or on a product of spheres with applications to solving semidefinite programming~\cite{LE20}.
It has also been used to analyze ULA for sampling from distributions with sub-Gaussian tail growth and H\"older-continuous gradient~\cite{EH21}; for sampling from heavy-tailed distributions~\cite{HBE22}; 
for sampling in R\'enyi divergence under a family of isoperimetric inequalities interpolating between LSI and Poincar\'e~\cite{CELSZ21}; and for sampling from non-log-concave distributions with convergence in Fisher information~\cite{BCESZ22}.
The interpolation technique has also been useful for analyzing other sampling algorithms, e.g.\ the Proximal Langevin Algorithm (which uses the proximal method for $f$ rather than gradient descent), which yields a smaller (and tight) asymptotic bias~\cite{W19}.
It has also been used to analyze the Mirror Langevin Algorithm (which uses Hessian metric and discretizes in the dual space) for sampling under mirror isoperimetric inequalities~\cite{J20}. 
Further, the calculation along simultaneous heat flow (Lemma~\ref{Lem:RenyiHeat}) is also useful for analyzing the convergence of a new proximal sampler algorithm under isoperimetry~\cite{CCSW22}.


\paragraph{Acknowledgment.} The authors thank Kunal Talwar for explaining the privacy motivation and application of R\'enyi divergence to data privacy; Yu Cao, Jianfeng Lu, and Yulong Lu for alerting us to their work~\cite{CLL18} on R\'enyi divergence; Xiang Cheng and Peter Bartlett for helpful comments on an earlier version of this paper; and Sinho Chewi for communicating Theorem~\ref{Thm:BiasLSI} to us.

{\small
\bibliographystyle{plain}
\bibliography{main.bbl}
}

\appendix

\section{Appendix}

\subsection{Review on notation and basic properties}
\label{App:Notation}

Throughout, we represent a probability distribution $\rho$ on $\R^n$ via its probability density function with respect to the Lebesgue measure, so $\rho \colon \R^n \to \R$ with $\int_{\R^n} \rho(x) dx = 1$.
We typically assume $\rho$ has full support and smooth density, so $\rho(x) > 0$ and $x \mapsto \rho(x)$ is differentiable.
Given a function $f \colon \R^n \to \R$, we denote the expected value of $f$ under $\rho$ by
\begin{align*}
\E_\rho[f] = \int_{\R^n} f(x) \rho(x) \, dx.
\end{align*}
We use the Euclidean inner product $\langle x,y \rangle = \sum_{i=1}^n x_i y_i$ for $x = (x_i)_{1 \le i \le n}, y = (y_i)_{1 \le i \le n} \in \R^n$.
For symmetric matrices $A, B \in \R^{n \times n}$, let $A \preceq B$ denote that $B-A$ is positive semidefinite.
For $\mu \in \R^n$, $\Sigma \succ 0$, let $\N(\mu,\Sigma)$ denote the Gaussian distribution on $\R^n$ with mean $\mu$ and covariance matrix $\Sigma$.

Given a smooth function $f \colon \R^n \to \R$, its {\bf gradient} $\nabla f \colon \R^n \to \R^n$ is the vector of partial derivatives:
\begin{align*}
\nabla f(x) = \left(\part{f(x)}{x_1}, \dots, \part{f(x)}{x_n} \right).
\end{align*}
The {\bf Hessian} $\nabla^2 f \colon \R^n \to \R^{n \times n}$ is the matrix of second partial derivatives:
\begin{align*}
\nabla^2 f(x) = \left(\part{^2f(x)}{x_i x_j} \right)_{1 \le i,j \le n}.
\end{align*}
The {\bf Laplacian} $\Delta f \colon \R^n \to \R$ is the trace of its Hessian:
\begin{align*}
\Delta f(x) = \Tr(\nabla^2 f(x)) = \sum_{i=1}^n \part{^2 f(x)}{x_i^2}.
\end{align*}

Given a smooth vector field $v = (v_1,\dots,v_n) \colon \R^n \to \R^n$, its {\bf divergence} $\nabla \cdot v \colon \R^n \to \R$ is 
\begin{align*}
(\nabla \cdot v)(x) = \sum_{i=1}^n \part{v_i(x)}{x_i}.
\end{align*}
In particular, the divergence of gradient is the Laplacian:
\begin{align*}
(\nabla \cdot \nabla f)(x) = \sum_{i=1}^n \part{^2 f(x)}{x_i^2} = \Delta f(x).
\end{align*}

For any function $f \colon \R^n \to \R$ and vector field $v \colon \R^n \to \R^n$ with sufficiently fast decay at infinity,
we have the following {\bf integration by parts} formula:
\begin{align*}
\int_{\R^n} \langle v(x), \nabla f(x) \rangle dx = -\int_{\R^n} f(x) (\nabla \cdot v)(x) dx.
\end{align*}
Furthermore, for any two functions $f, g \colon \R^n \to \R$,
\begin{align*}
\int_{\R^n} f(x) \Delta g(x) dx = -\int_{\R^n} \langle \nabla f(x), \nabla g(x) \rangle dx = \int_{\R^n} g(x) \Delta f(x) dx.
\end{align*}
When the argument is clear, we omit the argument $(x)$ in the formulae for brevity.
For example, the last integral above becomes
\begin{align}\label{Eq:IBP}
\int f \, \Delta g \, dx = -\int \langle \nabla f, \nabla g \rangle \, dx = \int g \, \Delta f \, dx.
\end{align}

\subsection{Derivation of the Fokker-Planck equation}
\label{App:FP}

Consider a stochastic differential equation
\begin{align}\label{Eq:SDEex}
dX_t = v(X_t) \, dt + \sqrt{2} \, dW_t
\end{align}
where $v \colon \R^n \to \R^n$ is a smooth vector field and $(W_t)_{t \ge 0}$ is the Brownian motion on $\R^n$ with $W_0 = 0$.

We will show that if $X_t$ evolves following~\eqref{Eq:SDEex}, then its probability density function $\rho_t(x)$ evolves following the Fokker-Planck equation:
\begin{align}\label{Eq:FPex}
\part{\rho_t}{t} = -\nabla \cdot (\rho_t v) + \Delta \rho_t.
\end{align}
We can derive this heuristically as follows; we refer to standard textbooks for rigorous derivation~\cite{Mac92}.

For any smooth test function $\phi \colon \R^n \to \R$, let us compute the time derivative of the expectation
\begin{align*}
A(t) = \E_{\rho_t}[\phi] = \E[\phi(X_t)].
\end{align*}
On the one hand, we can compute this as
\begin{align}\label{Eq:FirstDot}
\dot A(t) = \frac{d}{dt} A(t) = \frac{d}{dt} \int_{\R^n} \rho_t(x) \phi(x) \, dx = \int_{\R^n} \part{\rho_t(x)}{t} \phi(x) \, dx.
\end{align}
On the other hand, by~\eqref{Eq:SDEex}, for small $\step > 0$ we have
\begin{align*}
X_{t+\step} &= X_t + \int_t^{t+\step} v(X_s) ds + \sqrt{2} (W_{t+\step}-W_t) \\
&= X_t + \step v(X_t) + \sqrt{2} (W_{t+\step}-W_t) + O(\step^2) \\
&\stackrel{d}{=} X_t + \step v(X_t) + \sqrt{2\step} Z + O(\step^2)
\end{align*}
where $Z \sim \N(0,I)$ is independent of $X_t$, since $W_{t+\step}-W_t \sim \N(0,\step I)$.
Then by Taylor expansion,
\begin{align*}
\phi(X_{t+\step}) &\stackrel{d}{=} \phi\left(X_t + \step v(X_t) + \sqrt{2\step} Z + O(\step^2)\right) \\
&= \phi(X_t) + \step \langle \nabla \phi(X_t), v(X_t) \rangle + \sqrt{2\step} \langle \nabla \phi(X_t), Z \rangle + \frac{1}{2} 2\step \langle Z, \nabla^2 \phi(X_t) Z \rangle + O(\step^{\frac{3}{2}}).
\end{align*}
Now we take expectation on both sides.
Since $Z \sim \N(0,I)$ is independent of $X_t$,
\begin{align*}
A(t+\step) &= \E[\phi(X_{t+\step})]\\
&= \E\left[\phi(X_t) + \step \langle \nabla \phi(X_t), v(X_t) \rangle + \sqrt{2\step} \langle \nabla \phi(X_t), Z \rangle + \step \langle Z, \nabla^2 \phi(X_t) Z \rangle\right] + O(\step^{\frac{3}{2}}) \\
&= A(t)+ \step \left(\E[\langle \nabla \phi(X_t), v(X_t) \rangle] + \E[\Delta \phi(X_t)]\right) + O(\step^{\frac{3}{2}}).
\end{align*}
Therefore, by integration by parts, this second approach gives
\begin{align}
\dot A(t) &= \lim_{ \step \to 0} \frac{A(t+\step)-A(t)}{\step} \notag \\
&= \E[\langle \nabla \phi(X_t), v(X_t) \rangle] + \E[\Delta \phi(X_t)] \notag \\
&= \int_{\R^n} \langle \nabla \phi(x), \rho_t(x) v(x) \rangle dx + \int_{\R^n} \rho_t(x) \Delta \phi(x) \, dx \notag \\
&= -\int_{\R^n} \phi(x) \nabla \cdot (\rho_t v)(x) \, dx + \int_{\R^n} \phi(x) \Delta \rho_t(x) \, dx \notag \\
&= \int_{\R^n} \phi(x) \left(-\nabla \cdot (\rho_t v)(x) + \Delta \rho_t(x)\right) \, dx.  \label{Eq:SecondDot}
\end{align}
Comparing~\eqref{Eq:FirstDot} and~\eqref{Eq:SecondDot}, and since $\phi$ is arbitrary, we conclude that
\begin{align*}
\part{\rho_t(x)}{t} = -\nabla \cdot (\rho_t v)(x) + \Delta \rho_t(x)
\end{align*}
as claimed in~\eqref{Eq:FPex}.

When $v = -\nabla f$, the stochastic differential equation~\eqref{Eq:SDEex} becomes the Langevin dynamics~\eqref{Eq:LD} from Section~\ref{Sec:Langevin}, and the Fokker-Planck equation~\eqref{Eq:FPex} becomes~\eqref{Eq:FP}.

In the proof of Lemma~\ref{Lem:OneStep}, we also apply the Fokker-Planck equation~\eqref{Eq:FPex} when $v = -\nabla f(x_0)$ is a constant vector field to derive the evolution equation~\eqref{Eq:FPCond} for one step of ULA.

\subsection{Remaining proofs}

\subsubsection{Proof of Lemma~\ref{Lem:LSILipschitz}}
\label{App:ProofLSILipschitz}

\begin{proof}[Proof of Lemma~\ref{Lem:LSILipschitz}]
Let $g \colon \R^n \to \R$ be a smooth function, and let $\tilde g \colon \R^n \to \R$ be the function $\tilde g(x) = g(T(x))$.
Let $X \sim \nu$, so $T(X) \sim \tilde \nu$.
Note that
\begin{subequations}
\begin{align*}
 \E_{\tilde \nu}[g^2] &=  \E_{X \sim \nu}[g(T(X))^2] = \E_{\nu}[\tilde g^2], \\
\E_{\tilde \nu}[g^2 \log g^2] &= \E_{X \sim \nu}[g(T(X))^2 \log g(T(X))^2] = \E_{\nu}[\tilde g^2 \log \tilde g^2].
 \end{align*}
\end{subequations}
Furthermore, we have $\nabla \tilde g(x) = \nabla T(x) \, \nabla g(T(x))$.
Since $T$ is $L$-Lipschitz, $\|\nabla T(x)\| \le L$.
Then
\begin{align*}
\|\nabla \tilde g(x)\| \le \|\nabla T(x)\| \, \|\nabla g(T(x))\| \le L \|\nabla g(T(x))\|.
\end{align*}
This implies
\begin{align*}
\E_{\tilde \nu}[\|\nabla g\|^2] = \E_{X \sim \nu}[\|\nabla g(T(X))\|^2] \ge \frac{\E_{\nu}[\|\nabla \tilde g\|^2]}{L^2}.
\end{align*}
Therefore,
\begin{align*}
\frac{\E_{\tilde \nu}[\|\nabla g\|^2]}{\E_{\tilde \nu}[g^2 \log g^2] - \E_{\tilde \nu}[g^2] \log \E_{\tilde \nu}[g^2]} 
&\ge \frac{1}{L^2} \, \frac{\E_{\nu}[\|\nabla \tilde g\|^2]}{\big(\E_{\nu}[\tilde g^2 \log \tilde g^2] - \E_{\nu}[\tilde g^2] \log \E_{\nu}[\tilde g^2]\big)}
\ge \frac{\alpha}{2L^2}
\end{align*}
where the last inequality follows from the assumption that $\nu$ satisfies LSI with constant $\alpha$.
This shows that $\tilde \nu$ satisfies LSI with constant $\alpha/L^2$, as desired.
\end{proof}

\subsubsection{Proof of Lemma~\ref{Lem:LSIGaussianConv}}
\label{App:ProofLSIGaussianConv}

\begin{proof}[Proof of Lemma~\ref{Lem:LSIGaussianConv}]
We recall the following convolution property of LSI~\cite{C04}:
If $\nu, \tilde \nu$ satisfy LSI  with constants $\alpha, \tilde \alpha > 0$, respectively, then $\nu \ast \tilde \nu$ satisfies LSI with constant $\left(\frac{1}{\alpha}+\frac{1}{\tilde\alpha}\right)^{-1}$.
Since $\N(0,2tI)$ satisfies LSI with constant $\frac{1}{2t}$, the claim follows.
\end{proof}

\subsubsection{Proof of Lemma~\ref{Lem:PLipschitz}}
\label{App:ProofPLipschitz}

\begin{proof}[Proof of Lemma~\ref{Lem:PLipschitz}]
Let $g \colon \R^n \to \R$ be a smooth function, and let $\tilde g \colon \R^n \to \R$ be the function $\tilde g(x) = g(T(x))$.
Let $X \sim \nu$, so $T(X) \sim \tilde \nu$.
Note that
\begin{align*}
\Var_{\tilde \nu}(g) &=  \Var_{X \sim \nu}(g(T(X))) = \Var_{\nu}(\tilde g).
 \end{align*}
Furthermore, we have $\nabla \tilde g(x) = \nabla T(x) \, \nabla g(T(x))$.
Since $T$ is $L$-Lipschitz, $\|\nabla T(x)\| \le L$.
Then
\begin{align*}
\|\nabla \tilde g(x)\| \le \|\nabla T(x)\| \, \|\nabla g(T(x))\| \le L \|\nabla g(T(x))\|.
\end{align*}
This implies
\begin{align*}
\E_{\tilde \nu}[\|\nabla g\|^2] = \E_{X \sim \nu}[\|\nabla g(T(X))\|^2] \ge \frac{\E_{\nu}[\|\nabla \tilde g\|^2]}{L^2}.
\end{align*}
Therefore,
\begin{align*}
\frac{\E_{\tilde \nu}[\|\nabla g\|^2]}{\Var_{\tilde \nu}(g)} 
&\ge \frac{1}{L^2} \, \frac{\E_{\nu}[\|\nabla \tilde g\|^2]}{\Var_{\nu}(\tilde g)}
\ge \frac{\alpha}{L^2}
\end{align*}
where the last inequality follows from the assumption that $\nu$ satisfies Poincar\'e inequality with constant $\alpha$.
This shows that $\tilde \nu$ satisfies Poincar\'e inequality with constant $\alpha/L^2$, as desired.
\end{proof}

\subsubsection{Proof of Lemma~\ref{Lem:PGaussianConv}}
\label{App:ProofPGaussianConv}

\begin{proof}[Proof of Lemma~\ref{Lem:PGaussianConv}]
We recall the following convolution property of Poincar\'e inequality~\cite{C18}:
If $\nu, \tilde \nu$ satisfy Poincar\'e inequality  with constants $\alpha, \tilde \alpha > 0$, respectively, then $\nu \ast \tilde \nu$ satisfies Poincar\'e inequality with constant $\left(\frac{1}{\alpha}+\frac{1}{\tilde\alpha}\right)^{-1}$.
Since $\N(0,2tI)$ satisfies Poincar\'e inequality with constant $\frac{1}{2t}$, the claim follows.
\end{proof}

\end{document}